\newcommand{\Scal}{\mathcal{S}}
\newcommand{\Xcal}{\mathcal{X}} 
\newcommand{\Ncal}{\mathcal{N}}
\tikzset{
    node style/.style={circle, draw, fill=white, inner sep=1pt, minimum size=1.5em},
    main line/.style={thick}
}
\definecolor{myblue}{HTML}{006AA3}
\definecolor{mygreen}{HTML}{5B9A64}
\definecolor{myred}{HTML}{C1121F}
\definecolor{myorange}{HTML}{FAA00F}
\definecolor{mypurple}{HTML}{6D4C94}
\pgfplotsset{compat=1.18}
\newcommand{\mc}{\mathcal}
\newcommand{\Vcal}{\mc{V}}
\newcommand{\Gcal}{\mc{G}}
\newcommand{\Ecal}{\mc{E}}
\newcommand{\norm}[1]{\left\| #1 \right\|}
\newcommand{\abs}[1]{\left| #1 \right|}
\newcommand{\N}{\mathbb{N}}
\newcommand{\R}{\mathbb{R}}
\newcommand{\one}{\mathbf{1}}
\newcommand{\zero}{\mathbf{0}}
\newtheoremstyle{ieeeconf}
  {0pt}   
  {0pt}   
  {\normalfont}  
  {\parindent}       
  {\itshape} 
  {:}         
  { } 
  {\thmname{#1} \thmnumber{#2}\thmnote{ (#3)}} 
\renewenvironment{proof}[1][\proofname]{\par
  \pushQED{\qed}%
  \normalfont \topsep\z@
  \trivlist
  \item[\hskip2em
        \itshape
    #1\@addpunct{:}]\ignorespaces
}{%
  \popQED\endtrivlist\@endpefalse
}
\theoremstyle{ieeeconf}
\newtheorem{theorem}{Theorem}
\newtheorem{definition}{Definition}
\newtheorem{proposition}{Proposition}
\newtheorem{lemma}{Lemma}
\newtheorem{corollary}{Corollary}
\newtheorem{standassumption}{Standing Assumption}
\newtheorem{remark}{Remark}
\newtheorem{example}{Example}
\begin{document}
\bstctlcite{IEEEexample:BSTcontrol}

\title{From Consensus to Robust Clustering: Multi-Agent Systems with Nonlinear Interactions}

\author{Anthony Couthures$^1$
, Gustave Bainier$^{2}$
, Vineeth Satheeskumar Varma$^{1,3}$
, Samson Lasaulce$^{1}$
, Irinel-Constantin Mor\u{a}rescu$^{1,3}$
\thanks{*This work was supported by the CNRS MITI project BLESS and by the project DECIDE funded under the PNRR I8 scheme by the Romanian Ministry of Research.}
\thanks{$^1$Universit\'e de Lorraine, CNRS, CRAN, F-54000 Nancy, France. {\tt\small anthony.couthures@univ-lorraine.fr}}%
\thanks{$^2$ Dept. of Electrical Engineering and Computer Science, University of Liège, 4000 Liège, Belgium.}
\thanks{$^3$Automation Department, Technical University of Cluj-Napoca, Memorandumului 28, 400114 Cluj-Napoca, Romania.}%
}



\maketitle

\begin{abstract}
    This paper establishes a theoretical framework to describe the transition from consensus to stable clustering in multi-agent systems with nonlinear, cooperative interactions. We first establish a sharp threshold for consensus. For a broad class of non-decreasing, Lipschitz-continuous interactions, an explicit inequality linking the interaction's Lipschitz constant to the second-largest eigenvalue of the normalized adjacency matrix of the interaction graph confines all system equilibria to the synchronization manifold. This condition is shown to be a sharp threshold, as its violation permits the emergence of non-synchronized equilibria. We also demonstrate that such clustered states can only arise if the interaction law itself possesses specific structural properties, such as unstable fixed points. For the clustered states that emerge, we introduce a formal framework using Input-to-State Stability (ISS) theory to quantify their robustness. This approach allows us to prove that the internal cohesion of a cluster is robust to perturbations from the rest of the network. The analysis reveals a fundamental principle: cluster coherence is limited not by the magnitude of external influence, but by its heterogeneity across internal nodes. This unified framework, explaining both the sharp breakdown of consensus and the quantifiable robustness of the resulting modular structures, is validated on Zachary’s Karate Club network, used as a classic benchmark for community structure.
\end{abstract}

\begin{IEEEkeywords}
    Multi-agent systems, Consensus, Synchronization, Clustering, Nonlinear dynamics
\end{IEEEkeywords}

\section{Introduction} \label{sec:introduction}

A central theme in the study of multi-agent systems is the emergence of collective behavior from local interactions. One of the most important behaviors is \emph{consensus}, where a group of interconnected agents dynamically achieves a common agreement. The canonical model for this problem is the linear Laplacian dynamics, a subject of extensive research for decades \cite{olfati-saberConsensusCooperationNetworked2007, jadbabaieCoordinationGroupsMobile2003, ren2005,blondelConvergenceMultiagentCoordination2005}. The elegance of this linear model lies in its direct connection to the spectral properties of the underlying network graph. For a connected network of agents with linear interactions, reaching a consensus is guaranteed, and the agents \emph{fully synchronize} to a common constant value.

Ubiquitous in real-world systems, the emergence of stable opinion clusters and persistent disagreement, rather than full synchronization, necessitates models that extend beyond the linear paradigm. Prior research has primarily examined two mechanisms for breaking consensus. The first, prominent in social dynamics, involves a \emph{dynamic topology}, where agents cut communication with those who have disparate views, as seen in bounded confidence models \cite{ deffuantMixingBeliefsInteracting2000,krause2002,MG10}. The second mechanism retains a fixed topology but introduces \emph{antagonistic interactions}, where repulsion between agents on a signed graph can lead to polarized states \cite{altafiniConsensusProblemsNetworks2013, hendrickxLiftingApproachModels2014}.

This paper advances a third, rapidly developing line of research: the emergence of disagreement on a \emph{fixed, cooperative network}, driven purely by the \emph{inherent nonlinearity} of agent interactions: a perception-based influence. Unlike antagonistic or adaptive topologies models, this framework explains persistent disagreement as a property of the interaction law itself. 
Prior work has explored two complementary manifestations of such nonlinearities: those arising from \emph{quantized communication}, where discrete actions replace continuous opinions \cite{Martins2008,chowdhuryContinuousOpinionsDiscrete2016, ceragioliConsensusDisagreementRole2018}, and those based on \emph{smooth sigmoidal activation} inspired by collective decision-making in biological and social systems \cite{grayMultiagentDecisionMakingDynamics2018, bizyaevaNonlinearOpinionDynamics2023,baumannModelingEchoChambers2020}. Seminal theoretical work has rigorously linked the emergence of multiple equilibria to the network's algebraic connectivity for specific sigmoidal models \cite{fontanMultiequilibriaAnalysisClass}. However, a unified theory capturing the transition from consensus to clustering across a broad class of nonlinearities, encompassing both smooth and quantized interactions, remains lacking.  Furthermore, a formal framework for quantifying the \emph{robustness} of the resulting clusters with respect to the perturbation from the rest of the network is absent from the literature.

To this end, we investigate the fundamental trade-off between network connectivity and interaction nonlinearity to: establish explicit conditions for full synchronization, characterize the equilibria when these conditions are violated, and formally quantify the robustness of these emergent structures against network perturbations. Our main contributions are:

\begin{itemize}
    \item \emph{Sharp Threshold for Full Synchronization:} We establish a sufficient condition that guarantees full synchronization. This result provides a direct link between the agents' maximal interaction nonlinearity (Lipschitz constant) and the network's algebraic connectivity. We then prove constructively that this threshold is sharp: its violation is a sufficient condition for the \emph{existence of an admissible signal function} that generates disagreement.
    \item \emph{Characterization of Emergent Clustered Equilibria:} We provide necessary conditions on the signal function itself for the emergence of structured disagreement. This reveals that, independent of the network topology, clustering is only possible if the interaction law possesses specific features, such as \emph{unstable fixed points}.
    
    \item \emph{Formal Framework for Cluster Robustness:} We develop a general methodology proving that these clustered equilibria are ISS with respect to network perturbations. This allows us to prove a fundamental result: cluster cohesion is sensitive to \emph{heterogeneity} and not to the \emph{magnitude} of external influence.
\end{itemize}

Centering the approach on the interaction law itself and its fixed points provides a unifying framework for previously distinct lines of research. It treats systems with smooth sigmoidal interactions as a direct special case, while capturing those with quantized communication as a limiting case of steep, continuous approximations. Furthermore, our application of the ISS framework provides, for the first time, a formal methodology for quantifying the \emph{robustness} of the clustered equilibria that emerge. This allows an analysis that goes beyond solely the existence of equilibria to characterize their stability from the rest of the network.

The remainder of this paper is organized as follows. Section~\ref{sec:framework} introduces the system model and our core definitions. Section~\ref{sec:qualitative} establishes the fundamental qualitative properties of the dynamics. Section~\ref{sec:equilibria} provides a detailed characterization of both synchronized and clustered equilibria. Section~\ref{sec:Threshold_global_sync} derives the sharp threshold for full synchronization. Section~\ref{sec:ISS} develops the Input-to-State Stability framework for robust clusters, including a validating numerical example. Section~\ref{sec:conclusion} concludes the paper and discusses future directions.

\textbf{Notation} In the following, we denote by $\R$ the set of real numbers. For a vector $\bm{x} \in \R^N$, we denote by $x_i$ the $i$-th component of $\bm{x}$. For a matrix $\bm{A} \in \R^{N\times N}$, we denote by $a_{ij}$ the element of $\bm{A}$ at the $i$-th row and $j$-th column. We denote by $\zero$ and $\one$ the vector of $\R^N$ with all components equal to $0$ and $1$, respectively. For a column vector $\bm{x} \in \R^N$, we denote by $\norm{\bm{x}} = (\bm{x}^\top \bm{x})^{\frac{1}{2}}$ the Euclidean norm of $\bm{x}$ and, for a positive definite matrix $\bm{D}$, by $\norm{\bm{x}}_{\bm{D}} = (\bm{x}^\top \bm{D} \bm{x})^{\frac{1}{2}}$ the norm induced by $\bm{D}$. 
Moreover, we denote by $\mathrm{diag}(\bm{x}) \in \R^{N\times N}$ the diagonal matrix with diagonal elements given by the vector $\bm{x} \in \R^{N}$. For two vectors $\bm{x},\bm{y} \in \R^N$, $\bm{x} \leq \bm{y}$ means $x_i \leq y_j$ for all $i \in \{1,\dots,N\}$, if, in addition $\bm{x}\neq \bm{y}$, we note $\bm{x} < \bm{y}$. For a function $f: \mathcal{X} \to \mathcal{X}$, the set $\mathrm{Fix}(f) = \left\{ x \in \mathcal{X} \mid x =  f(x) \right\}$ contains the fixed points of $f$.

\section{A Framework for Consensus Dynamics with Nonlinear Interactions} \label{sec:framework}

\subsection{Graph Structure and Network Interaction}

Consider the set $\Vcal = \{1, \dots, N\}$ of $N$ agents or vertices interacting with each other. The interactions are described by a graph $\Gcal = \left(\Vcal, \Ecal\right)$, where the edge set $\Ecal\subset\Vcal\times\Vcal$ indicates whether two agents are interacting. The fundamental properties and assumptions of this interaction graph are as follows.

\begin{standassumption}\label{ass:graph}
    The graph $\Gcal$ is \textbf{time-invariant, undirected, connected, and simple} (i.e., it has no self-loops or multiple edges).
\end{standassumption}

The topology of the network is captured by its symmetric \emph{adjacency matrix}, $\bm{A} \in \R^{N\times N}$, where $a_{ij} = 1$ if $(i,j) \in \Ecal$ and $a_{ij} = 0$ otherwise. From this, we define two other key matrices: the \emph{degree matrix}, $\bm{D} := \mathrm{diag}(d_1, \dots, d_N) \in \R^{N\times N} $, where $d_i = \sum_{j=1}^{N} a_{ij}$ is the degree of vertex $i$, and the \emph{Laplacian matrix}, $\bm{L} := \bm{D} - \bm{A} \in \R^{N\times N}$. The set of direct neighbors of agent $i$ is its \emph{neighborhood}: $\mathcal{N}_i = \{ j \in \Vcal \mid (i,j) \in \Ecal \}$, with cardinality $d_i$.

An induced subgraph of $\Gcal$ is a graph $\Gcal' = \left(\Vcal', \Ecal'\right)$ such that $\Vcal' \subset \Vcal$ and $\Ecal' = \Ecal \cap \left(\Vcal'\times \Vcal'\right)$ is the set of \emph{internal edges}. Its topology is described by the induced adjacency matrix $\bm{A}_{\mathrm{in}} \in \mathbb{R}^{N' \times N'}$, where $N' = |\Vcal'|$.

A crucial consequence of Assumption~\ref{ass:graph} is that the adjacency matrix $\bm{A}$ is \emph{irreducible} due to the graph being connected. This allows us to apply the Perron-Frobenius theorem, providing foundational results on the spectral properties of such matrices.

\begin{lemma}[Perron-Frobenius {\cite[Theorem~2.12]{bulloLecturesNetworkSystems2018}}] \label{lemma:perron_frobenius}
	Let $\bm{A} \in \R^{N \times N}$ with $N \geq 2$. If $\bm{A}$ is a non-negative matrix, then there exists a real eigenvalue $\lambda \geq \abs{\mu} \geq 0$ for all other eigenvalues $\mu$ of $\bm{A}$ and the right and left eigenvectors $\bm{v}$ and $\bm{w}$ of $\bm{A}$ associated with the eigenvalue $\lambda$ are non-negative. i.e., $\bm{v} \geq \bm{0}$ and $\bm{w} \geq \bm{0}$. 
	
	If additionally $\bm{A}$ is irreducible, then the eigenvalue $\lambda$ is strictly positive and simple, and the right and left eigenvectors $\bm{v}$ and $\bm{w}$ of $\bm{A}$ associated with the eigenvalue $\lambda$ are unique and positive, up to rescaling. i.e., $\bm{v} > \bm{0}$ and $\bm{w} > \bm{0}$.
\end{lemma}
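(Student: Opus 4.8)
This statement is the classical Perron--Frobenius theorem, so the plan is to give a self-contained argument in four stages: existence of a non-negative dominant eigenpair for an arbitrary non-negative $\bm{A}$, the dominance bound $\lambda \ge \abs{\mu}$, strict positivity of $\lambda$ and the eigenvectors under irreducibility, and finally simplicity and uniqueness. For existence I would work on the standard simplex $\Delta = \{\bm{x}\ge\zero : \one^\top\bm{x}=1\}$, which is compact and convex. Since $(\bm{I}+\bm{A})\bm{x}\ge\bm{x}\ge\zero$ for $\bm{x}\in\Delta$, the denominator $\one^\top(\bm{I}+\bm{A})\bm{x}\ge 1$ never vanishes, so the map $\bm{x}\mapsto (\bm{I}+\bm{A})\bm{x}/\bigl(\one^\top(\bm{I}+\bm{A})\bm{x}\bigr)$ is continuous and sends $\Delta$ into itself; Brouwer's fixed-point theorem then yields $\bm{v}\in\Delta$ with $(\bm{I}+\bm{A})\bm{v}=c\,\bm{v}$, hence $\bm{A}\bm{v}=\lambda\bm{v}$ with $\bm{v}\ge\zero$, $\bm{v}\ne\zero$ and $\lambda=c-1\ge 0$. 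The non-negative left eigenvector $\bm{w}$ comes from running the same construction on $\bm{A}^\top$.

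Next I would establish dominance. For any eigenpair $(\mu,\bm{z})$ with $\bm{z}\ne\zero$, taking componentwise moduli and using $\bm{A}\ge\zero$ gives $\abs{\mu}\abs{z_i}=\abs{(\bm{A}\bm{z})_i}\le(\bm{A}\abs{\bm{z}})_i$, i.e.\ $\bm{A}\abs{\bm{z}}\ge\abs{\mu}\abs{\bm{z}}$ entrywise. Pairing with the left eigenvector, $\lambda\,\bm{w}^\top\abs{\bm{z}}=\bm{w}^\top\bm{A}\abs{\bm{z}}\ge\abs{\mu}\,\bm{w}^\top\abs{\bm{z}}$ forces $\abs{\mu}\le\lambda$ as soon as $\bm{w}^\top\abs{\bm{z}}>0$. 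This is immediate once $\bm{w}>\zero$ (the irreducible case below); for a general non-negative $\bm{A}$ the same conclusion follows by approximating with $\bm{A}+\varepsilon\one\one^\top>\zero$ and using continuity of the spectral radius, which also confirms that $\lambda$ is indeed the spectral radius. Equivalently, one may define $\lambda$ from the outset through the Collatz--Wielandt formula $\lambda=\max_{\bm{x}\ge\zero,\,\bm{x}\ne\zero}\min_{i:x_i>0}(\bm{A}\bm{x})_i/x_i$, which identifies it with the spectral radius directly.

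Under the additional irreducibility hypothesis I would use the equivalence between irreducibility of $\bm{A}$ and strong connectivity of $\Gcal$, which gives $(\bm{I}+\bm{A})^{N-1}>\zero$ entrywise. Applying this to the eigenvector, $(\bm{I}+\bm{A})^{N-1}\bm{v}=(1+\lambda)^{N-1}\bm{v}$ is strictly positive, so $\bm{v}>\zero$; then every entry of $\lambda\bm{v}=\bm{A}\bm{v}$ is positive (an irreducible matrix has no zero row), which forces $\lambda>0$. The identical argument gives $\bm{w}>\zero$.

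Stage four, simplicity and uniqueness, is where I expect the main obstacle. Geometric simplicity is clean: given any real eigenvector $\bm{u}$ for $\lambda$ (real suffices since $\bm{A}$ and $\lambda$ are real), the vector $\bm{u}-t^\star\bm{v}$ with $t^\star=\min_i u_i/v_i$ is a non-negative eigenvector for $\lambda$ vanishing in some coordinate, which contradicts the forced strict positivity unless it is $\zero$; hence the eigenspace is one-dimensional and $\bm{v}$, $\bm{w}$ are unique up to scaling. Algebraic simplicity requires ruling out a nontrivial Jordan block: if $(\bm{A}-\lambda\bm{I})\bm{u}=\bm{v}$ for some generalized eigenvector, left-multiplying by $\bm{w}^\top$ and using $\bm{w}^\top(\bm{A}-\lambda\bm{I})=\zero^\top$ gives $0=\bm{w}^\top\bm{v}$, contradicting $\bm{w}^\top\bm{v}>0$. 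The genuinely delicate points, and the ones I would spend the most care on, are the rigour of the existence step and, in the reducible case, the guarantee that the spectral radius is attained as an eigenvalue at all; these are precisely the places where the strict positivity that trivializes the irreducible refinements is not yet available.
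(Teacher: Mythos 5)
The paper does not actually prove this lemma: it is imported as a classical black-box result, with the proof delegated entirely to the citation of Bullo's lecture notes. Your self-contained argument therefore has no counterpart in the paper to compare against, but it is the standard textbook proof of Perron--Frobenius (Brouwer on the simplex for existence, componentwise modulus estimates for dominance, $(\bm{I}+\bm{A})^{N-1}>\zero$ for strict positivity under irreducibility, and the pairing $\bm{w}^\top\bm{v}>0$ to exclude a nontrivial Jordan block) and is correct in outline. Two points would need tightening in a full write-up. First, running the Brouwer construction separately on $\bm{A}$ and $\bm{A}^\top$ yields eigenpairs $(\lambda,\bm{v})$ and $(\lambda',\bm{w})$ whose eigenvalues are not a priori equal; in the irreducible case you should record that $\lambda'\,\bm{w}^\top\bm{v}=\bm{w}^\top\bm{A}\bm{v}=\lambda\,\bm{w}^\top\bm{v}$ together with $\bm{w}^\top\bm{v}>0$ forces $\lambda=\lambda'$ before the inequality $\lambda\,\bm{w}^\top\abs{\bm{z}}\ge\abs{\mu}\,\bm{w}^\top\abs{\bm{z}}$ can be invoked. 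Second, in the reducible case the perturbation $\bm{A}+\varepsilon\one\one^\top$ is what actually identifies $\lambda$ with the spectral radius and delivers the non-negative left and right eigenvectors in the limit, so the Brouwer step serves only as the seed for the positive (hence irreducible) case; the logical order of the two mechanisms is slightly blurred as written. Neither issue is a genuine gap --- both are one-line repairs --- and since the paper itself only cites the result, supplying a proof at all goes beyond what the paper attempts.
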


\subsection{Agent and Collective Dynamics}

To each agent $i$ we assign a normalized scalar state $x_i \in \left[-1,1\right]$ (e.g., an opinion). Agents are assumed to interact based on their perception of the neighbors' states, a process modeled by a \textbf{common signal function} $s: \left[-1,1\right] \to \left[-1,1\right]$. This function acts as the lens through which agents see each other's states before the averaging process.

The driving mechanism for the agent's evolution is the discrepancy between the current state and the average of the perceived state of the neighbors. This consensus-seeking behavior is captured by the following feedback law:
\begin{equation}\label{eq:dynamic_single_agent}
	\dot{x}_i = \frac{1}{d_i} \sum_{j = 1}^{N} a_{ij} s(x_j) - x_i.
\end{equation}

The right-hand side of this equation represents a \emph{disagreement vector} for agent $i$. Its state velocity, $\dot{x}_i$, is directly proportional to the difference between what it ``sees" from its neighborhood and its own internal state. A positive disagreement pulls its state up, while a negative disagreement pushes it down, thus naturally driving the agent towards the local average.

The collective dynamics of all $N$ agents can be expressed in the compact collective vector form:
\begin{equation}\label{eq:dynamic}
	\dot{\bm{x}} = \bm{D}^{-1} \! \bm{A} \bm{s}(\bm{x}) - \bm{x} := \bm{f}(\bm{x}),
\end{equation}
where $\bm{x} = (x_1,\dots,x_N)^\top$ and $\bm{s}(\bm{x}) = (s(x_1),\dots,s(x_N))^\top$ belong to the hypercube $\mathcal{X} \coloneq \left[-1,1\right]^N$, which defines the system state space.

The choice of the row-stochastic matrix $\bm{D}^{-1}\bm{A}$, often called the \emph{random-walk normalized} adjacency matrix, is a crucial modeling decision. Unlike dynamics governed by the unweighted graph Laplacian ($\bm{L}=\bm{D}-\bm{A}$), this formulation ensures that each agent updates its state based on the \emph{average} of the signals received from its neighbors, rather than their sum. Consequently, the influence of a neighborhood is independent of its size, preventing high-degree nodes from having an overly fast response. This modeling paradigm is representative of many real-world processes, such as social opinion formation or distributed estimation, where an individual is influenced by the prevailing sentiment of its local group, not the sheer volume of its connections.

\subsection{Signal Function and Communication Model}

The nonlinearity introduced by $s(x)$ allows us to model various interaction scenarios. For instance, by choosing $s(x)=x$, one recovers the normalized linear consensus model, while an affine function $s(x)=ax+b$ can model biased consensus algorithms. Sigmoidal functions, such as $s(x) = \tanh(x)$, inspired by biological systems, can also represent saturation or decoding effects in communication \cite{grayMultiagentDecisionMakingDynamics2018,bizyaevaNonlinearOpinionDynamics2023}.

In the following, we will restrict our study to the signal function, verifying the following assumption:

\begin{standassumption}\label{ass:signal}
    The function $s: [-1,1] \to [-1,1]$ is \textbf{non-decreasing} and \textbf{Lipschitz-continuous} with constant $K$\!. 
\end{standassumption}

The assumption that the signal function $s$ is non-decreasing is central to our analysis, providing justification from both a modeling and a mathematical standpoint. From a modeling perspective, it captures a natural causality inherent to many physical, biological, and social systems: a higher internal state (e.g., opinion, temperature, activation) should induce a correspondingly higher, or at least non-lower, signal to its neighbors. 

Mathematically, this non-decreasing property is precisely what ensures that the influence between any two connected agents is fundamentally \emph{cooperative} \cite{hirschChapter4Monotone2006}. As we will formally establish in Section~\ref{sec:qualitative}, this is the key that unlocks the rigorous analytical framework of monotone dynamical systems, which allows us to prevent complex oscillatory behaviors and guarantee convergence to a set of equilibria.

The fully synchronized equilibria of the system occur when all agents reach a common state $c$ such that $c = s(c)$; in other words, they are precisely the fixed points of the signal function, $c \in \mathrm{Fix}(s)$. As we will see later, the stability of such an equilibrium is determined by its local behavior on the fully synchronized manifold, where the dynamics reduces to $\dot{x} = s(x) - x$. The term $s(x)-x$ acts as a restoring force when it pulls the state towards the fixed point, and a repelling force when it pushes the state away. To formalize this crucial behavior, we classify the fixed points based on their local stability properties.
\begin{definition}\label{def:fixed_point_stability}
    A signal function $s$ is said to be:
    \begin{itemize}
		\item an \emph{underestimation} if $x(s(x) - x) \leq 0$ for all $x \in \left[-1,1\right]$.
        \item an \emph{overestimation} if $x(s(x) - x) \geq 0$ for all $x \in \left[-1,1\right]$.
    \end{itemize}

	Furthermore, a fixed point $c \in \mathrm{Fix}(s)$ is classified based on its one-sided stability:
    \begin{itemize}
        \item It is \emph{left-stable} (resp. \emph{right-stable}) if there exists a neighborhood $I \subseteq [-1,1]$ of $c$ such that $(x-c)(s(x)-x)\leq 0$ for all $x \in I$ with $x<c$ (resp. $x>c$).
        \item It is \emph{left-unstable} (resp. \emph{right-unstable}) if the strict inequality $(x-c)(s(x)-x)> 0$ holds under the same conditions.
    \end{itemize}
    Based on these properties, we define a fixed point as \emph{stable} if it is both left- and right-stable, \emph{unstable} if it is both left- and right-unstable, and \emph{semi-stable} otherwise.
\end{definition}

The local stability defined in Definition~\ref{def:fixed_point_stability} captures a fundamental dichotomy. A \emph{stable fixed point} acts as a local attractor: for a state $x$ slightly deviating from $c$, the dynamics generates a restoring force that pushes $x$ back towards $c$. In contrast, an \emph{unstable} or \emph{semi-stable fixed point} is locally repelling from at least one side. As we will prove, this simple one-dimensional classification is the key determinant of stability for the full $N$-dimensional synchronized states.

Based on this property, we partition the set of all fixed points, $\mathrm{Fix}(s)$, into the set of stable fixed points, which we denote by $\mathrm{Fix}^{\bullet}(s)$, and the set of unstable or semi-stable ones, $\mathrm{Fix}^{\circ}(s)$. For a more granular analysis, we will also need to identify the specific sources of instability. We define $\mathrm{Fix_L^{\circ}}(s)$ as the set of all fixed points that are \emph{unstable from the left}, and $\mathrm{Fix_R^{\circ}}(s)$ as the set of those that are \emph{unstable from the right}. Note that a fully unstable fixed point belongs to both of these subsets, as shown in the following fundamental lemma.


\begin{lemma}\label{lemma:existence_of_stable_fixed_point}
	Any continuous signal function $s$ has at least one stable fixed point. i.e., $\mathrm{Fix}^{\bullet}(s) \neq \emptyset$.
\end{lemma}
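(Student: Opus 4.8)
The plan is to work throughout with the scalar function $g(x) := s(x) - x$, which is continuous on $[-1,1]$. Since $s$ maps into $[-1,1]$, the boundary signs $g(-1) = s(-1)+1 \ge 0$ and $g(1) = s(1)-1 \le 0$ are automatic, so the intermediate value theorem already gives $\mathrm{Fix}(s) = g^{-1}(0) \neq \emptyset$, a compact set. The first step is to restate Definition~\ref{def:fixed_point_stability} in these terms: unwinding the sign of $(x-c)$, a fixed point $c$ is \emph{stable}, i.e. $c \in \mathrm{Fix}^{\bullet}(s)$, precisely when there is $\delta>0$ with $g \ge 0$ on $(c-\delta,c)$ and $g \le 0$ on $(c,c+\delta)$ — that is, $c$ is a point where the graph of $s$ crosses the diagonal \emph{downward}. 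The goal is thus reduced to exhibiting one such downward crossing.

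For the main construction I would set $c := \sup\{x \in [-1,1] : g \ge 0 \text{ on } [-1,x]\}$, the right end of the maximal initial interval on which $s$ stays on or above the diagonal. Continuity gives $g \ge 0$ on $[-1,c]$ and $g(c)=0$, so $c \in \mathrm{Fix}(s)$ and, since $g \ge 0$ immediately to its left, $c$ is \emph{left-stable} for free; if one also has $g \le 0$ on a right-neighborhood of $c$, then $c$ is the desired stable fixed point. Two complementary viewpoints serve as a conceptual check: monotone iteration (as $s$ is non-decreasing, $s^n(-1)$ increases to the least fixed point, below which $g>0$), and the potential $V(x):=\int_{-1}^{x} g(t)\,dt$, whose local maxima are exactly the stable fixed points of the induced gradient flow $\dot{x}=g(x)$, with a global maximizer existing by compactness.

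The delicate step — and the one I expect to be the main obstacle — is establishing \emph{right-stability} of $c$. A priori $g$ could change sign arbitrarily close to the right of $c$, which is genuinely possible for an unstructured continuous function; this is exactly where Standing Assumption~\ref{ass:signal} (non-decreasing) must be used. The plan is to analyze the maximal sign-intervals of $g$ near $c$ and exploit monotonicity: the left endpoint of any interval on which $s<x$ is right-stable, and the right endpoint of any interval on which $s>x$ is left-stable, so that an interval with $s>x$ lying immediately to the left of an interval with $s<x$ meets it at a genuinely stable fixed point. Because $g$ is nonnegative near $-1$ and nonpositive near $1$, such a transition from a ``$s>x$'' region to a ``$s<x$'' region must occur. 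The hardest subcase is the accumulation of infinitely many sign changes (non-isolated fixed points); there I would argue that if only a single sign accumulates on one side of a limiting fixed point then that point is one-sided-stable on that side, and combine this with the clean left-hand behaviour of $c$ to close the argument. This accumulation analysis is where the technical weight of the proof lies.
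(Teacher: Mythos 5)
Your setup is fine as far as it goes: the reduction to $g(x)=s(x)-x$, the boundary signs, and the choice $c=\sup\{x\in[-1,1]: g\ge 0 \text{ on } [-1,x]\}$ do produce a fixed point that is left-stable. But the proposal stops exactly where the lemma needs proving: right-stability is never established, only announced as ``the delicate step,'' and the escape routes you sketch do not close it. First, monotonicity of $s$ cannot be the missing ingredient: for any continuous $g$ with small enough Lipschitz constant, $s=\mathrm{id}+g$ is increasing, so Standing Assumption~\ref{ass:signal} imposes essentially no restriction on the sign pattern of $g$ and cannot prevent $\{g>0\}$ from accumulating at $c$ from the right. Second, your key structural claim --- that some maximal component of $\{g>0\}$ must lie \emph{immediately} to the left of a maximal component of $\{g<0\}$, sharing an endpoint --- is precisely what can fail when $\mathrm{Fix}(s)$ is infinite but contains no interval: if the components of $\{g\neq 0\}$ are bumps of both signs supported on the complementary intervals of a Cantor-type set, no two components need be adjacent, and both signs can accumulate on \emph{both} sides of every fixed point, in which case no fixed point satisfies the definition of stability. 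Your closing remark about ``only a single sign accumulating on one side'' does not address this two-sided accumulation, which is the actual hard configuration. (The aside that local maxima of $V(x)=\int_{-1}^{x}g$ are ``exactly'' the stable fixed points is also false: $V$ can have a strict maximum at a zero of $g$ near which $g$ has no one-sided sign.)

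For comparison, the paper does not attempt your construction at all. It splits into two cases: if $\mathrm{Fix}(s)$ contains an interval, any interior point of that interval is stable because both one-sided inequalities hold with equality; otherwise it treats $\mathrm{Fix}(s)$ as a finite ordered set $c_1<\dots<c_M$ and argues by contradiction that if no $c_i$ were stable then $g>0$ on every gap $(c_i,c_{i+1})$ (each $c_i$ is left-stable by induction, so the violation must occur on its right), which at $c_M$ forces $g>0$ on $(c_M,1]$ and contradicts $g(1)\le 0$. Note that this argument never uses the non-decreasing property of $s$. Your instinct that the infinite, non-isolated, non-interval case carries the real technical weight is correct --- the paper itself dismisses it in one line (``compact and isolated, hence finite,'' which does not cover Cantor-type fixed-point sets) --- but the proposal as written neither restricts to the finite/interval dichotomy nor supplies an argument for the accumulation case, so it does not constitute a proof.
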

\begin{proof}
    See Appendix~\ref{app:proof_lemma2}.
\end{proof}

\section{Qualitative Behavior: Invariance and Convergence}\label{sec:qualitative}

Despite its high dimensionality and nonlinear nature, the system \eqref{eq:dynamic} possesses remarkable properties that make its asymptotic behavior predictable. This section is dedicated to establishing two useful fundamental properties that can tackle this complexity. First, by proving that the system is \emph{cooperative}, we rule out complex oscillatory behaviors and guarantee that almost all trajectories converge towards the set of equilibria. Second, by identifying key \emph{invariant sets}, we show that these trajectories are confined to specific, well-defined regions of the state space. Together, these properties allow us to shift our focus from asking \emph{if} the system settles down, to analyzing precisely \emph{where} it settles.

As a direct consequence of the dynamics, the hypercube state space $\mathcal{X} = [-1,1]^N$ is forward invariant \cite{couthuresGlobalSynchronizationMultiagent2025}, ensuring solutions remain bounded. A primary invariant set of interest is the \emph{Fully Synchronized Manifold (FSM)} denoted $\Scal$, which represents the subspace of complete agent agreement:
\begin{equation*}
    \mathcal{S} := \left\{ \bm{x} \in \mathcal{X} \mid x_i = x_j,\,  \forall i, j \in \Vcal \right\} = \mathrm{Span}(\one) \cap \mathcal{X}.
\end{equation*}
For this class of systems, the FSM is also forward invariant \cite{couthuresGlobalSynchronizationMultiagent2025}, implying that a trajectory starting on the manifold remains on it for all future times.

\subsection{Monotonicity and Guaranteed Convergence to Equilibria}

The cooperative nature of the system arises from the non-decreasing signal function $s$, which ensures a positive influence between connected agents ($\bm{A} \geq 0$). In such systems, trajectories exhibit strong ordering properties; for instance, a trajectory starting coordinate-wise below another will remain so forever. This structure is key to preventing oscillations and chaos. For a comprehensive survey, see \cite{hirschDynamicalSystemsApproach1984,hirschChapter4Monotone2006}.

\begin{definition}
    A dynamical system $\dot{\bm{x}} = \bm{f}(\bm{x})$ in $\mathcal{X}$ is \emph{cooperative} if for any initial conditions $\bm{x}(0) \leq \bm{y}(0)$ in $\mathcal{X}$, the resulting trajectories satisfy $\bm{x}(t) \leq \bm{y}(t)$ for all $t \geq 0$.
\end{definition}

\begin{proposition}\label{prop:monotone_flow}
    The dynamical system \eqref{eq:dynamic} is cooperative on $\mathcal{X}$. Consequently, it does not admit attracting cycles, and almost all trajectories converge to the set of equilibria. 
\end{proposition}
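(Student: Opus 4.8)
The plan is to establish cooperativity through the Kamke--Müller quasimonotonicity condition, and then to invoke the structural theory of monotone dynamical systems for the two remaining claims. The key observation is that, although $s$ is only assumed Lipschitz (so that $\bm{f}$ need not be $C^1$), the order-preserving property of the flow can be verified directly from the sign structure of $\bm{f}$ together with the comparison principle for ODEs, which applies because $\bm{f}$ is Lipschitz and hence generates unique solutions.

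First I would verify the quasimonotone condition: for any $\bm{x} \leq \bm{y}$ in $\mathcal{X}$ and any index $i$ with $x_i = y_i$, one has $f_i(\bm{x}) \leq f_i(\bm{y})$. Indeed, from \eqref{eq:dynamic_single_agent},
\begin{equation*}
    f_i(\bm{y}) - f_i(\bm{x}) = \frac{1}{d_i}\sum_{j=1}^{N} a_{ij}\bigl(s(y_j) - s(x_j)\bigr) - (y_i - x_i).
\end{equation*}
The last term vanishes since $x_i = y_i$, while every summand is non-negative because $a_{ij} \geq 0$ and $s$ is non-decreasing with $y_j \geq x_j$. This is precisely the Kamke--Müller condition, and the comparison theorem then yields $\bm{x}(0) \leq \bm{y}(0) \Rightarrow \bm{x}(t) \leq \bm{y}(t)$ for all $t \geq 0$, i.e. the system is cooperative.

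With cooperativity established, the two remaining assertions follow from the general theory of monotone flows. The forward invariance and compactness of $\mathcal{X}$ ensure that every trajectory is bounded, so each possesses a nonempty compact $\omega$-limit set. I would then invoke two classical results in the cited monotone-systems literature: a cooperative flow admits no attracting nonconstant periodic orbit, and --- using that the connectedness of $\Gcal$ makes $\bm{A}$, and thus the interaction structure, irreducible so that the flow becomes strongly monotone --- the generic convergence theorem guarantees that the set of initial conditions whose trajectories converge to the equilibrium set contains an open, dense subset of full Lebesgue measure.

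The main obstacle I anticipate is the non-smoothness of $\bm{f}$ and the attendant potential loss of strong monotonicity: where $s' = 0$ (for instance on the saturating segments of a sigmoid), the coupling between neighbors degenerates and the irreducibility argument underpinning strong monotonicity can fail locally. I would address this either by appealing to the Lipschitz-continuous formulation of the monotone-systems comparison theory, or by arguing that strong monotonicity holds on the region where the relevant states avoid the flat segments of $s$, which suffices to run Hirsch's generic convergence argument on $\mathcal{X}$. Establishing that this degeneracy does not produce a positive-measure set of non-converging initial conditions is the delicate point.
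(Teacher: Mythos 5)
Your verification of the Kamke--M\"uller condition is exactly the paper's argument: the same computation of $f_i(\bm{y})-f_i(\bm{x})$, the same use of $a_{ij}\geq 0$, $d_i>0$ and the monotonicity of $s$, and the same appeal to uniqueness of solutions (via Lipschitz continuity of $\bm{f}$) to pass from quasimonotonicity to order preservation. That half is correct and essentially identical to the paper.

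The gap is in how you justify ``almost all trajectories converge to the set of equilibria.'' You route this through strong monotonicity, which for a cooperative ODE requires irreducibility of the Jacobian $\bm{D}^{-1}\bm{A}\,\mathrm{diag}(\bm{s}'(\bm{x}))-\bm{I}$, and hence $s'>0$ everywhere. Under Assumption~\ref{ass:signal} alone this genuinely fails --- e.g.\ for the saturated-linear signal $s(x)=\min(1,\max(-1,Kx))$ used throughout the paper, $s'=0$ on sets of positive measure and the flow is not strongly monotone there. You correctly flag this as the delicate point, but neither of your proposed repairs closes it: restricting to the region where states avoid the flat segments does not control whether the complement carries positive measure, and ``the Lipschitz-continuous formulation of the comparison theory'' gives you monotonicity, not strong monotonicity. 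The paper sidesteps the issue entirely by citing Hirsch's result for \emph{cooperative} systems (no irreducibility needed): cooperative flows admit no attracting closed orbits and almost every bounded trajectory is quasiconvergent, i.e.\ converges to the \emph{set} of equilibria --- which is precisely the (weaker) statement claimed in Proposition~\ref{prop:monotone_flow}. The strong-monotonicity argument you sketch is exactly what the paper reserves for Lemma~\ref{lemma:SOP_flow}, where the additional hypotheses that $s$ be strictly increasing and $C^1$ are imposed precisely to make the Jacobian irreducible and upgrade the conclusion to convergence of \emph{every} trajectory to an equilibrium. So the fix is not to patch the strong-monotonicity route but to invoke the cooperative-systems version of the convergence theorem directly.
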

\begin{proof}
    To show that \eqref{eq:dynamic} is cooperative on $\mathcal{X}$, since $\mathcal{X}$ is forward invariant, it is sufficient to show that $\bm{f}: \mathcal{X} \to \mathcal{X}$ is a quasi-monotone function i.e., for any $\bm{x}, \bm{y} \in \mathcal{X}$, such that $\bm{x} \leq \bm{y}$ one has $f_i(\bm{x}) \leq f_i(\bm{y})$ for all $i \in \Vcal$ such that $x_i = y_i$, where $f_i$ represents the $i$-th component of function $\bm{f}$. This is known as the Kamke--Müller condition \cite[Theorem 3.2]{hirschChapter4Monotone2006}.

    Let $\bm{x}, \bm{y} \in \mathcal{X}$ satisfying the conditions $\bm{x} \leq \bm{y}$ and $x_i = y_i$ for all $i \in \mathcal{I} \subset \Vcal$. For any $i \in \mathcal{I}$, one has that
    \begin{align*}
        f_i(\bm{y}) - f_i(\bm{x}) &= \frac{1}{d_i} \sum_{j=1}^{N} a_{ij} \left( s(y_j) - s(x_j) \right) - \left( y_i - x_i \right)\\
        &= \frac{1}{d_i} \sum_{j=1}^{N} a_{ij} \left( s(y_j) - s(x_j) \right) \geq 0.
    \end{align*}
    Since for all $i,j \in \Vcal$, $d_i >0$, $a_{ij} \geq 0$ and $s$ is non-decreasing. Therefore, $\bm{f}$ is a quasi-monotone function on $\mathcal{X}$ and the system \eqref{eq:dynamic} is cooperative. Finally, from \cite[Corollary~2.4]{hirschDynamicalSystemsApproach1984} we have that cooperative systems do not admit attracting cycles and almost all trajectories converge to an equilibrium.
\end{proof}

Proposition~\ref{prop:monotone_flow} is a key result, ensuring that almost all the trajectories of the system will eventually settle at a steady state. Under the regularity conditions on the signal function, the result can be extended to all trajectories as follows.

\begin{lemma}\label{lemma:SOP_flow}
    Let $s$ be a strictly increasing and continuously differentiable function. Then, any trajectory of \eqref{eq:dynamic} will asymptotically converge to an equilibrium.  
\end{lemma}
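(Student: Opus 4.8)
The plan is to strengthen Proposition~\ref{prop:monotone_flow}, which yields convergence of only \emph{almost all} trajectories, into a statement valid for \emph{every} trajectory. The obstruction in the generic result is the measure-zero set of initial conditions whose orbits might a priori accumulate on a nontrivial recurrent set; the extra regularity assumed here ($s$ strictly increasing and $C^1$) is exactly what is needed to rule this out by exhibiting a strict Lyapunov function, after which LaSalle's invariance principle forces every $\omega$-limit set into the equilibrium set. The two hypotheses play distinct roles: differentiability lets me write $\dot V$ in closed form, while strict monotonicity guarantees $s'\geq 0$ with no interval on which $s'$ vanishes.

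Concretely, I would introduce the candidate
\begin{equation*}
  V(\bm{x}) = -\tfrac12\, \bm{s}(\bm{x})^\top \bm{A}\, \bm{s}(\bm{x}) + \sum_{i=1}^N d_i \int_0^{x_i} \sigma\, s'(\sigma)\, d\sigma,
\end{equation*}
which is $C^1$ and bounded on the compact set $\mathcal{X}$. Using the symmetry $a_{ij}=a_{ji}$ one computes $\partial V/\partial x_k = -s'(x_k)\bigl(\sum_j a_{kj} s(x_j) - d_k x_k\bigr) = -d_k\, s'(x_k)\, f_k(\bm{x})$, where $f_k$ is the $k$-th component of the right-hand side of \eqref{eq:dynamic}. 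Hence along trajectories
\begin{equation*}
  \dot V = \sum_{k=1}^N \frac{\partial V}{\partial x_k}\, \dot x_k = -\sum_{k=1}^N d_k\, s'(x_k)\, f_k(\bm{x})^2 \leq 0,
\end{equation*}
since $d_k>0$ and $s'\geq 0$. Thus $V$ is a Lyapunov function for \eqref{eq:dynamic} on $\mathcal{X}$.

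The next step is LaSalle's invariance principle on the compact, forward-invariant set $\mathcal{X}$: every trajectory converges to the largest invariant set $M$ contained in $\{\bm{x}\in\mathcal{X} : \dot V = 0\}$, where $\dot V = 0$ means $s'(x_k)\, f_k(\bm{x})^2 = 0$ for each $k$. I would then argue that $M$ equals the equilibrium set $E=\{\bm{x} : \bm{f}(\bm{x})=\zero\}$: on an invariant orbit lying in $\{\dot V=0\}$, any index $k$ with $f_k\neq 0$ would force $x_k(t)$ to sweep through a whole interval of values on which $s'$ vanishes; but strict monotonicity forbids $s'$ from being zero on an interval, a contradiction. Therefore $f_k \equiv 0$ for all $k$ along such an orbit, so $M=E$ and $\omega(\bm{x})\subseteq E$ for \emph{every} trajectory.

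The main obstacle is the final upgrade from ``$\omega(\bm{x})\subseteq E$'' to convergence to a \emph{single} equilibrium, since under mere $C^1$ regularity a gradient-like flow can in principle spiral onto a continuum of equilibria. I would close this gap using the monotone structure of Proposition~\ref{prop:monotone_flow}: $\omega(\bm{x})$ is connected, and in a strongly order preserving cooperative flow the nonordering principle shows that $\omega(\bm{x})$ cannot contain two ordered points. Combined with $\omega(\bm{x})\subseteq E$ and the constancy of $V$ on $\omega(\bm{x})$, this forces the limit set to be a single equilibrium. Verifying that the cooperative flow is genuinely strongly order preserving along the relevant orbits---so that the nonordering principle applies despite the possible isolated zeros of $s'$---is the delicate point; alternatively, if one assumes the equilibria are isolated, the connectedness of $\omega(\bm{x})$ alone already yields the single-point conclusion.
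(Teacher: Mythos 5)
Your route is genuinely different from the paper's. The paper shows that the Jacobian $\bm{D}^{-1}\bm{A}\,\mathrm{diag}(\bm{s}'(\bm{x}))-\bm{I}$ is irreducible (connectivity of $\Gcal$ plus $s'>0$), so the flow is strongly monotone, and then invokes the convergence theory for smooth cooperative irreducible systems with bounded orbits. You instead exhibit an explicit Cohen--Grossberg/Hopfield-type energy function. Your computation is correct: using $a_{ij}=a_{ji}$ one indeed gets $\partial V/\partial x_k=-d_k s'(x_k)f_k(\bm{x})$ and hence $\dot V=-\sum_k d_k s'(x_k)f_k(\bm{x})^2\le 0$, and your identification of the largest invariant set in $\{\dot V=0\}$ with the equilibrium set $E$ is sound, since strict monotonicity of $s$ forbids $s'$ from vanishing on a nondegenerate interval, which is exactly what your sweeping argument needs. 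This buys something the paper's proof leaves implicit, namely that the system is gradient-like, and it yields $\omega(\bm{x})\subseteq E$ for \emph{every} trajectory by elementary means, without invoking irreducibility at all.

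The genuine gap is the last step, which you flag but do not close. LaSalle gives approach to the \emph{set} $E$, whereas the lemma asserts convergence to a \emph{single} equilibrium, and under the lemma's hypotheses $E$ can contain continua that are \emph{not} totally ordered: if $K\lambda_{N-1}=1$ and $s$ is a $C^1$, strictly increasing signal equal to $Kx$ near the origin, the segment $\{\varepsilon\bm{v}_{N-1}\}$ of Proposition~\ref{prop:non_sync_equilibrium}-\ref{prop:non_sync_equilibrium_1} survives, and since $\bm{v}_{N-1}$ has mixed signs this segment is an antichain. The nonordering principle for strongly order-preserving flows only excludes \emph{ordered} pairs from $\omega(\bm{x})$, so it cannot rule out that $\omega(\bm{x})$ is such a connected unordered continuum of equilibria; constancy of $V$ on $\omega(\bm{x})$ does not help either. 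Your fallback (assume equilibria are isolated) changes the statement, and the paper explicitly permits signals with intervals of fixed points. To finish along your lines you would need to import precisely the machinery the paper cites, i.e.\ the result that for strongly monotone flows quasiconvergent points with the relevant structure are convergent, or give a system-specific argument that any connected component of $E$ met by $\omega(\bm{x})$ is totally ordered.
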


\begin{proof}
    For a smooth cooperative system, it is sufficient to show that the Jacobian matrix of the vector field is irreducible \cite[Chapter~4, Theorem~1.1]{smithMonotoneDynamicalSystems2008}. The Jacobian of \eqref{eq:dynamic} is $\bm{J}(\bm{x}) = \bm{D}^{-1}\bm{A}\mathrm{diag}(\bm{s'}(\bm{x})) - \bm{I}$. Since $s$ is strictly increasing, all entries of the diagonal matrix $\mathrm{diag}(\bm{s'}(\bm{x}))$ are positive. The irreducibility of $\bm{J}(\bm{x})$ is determined by the pattern of its off-diagonal entries, which is identical to that of $\bm{D}^{-1}\bm{A}$. As the graph is connected, $\bm{D}^{-1}\bm{A}$ is irreducible, and thus, $\bm{J}(\bm{x})$ as well. Convergence to an equilibrium then follows from the boundedness of trajectories \cite[Proposition~1.2 and Corollary~1.9]{hirschChapter4Monotone2006}.
\end{proof}

\subsection{Bounding Trajectories with Invariant Hypercubes} \label{subsec:invariant_sets}

Proposition~\ref{prop:monotone_flow} guarantees that trajectories settle at an equilibrium, but it does not specify to which equilibrium when multiple equilibria exist. This section aims to provide further insights into this issue. The trajectories are constrained by invariant hypercubes generated by the fixed points of the signal function $s$. These ``trapping regions" are essential for localizing equilibria and analyzing their basins of attraction. The following proposition formalizes this idea and is illustrated in Figure~\ref{fig:invariance_of_hypercubes_defined_by_fixed_points}.

\begin{proposition}\label{prop:invariance_of_hypercubes_defined_by_fixed_points}
    The following statements hold true for dynamics \eqref{eq:dynamic}:
    \begin{enumerate}
        \item For any $a, b \in [-1, 1]$ with $a < b$, if $s(a) \geq a$ and $s(b) \leq b$, the hypercube $\left[a, b\right]^N$ is forward invariant. \label{prop:invariance_of_hypercubes_defined_by_fixed_points:item:1}
        \item For any $a, b \in [-1, 1]$ with $a < b$, if $s(a) > a$ and $s(b) < b$, the hypercube $\left(a, b\right)^N$ is forward invariant. \label{prop:invariance_of_hypercubes_defined_by_fixed_points:item:2}
    \end{enumerate}
\end{proposition}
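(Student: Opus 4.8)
The plan is to prove invariance through the standard principle that a set is forward invariant precisely when the vector field is nowhere outward-pointing on its boundary; for an axis-aligned box this collapses to a one-line sign check of $\dot{x}_i$ on each face, which the monotonicity of $s$ makes immediate. I would, however, package this through a comparison with the scalar dynamics on the FSM, since that route treats the closed and the open case uniformly and directly reuses the cooperativity already established in Proposition~\ref{prop:monotone_flow}.

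First I would record the face computation, which is the engine of both statements. Fix a trajectory of \eqref{eq:dynamic} with $\bm{x}(0) \in [a,b]^N$ and a coordinate $i$ lying on the lower face, $x_i = a$. Using $\sum_{j} a_{ij} = d_i$ together with $s$ non-decreasing and $x_j \geq a$, one obtains
\begin{equation*}
  \dot{x}_i = \frac{1}{d_i}\sum_{j=1}^N a_{ij}\,s(x_j) - a \;\geq\; \frac{1}{d_i}\sum_{j=1}^N a_{ij}\,s(a) - a \;=\; s(a) - a \;\geq\; 0,
\end{equation*}
invoking $s(a)\geq a$. The symmetric estimate on the upper face $x_i = b$ gives $\dot{x}_i \leq s(b) - b \leq 0$. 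Hence the field never points out of $[a,b]^N$, which is exactly the Nagumo subtangentiality condition for this closed convex set, and statement~\ref{prop:invariance_of_hypercubes_defined_by_fixed_points:item:1} follows.

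For a cleaner unified argument I would instead compare with the FSM. Let $y(t)$ solve the scalar equation $\dot{y} = s(y) - y$ with $y(0)=a$; since the FSM $\Scal$ is forward invariant, the solution of \eqref{eq:dynamic} started at $a\one$ is $\bm{y}(t) = y(t)\one$. Because $s(a)\geq a$, the scalar field is nonnegative at $a$, so $y(t)\geq a$ for all $t\geq 0$; and since $a\one \leq \bm{x}(0)$, cooperativity (Proposition~\ref{prop:monotone_flow}) yields $\bm{x}(t) \geq \bm{y}(t) = y(t)\one \geq a\one$. The mirror argument with $z(0)=b$ and $s(b)\leq b$ gives $\bm{x}(t)\leq b\one$, re-proving statement~\ref{prop:invariance_of_hypercubes_defined_by_fixed_points:item:1}. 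For statement~\ref{prop:invariance_of_hypercubes_defined_by_fixed_points:item:2} the same construction applies, but now $s(a) > a$ forces $\dot{y}(0) = s(a)-a>0$, so the scalar solution leaves $a$ and satisfies $y(t) > a$ for every $t>0$; combined with $a\one < \bm{x}(0)$ this gives $x_i(t)\geq y(t) > a$ for $t>0$, while $x_i(0)>a$ holds by hypothesis, so $x_i(t)>a$ for all $t\geq 0$. The symmetric strict bound gives $x_i(t)<b$, proving invariance of $(a,b)^N$.

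The main obstacle is the open case: the textbook invariance theorems are stated for closed sets, so a naive Nagumo argument alone does not forbid a trajectory of $(a,b)^N$ from touching its boundary. The comparison route resolves this, but it hinges on the claim that the scalar FSM solution with $s(a)-a>0$ stays \emph{strictly} above $a$ for all positive time. I would justify this by noting that $s$ is Lipschitz, hence the scalar field is continuous and $a$ is not one of its equilibria, so the solution is strictly increasing near $t=0$ and, by the monotonicity of one-dimensional autonomous flows, never returns to $a$. Everything else reduces to the routine sign estimates above.
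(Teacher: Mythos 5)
Your proof is correct. For item~\ref{prop:invariance_of_hypercubes_defined_by_fixed_points:item:1} your boundary computation is exactly the paper's argument: Nagumo's condition on each face, combined with the row-stochasticity of $\bm{D}^{-1}\bm{A}$ and the monotonicity of $s$, giving $\dot{x}_i \le s(b)-b \le 0$ on the upper face and the mirror bound below. For item~\ref{prop:invariance_of_hypercubes_defined_by_fixed_points:item:2}, however, you take a genuinely different route. The paper never leaves the closed-set framework: it first notes that $s(a)>a$ and $s(b)<b$, together with $s$ non-decreasing, force $s(y)>y$ on $[a,a+\varepsilon]$ and $s(y)<y$ on $[b-\varepsilon,b]$ for some $\varepsilon>0$, then exhausts the open cube by the nested closed cubes $H_n=[a+\varepsilon/n,\,b-\varepsilon/n]^N$, each forward invariant by item (i), and concludes because an increasing union of forward invariant sets is forward invariant. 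You instead sandwich the trajectory between the two scalar FSM solutions issued from $a\one$ and $b\one$, using the cooperativity of Proposition~\ref{prop:monotone_flow} together with the strict monotonicity of one-dimensional autonomous flows to obtain the strict inequalities for $t>0$ (and the hypothesis $\bm{x}(0)\in(a,b)^N$ at $t=0$). Both arguments are sound, and you correctly identify the genuine difficulty of the open case, namely that closed-set invariance theorems do not apply directly. Your comparison route buys a sharper conclusion — a time-varying envelope $y(t)\one\le\bm{x}(t)\le z(t)\one$ rather than mere containment in the cube — at the cost of invoking the forward invariance of $\Scal$, uniqueness of solutions, and Proposition~\ref{prop:monotone_flow}; the paper's exhaustion argument is more elementary and self-contained, needing only item (i) and the behavior of $s$ near the two endpoints.
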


\begin{proof}
    \ref{prop:invariance_of_hypercubes_defined_by_fixed_points:item:1} Since the system \eqref{eq:dynamic} admits a unique maximal trajectory for any initial condition (since $\bm{f}$ is Lipschitz continuous), by Nagumo's theorem (see \cite{blanchiniSetInvarianceControl1999}), it is sufficient to analyze the dynamics at the boundary of the hypercube $[a,b]^N$.

	Consider any agent $i \in \Vcal$. Suppose its state reaches the upper boundary, $x_i = b$. The states of all other agents satisfy $x_j \leq b$. Since $s$ is non-decreasing, $s(x_j) \leq s(b)$. The dynamics for agent $i$ is:
	\begin{align*}
		\dot{x}_i = \frac{1}{d_i} \sum_{j=1}^N a_{ij} s(x_j) - b \leq \frac{1}{d_i} \sum_{j=1}^N a_{ij} s(b) - b.
	\end{align*}
	Since the matrix $\bm{D}^{-1}\bm{A}$ is row-stochastic, $\sum_{j=1}^N a_{ij}/d_i = 1$. This simplifies the inequality to $\dot{x}_i \leq s(b) - b$. By assumption, $s(b) \leq b$, so $\dot{x}_i(t) \leq 0$. This ensures that the trajectory does not exit through the upper face $x_i = b$.

	By symmetry, using the same argument, trajectories cannot exit through the lower face $x_i = a$.
	Since this holds for any agent $i$, the hypercube $[a, b]^N$ is forward invariant. 

	\ref{prop:invariance_of_hypercubes_defined_by_fixed_points:item:2} Since $s$ is non-decreasing, there exists a $\varepsilon > 0$ such that for all $y \in [b - \varepsilon, b]$, we have $s(y) < y$ and for all $y \in [a, a + \varepsilon]$, we have $s(y) > y$. Let us note $H_n = [a + \varepsilon/n, b - \varepsilon/n]^N$ for $n \geq 1$. By \ref{prop:invariance_of_hypercubes_defined_by_fixed_points:item:1}, for $n \geq 1$, the hypercube $H_n$ is forward invariant and the sequence $H_n$ is increasing. i.e., $H_n \subset H_{n+1}$. Since every $H_n$ is forward invariant, so is their union:
	\begin{equation*}
		\bigcup_{n \geq 1} H_n = \bigcup_{n \geq 1} [a + \frac{\varepsilon}{n}, b - \frac{\varepsilon}{n}]^N = (a, b)^N.
	\end{equation*}
	Thus, the open hypercube $(a, b)^N$ is forward invariant.	
    \end{proof}
\begin{corollary}\label{cor:invariance_of_hypercubes_defined_by_fixed_points}
Proposition~\ref{prop:invariance_of_hypercubes_defined_by_fixed_points} can be applied to the fixed points of $s$ as follows:
    \begin{enumerate}
        \item For any $\underline{c}, \overline{c} \in \mathrm{Fix}(s)$ such that $\underline{c} < \overline{c}$, the hypercube $\left[\underline{c}, \overline{c}\right]^N$ is forward invariant. \label{prop:invariance_of_hypercubes_defined_by_fixed_points:item:1bis}
        \item In addition, if $\underline{c} \in \mathrm{Fix_R^{\circ}}(s)$ and $\overline{c} \in \mathrm{Fix_L^{\circ}}(s)$, then the hypercube $\left(\underline{c}, \overline{c}\right)^N$ is forward invariant. \label{prop:invariance_of_hypercubes_defined_by_fixed_points:item:2bis}
    \end{enumerate}
     Consequently, if $\underline{c}, \overline{c} \in \mathrm{Fix}(s)$ such that $\underline{c} \leq \min_{i \in \Vcal} x_i(0)$ and $\max_{i \in \Vcal} x_i(0) \leq \overline{c} $, then $\bm{x}(t) \in \left[\underline{c}, \overline{c}\right]^N$, for all $t \geq 0$.
\end{corollary}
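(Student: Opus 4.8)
The plan is to obtain all three claims as direct specializations of Proposition~\ref{prop:invariance_of_hypercubes_defined_by_fixed_points}, the only real work being to translate the fixed-point and one-sided-instability hypotheses into the endpoint inequalities that proposition requires.

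For statement~\ref{prop:invariance_of_hypercubes_defined_by_fixed_points:item:1bis}, I would simply note that $\underline{c}, \overline{c} \in \mathrm{Fix}(s)$ means $s(\underline{c}) = \underline{c}$ and $s(\overline{c}) = \overline{c}$, so in particular $s(\underline{c}) \geq \underline{c}$ and $s(\overline{c}) \leq \overline{c}$; applying Proposition~\ref{prop:invariance_of_hypercubes_defined_by_fixed_points}\ref{prop:invariance_of_hypercubes_defined_by_fixed_points:item:1} with $a = \underline{c}$ and $b = \overline{c}$ then yields forward invariance of $[\underline{c}, \overline{c}]^N$ immediately. The final ``consequently'' claim is essentially a restatement: the hypotheses $\underline{c} \leq \min_{i} x_i(0)$ and $\max_{i} x_i(0) \leq \overline{c}$ say exactly that $\bm{x}(0) \in [\underline{c}, \overline{c}]^N$, and forward invariance of that closed hypercube gives $\bm{x}(t) \in [\underline{c}, \overline{c}]^N$ for all $t \geq 0$ (the degenerate case $\underline{c} = \overline{c}$ being the trivial synchronized equilibrium $\underline{c}\one$).

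The one delicate point is statement~\ref{prop:invariance_of_hypercubes_defined_by_fixed_points:item:2bis}, for which I cannot invoke Proposition~\ref{prop:invariance_of_hypercubes_defined_by_fixed_points}\ref{prop:invariance_of_hypercubes_defined_by_fixed_points:item:2} directly: that part needs the \emph{strict} inequalities $s(a) > a$ and $s(b) < b$ at the endpoints themselves, whereas here the endpoints are exactly fixed points, where $s(x) - x = 0$. This is precisely where the instability hypotheses do the work. Unfolding the definition of right-instability, $\underline{c} \in \mathrm{Fix_R^{\circ}}(s)$ provides a $\delta_1 > 0$ with $(x-\underline{c})(s(x)-x) > 0$, hence $s(x) > x$, for $x \in (\underline{c}, \underline{c}+\delta_1)$; symmetrically, $\overline{c} \in \mathrm{Fix_L^{\circ}}(s)$ provides a $\delta_2 > 0$ with $s(x) < x$ for $x \in (\overline{c}-\delta_2, \overline{c})$.

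With these strict inequalities available just inside the endpoints, I would mimic the exhaustion argument already used in the proof of Proposition~\ref{prop:invariance_of_hypercubes_defined_by_fixed_points}\ref{prop:invariance_of_hypercubes_defined_by_fixed_points:item:2}: set $\delta = \min(\delta_1, \delta_2)$ and, for $n$ large enough that $a_n := \underline{c} + \delta/n < \overline{c} - \delta/n =: b_n$, observe that $s(a_n) > a_n$ and $s(b_n) < b_n$, so each open hypercube $(a_n, b_n)^N$ is forward invariant by Proposition~\ref{prop:invariance_of_hypercubes_defined_by_fixed_points}\ref{prop:invariance_of_hypercubes_defined_by_fixed_points:item:2}. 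The sequence is nested increasing with union $(\underline{c}, \overline{c})^N$, and a nested union of forward-invariant sets is forward invariant, giving the claim. I expect this reconciliation of the strict-inequality requirement with endpoints that are exactly fixed points to be the only genuine obstacle; once the instability neighborhoods are produced, the rest is the same limiting device as in the parent proposition.
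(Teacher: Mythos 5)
Your proof is correct. For item~\ref{prop:invariance_of_hypercubes_defined_by_fixed_points:item:1bis} and the final ``consequently'' claim it coincides with the paper's argument, which is a one-liner: fixed points satisfy $s(c)=c$, hence $s(\underline{c})\geq\underline{c}$ and $s(\overline{c})\leq\overline{c}$, and Proposition~\ref{prop:invariance_of_hypercubes_defined_by_fixed_points}-\ref{prop:invariance_of_hypercubes_defined_by_fixed_points:item:1} applies. Where you genuinely add something is item~\ref{prop:invariance_of_hypercubes_defined_by_fixed_points:item:2bis}. The paper's proof of the corollary simply asserts that the proposition ``applies to $\underline{c},\overline{c}$,'' but, as you observe, Proposition~\ref{prop:invariance_of_hypercubes_defined_by_fixed_points}-\ref{prop:invariance_of_hypercubes_defined_by_fixed_points:item:2} requires the \emph{strict} inequalities $s(a)>a$ and $s(b)<b$ at the endpoints themselves, which fail when $a$ and $b$ are fixed points; the published justification glosses over exactly the point you isolate. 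Your repair --- extracting from $\underline{c}\in\mathrm{Fix_R^{\circ}}(s)$ and $\overline{c}\in\mathrm{Fix_L^{\circ}}(s)$ one-sided neighborhoods on which $s(x)>x$ (resp.\ $s(x)<x$) holds strictly, then exhausting $(\underline{c},\overline{c})^N$ by the forward-invariant hypercubes $(a_n,b_n)^N$ with $a_n\downarrow\underline{c}$ and $b_n\uparrow\overline{c}$, and using that a union of forward-invariant sets is forward invariant --- is precisely the limiting device already used inside the proof of Proposition~\ref{prop:invariance_of_hypercubes_defined_by_fixed_points}-\ref{prop:invariance_of_hypercubes_defined_by_fixed_points:item:2}, transplanted to where the strict inequalities actually hold. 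So your write-up is not only correct but more complete than the paper's own proof of this corollary.
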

\begin{proof}Any fixed point $c$ satisfies $s(c)=c$ thus Proposition \ref{prop:invariance_of_hypercubes_defined_by_fixed_points} applies to $\underline{c}, \overline{c}$.
\end{proof}

\begin{figure}
	\centering
    \begin{subfigure}[t]{0.48\linewidth}
    	\centering
    	\begin{tikzpicture}[scale=0.65,
    			vec/.style={
    					decorate,
    					decoration={
    							markings,      
    							mark=at position 0.99 with {\arrow{stealth}}
    						}}]
    
    		\pgfmathsetmacro{\cUnder}{-0.5}
    		\pgfmathsetmacro{\cOver}{0.7}
    		\pgfmathsetmacro{\aVal}{-0.75} 
    		\pgfmathsetmacro{\aValprime}{0.1} 
    		\pgfmathsetmacro{\bVal}{0.6} 
    		\pgfmathsetmacro{\bValprime}{0.5} 
    
    		\begin{axis}[
    				width=8cm, height=8cm,
    				axis lines=middle,
    				xlabel={$x$},
    				ylabel={$y$},
    				xmin=-1.1, xmax=1.1,
    				ymin=-1.1, ymax=1.1,
    				xtick={-1, 0, 1},
    				ytick={-1, 0, 1},
    				extra x ticks={\aVal, \aValprime, \bVal, \bValprime},
    				extra x tick labels={$a$, $a'$, $b$, $b'$},
    				extra y ticks={\aVal, \aValprime, \bVal, \bValprime},
    				extra y tick labels={$a$, $a'$, $b$, $b'$},
    				xticklabel style={font=\small},
    				yticklabel style={font=\small, anchor=east},
    				typeset ticklabels with strut,
    				grid=major,
    				legend pos=north west,
    			]

    			\coordinate (P1) at (axis cs: -1.0, -0.7);
    			\coordinate (P2) at (axis cs: -0.7, -0.7);
    			\coordinate (P3) at (axis cs: -0.6, -0.7);
    			\coordinate (P4) at (axis cs: \cUnder, \cUnder);
    			\coordinate (P5) at (axis cs: -0.4, 0.2);
    			\coordinate (P6) at (axis cs: 0.2, 0.3);
    			\coordinate (P7) at (axis cs: 0.3, 0.3);
    			\coordinate (P8) at (axis cs: 0.65, 0.5);
    			\coordinate (P9) at (axis cs: \cOver, \cOver);
    			\coordinate (P10) at (axis cs: 0.75, 0.85);
    			\coordinate (P11) at (axis cs: 0.85, 0.85);
    			\coordinate (P12) at (axis cs: 1.0, 0.85);

    			\addplot[domain=-1:1, samples=2, dashed, color=gray, thick] {x};
    			\addlegendentry{\small $y=x$}			
    			\addlegendimage{fill=myblue, draw=myblue, thick}
    			\addlegendentry{\small $s(x)$}

    			\addlegendimage{only marks, mark=o, fill=white, draw=black}
    			\addlegendentry{\small $\mathrm{Fix}^{\circ}(s)$}
    			\addlegendimage{only marks, mark=*, fill=black, draw=black}
    			\addlegendentry{\small $\mathrm{Fix}^{\bullet}(s)$}
    
    			\draw[myblue, very thick] (P1) -- (P2) -- (P3) -- (P4) -- (P5) -- (P6) -- (P7) -- (P8) -- (P9) -- (P10) -- (P11) -- (P12);
    			\node[circle, draw, fill=white, inner sep=1.5pt] at (axis cs: \cUnder, \cUnder) {};
    			\node[circle, draw, fill=white, inner sep=1.5pt] at (axis cs: \cOver, \cOver) {};
    			\node[font=\small, anchor=south east] at (axis cs: \cUnder, \cUnder) {$c_2$};
    			\node[font=\small, anchor=north west] at (axis cs: \cOver, \cOver) {$c_4$};

    			\node[circle, draw, fill=black, inner sep=1.5pt] at (axis cs: -0.7, -0.7) {};
    			\node[font=\small, anchor=north west, color=black] at (axis cs: -0.7, -0.7) {$c_1$};
    
    			\node[circle, draw, fill=black, inner sep=1.5pt] at (axis cs: 0.3, 0.3) {};
    			\node[font=\small, anchor=north west, color=black] at (axis cs: 0.3, 0.3) {$c_3$};
    
    			\node[circle, draw, fill=black, inner sep=1.5pt] at (axis cs: 0.85, 0.85) {};
    			\node[font=\small, anchor=north west, color=black] at (axis cs: 0.85, 0.85) {$c_5$};
    		\end{axis}
    	\end{tikzpicture}
    	\caption{A piecewise linear signal function $s(x)$ with two unstable fixed points $c_2$ and $c_4$ and three stable fixed points $c_1$, $c_3$, and $c_5$. The shape of $s$ indicates the presence of induced invariant hypercubes.}
    	\label{subfig:signal_function_example}
    \end{subfigure}
    \hfil
    \begin{subfigure}[t]{0.48\linewidth}
    	\begin{tikzpicture}[scale=0.65,
    			vec/.style={
    					decorate,
    					decoration={
    							markings,      
    							mark=at position 0.99 with {\arrow{stealth}}
    						}}]
    
    		\pgfmathsetmacro{\cUnder}{-0.5}
    		\pgfmathsetmacro{\cOver}{0.7}
    
    		\pgfmathsetmacro{\aVal}{-0.75} 
    		\pgfmathsetmacro{\aValprime}{0.1} 
    		\pgfmathsetmacro{\bVal}{0.6} 
    		\pgfmathsetmacro{\bValprime}{0.5} 
     \pgfplotsset{grid style={help lines}} 
    		\begin{axis}[
    				width=8cm, height=8cm,
    				area legend,
    				axis lines=middle,
    				xlabel={$x_1$},
    				ylabel={$x_2$},
    				xmin=-1.1, xmax=1.1,
    				ymin=-1.1, ymax=1.1,
    				xtick={-1, 0, 1},
    				ytick={-1, 0, 1},
    				extra x ticks={\aVal, \aValprime, \bVal, \bValprime},
    				extra x tick labels={$a$, $a'$, $b$, $b'$},
    				extra y ticks={\aVal, \aValprime, \bVal, \bValprime},
    				extra y tick labels={$a$, $a'$, $b$, $b'$},
    				xticklabel style={font=\small},
    				yticklabel style={font=\small, anchor=east},
    				typeset ticklabels with strut,
    				grid=major,
    				legend pos=south east,
    				legend columns=1,
    				legend cell align=left,
                    axis on top=false,
                    grid style={/pgfplots/on layer=axis background}
    			]
    
    			\begin{pgfonlayer}{background}
    				\addplot[
    					fill=mygreen!30, fill opacity=0.7,
    					draw=mygreen, thick,
                        pattern=north east lines,
    					forget plot 
    				] (\cUnder, \cUnder) -- (\cOver, \cUnder) -- (\cOver, \cOver) -- (\cUnder, \cOver) -- cycle;
    
    				\addplot[
    					fill=mygreen!30, fill opacity=0.7,
    					draw=mygreen, thick,
                        pattern=north east lines,
    				] (\cOver, \cOver) -- (1, \cOver) -- (1, 1) -- (\cOver, 1) -- cycle;
    				\addlegendentry{\small $[\underline{c}, \overline{c}]^2$}		
    				\addplot[
    					fill=mygreen!30, fill opacity=0.7,
    					draw=mygreen, thick,
                        pattern=north east lines,
    					forget plot 
    				] (\cUnder, \cUnder) -- (-1, \cUnder) -- (-1, -1) -- (\cUnder, -1) -- cycle;        

    				\addplot[
    					fill=myblue!30, fill opacity=0.7,
    					draw=myblue, thick,
    				] (\aVal, \aVal) -- (\bVal, \aVal) -- (\bVal, \bVal) -- (\aVal, \bVal) -- cycle;
    				\addlegendentry{\small $[a,b]^2$}
    
    				\addplot[
    					fill=myred!30, fill opacity=0.7,
    					draw=myred, thick,
    				] (\aValprime, \aValprime) -- (\bValprime, \aValprime) -- (\bValprime, \bValprime) -- (\aValprime, \bValprime) -- cycle;
    				\addlegendentry{\small $[a',b']^2$}
    			\end{pgfonlayer}
    
    			\draw[color=black, thick] (-1, -1) -- (1, 1);
    			\node[font=\small, anchor=north west, color=black] at (axis cs: 1, 1) {$\!\mathcal{S}$};
    
    			\node[circle, draw, fill=white, inner sep=1.5pt] at (axis cs: \cUnder, \cUnder) {};
    			\node[circle, draw, fill=white, inner sep=1.5pt] at (axis cs: \cOver, \cOver) {};
    			\node[font=\small, anchor=south east] at (axis cs: \cUnder, \cUnder) {$c_2 \bm{1}$};
    			\node[font=\small, anchor=north west] at (axis cs: \cOver, \cOver) {$c_4 \bm{1}$};
    			\pgfmathsetmacro{\numArrows}{2}
    			\pgfmathsetmacro{\step}{(\cOver - \cUnder) / (\numArrows + 1)}
    			\pgfmathsetmacro{\arrowLength}{0.06} 
    
    			\pgfmathsetmacro{\numArrowssmall}{1}
    			\pgfmathsetmacro{\stepssmall}{(1 - \cOver) / (\numArrowssmall + 1)}
    			\pgfmathsetmacro{\arrowLengthssmall}{0.06} 
    
    			\pgfmathsetmacro{\numArrowssmallprime}{1}
    			\pgfmathsetmacro{\stepssmallprime}{(\cUnder - (-1)) / (\numArrowssmallprime + 1)}
    			\pgfmathsetmacro{\arrowLengthssmallprime}{0.06} 

    			\pgfmathsetmacro{\numArrowsab}{2}
    			\pgfmathsetmacro{\stepsab}{(\bVal - \aVal) / (\numArrowsab + 1)}
    
    			\pgfmathsetmacro{\numArrowsabprime}{1}
    			\pgfmathsetmacro{\stepsabprime}{(\bValprime - \aValprime) / (\numArrowsabprime + 1)}
    			\pgfmathsetmacro{\arrowLengthab}{0.06} 
    
    			\begin{scope}[>=Triangle]
    				\pgfplotsinvokeforeach{1,...,\numArrows}{
    					\draw[->, color=mygreen, thick] (axis cs: \cOver, {\cUnder + #1 * \step}) -- (axis cs: {\cOver - \arrowLength}, {\cUnder + #1 * \step});
    					\draw[->, color=mygreen, thick] (axis cs: \cUnder, {\cUnder + #1 * \step}) -- (axis cs: {\cUnder + \arrowLength}, {\cUnder + #1 * \step});
    					\draw[->, color=mygreen, thick] (axis cs: {\cUnder + #1 * \step}, \cOver) -- (axis cs: {\cUnder + #1 * \step}, \cOver - \arrowLength);
    					\draw[->, color=mygreen, thick] (axis cs: {\cUnder + #1 * \step}, \cUnder) -- (axis cs: {\cUnder + #1 * \step}, {\cUnder + \arrowLength});
    
    					\draw[->, color=myblue, thick] (axis cs: \bVal, {\aVal + #1 * \stepsab}) -- (axis cs: {\bVal - \arrowLength}, {\aVal + #1 * \stepsab});
    					\draw[->, color=myblue, thick] (axis cs: \aVal, {\aVal + #1 * \stepsab}) -- (axis cs: {\aVal + \arrowLength}, {\aVal + #1 * \stepsab});
    					\draw[->, color=myblue, thick] (axis cs: {\aVal + #1 * \stepsab}, \bVal) -- (axis cs: {\aVal + #1 * \stepsab}, \bVal - \arrowLength);
    					\draw[->, color=myblue, thick] (axis cs: {\aVal + #1 * \stepsab}, \aVal) -- (axis cs: {\aVal + #1 * \stepsab}, {\aVal + \arrowLength});
    
    				}

    				\pgfplotsinvokeforeach{1,...,\numArrowssmall}{
    					\draw[->, thick, color=mygreen] (axis cs: \cOver, {\cOver + #1 * \stepssmall}) -- (axis cs: {\cOver + \arrowLengthssmall}, {\cOver + #1 * \stepssmall});
    					\draw[->, thick, color=mygreen] (axis cs: 1, {\cOver + #1 * \stepssmall}) -- (axis cs: 1 - \arrowLengthssmall, {\cOver + #1 * \stepssmall});
    					\draw[->, color=mygreen, thick] (axis cs: {1 - #1 * \stepssmall}, \cOver) -- (axis cs: {1- #1 * \stepssmall}, \cOver + \arrowLengthssmall);
    					\draw[->, color=mygreen, thick] (axis cs: {1 - #1 * \stepssmall}, 1) -- (axis cs: {1- #1 * \stepssmall}, {1 - \arrowLengthssmall});
    				}
    				\pgfplotsinvokeforeach{1,...,\numArrowssmallprime}{
    					\draw[->, thick, color=mygreen] (axis cs: -1, {-1 + #1 * \stepssmallprime}) -- (axis cs: {-1 + \arrowLengthssmallprime}, {-1 + #1 * \stepssmallprime});
    					\draw[->, thick, color=mygreen] (axis cs: \cUnder, {-1 + #1 * \stepssmallprime}) -- (axis cs: \cUnder - \arrowLengthssmallprime, {-1 + #1 * \stepssmallprime});
    					\draw[->, thick, color=mygreen] (axis cs: {\cUnder - #1 * \stepssmallprime}, -1) -- (axis cs: {\cUnder - #1 * \stepssmallprime}, -1 + \arrowLengthssmallprime);
    					\draw[->, thick, color=mygreen] (axis cs: {\cUnder - #1 * \stepssmallprime}, \cUnder) -- (axis cs: {\cUnder - #1 * \stepssmallprime}, {\cUnder - \arrowLengthssmallprime});
    
    				}
    				\pgfplotsinvokeforeach{1,...,\numArrowsabprime}{
    					\draw[->, color=myred, thick] (axis cs: \bValprime, {\aValprime + #1 * \stepsabprime}) -- (axis cs: \bValprime - \arrowLengthab, {\aValprime + #1 * \stepsabprime});
    					\draw[->, color=myred, thick] (axis cs: \aValprime, {\aValprime + #1 * \stepsabprime}) -- (axis cs: {\aValprime + \arrowLengthab}, {\aValprime + #1 * \stepsabprime});
    					\draw[->, color=myred, thick] (axis cs: {\bValprime - #1 * \stepsabprime}, \aValprime) -- (axis cs: {\bValprime - #1 * \stepsabprime}, \aValprime + \arrowLengthab);
    					\draw[->, color=myred, thick] (axis cs: {\bValprime - #1 * \stepsabprime}, \bValprime) -- (axis cs: {\bValprime - #1 * \stepsabprime}, {\bValprime - \arrowLengthab});
    				}
    			\end{scope}

    			\node[circle, draw, fill=black, inner sep=1.5pt] at (axis cs: -0.7, -0.7) {};
    			\node[font=\small, anchor=south east, color=black] at (axis cs: -0.7, -0.7) {$c_1 \bm{1}$};
    
    			\node[circle, draw, fill=black, inner sep=1.5pt] at (axis cs: 0.3, 0.3) {};
    			\node[font=\small, anchor=north west, color=black] at (axis cs: 0.3, 0.3) {$c_3 \bm{1}$};
    
    			\node[circle, draw, fill=black, inner sep=1.5pt] at (axis cs: 0.85, 0.85) {};
    			\node[font=\small, anchor=north west, color=black] at (axis cs: 0.85, 0.85) {$c_5 \bm{1}$};
    		\end{axis}
    	\end{tikzpicture}
    	\caption{Induced invariant hypercubes of the signal. The hatched green areas correspond to different invariant hypercubes defined by \emph{successive} fixed points of $s$, while the blue and the red areas correspond to the invariant hypercubes defined by intervals $[a,b]$ and $[a',b']$, respectively.}
    	\label{subfig:invariant_hypercubes_for_signal_function_example}
    \end{subfigure}
    \caption{Illustration of Proposition~\ref{prop:invariance_of_hypercubes_defined_by_fixed_points}.} \vspace{-.5cm}
    \label{fig:invariance_of_hypercubes_defined_by_fixed_points}
\end{figure}

Our analysis reveals two key insights about the system behavior. First, Proposition~\ref{prop:monotone_flow} establishes that almost all trajectories converge to equilibrium states. Second, Proposition~\ref{prop:invariance_of_hypercubes_defined_by_fixed_points} shows that trajectories remain bounded within hypercubes defined by the signal function's fixed points. More precisely, when initial conditions start within a hypercube $[a, b]^N$ satisfying the conditions of Proposition~\ref{prop:invariance_of_hypercubes_defined_by_fixed_points}-\ref{prop:invariance_of_hypercubes_defined_by_fixed_points:item:2}, both the trajectory and its eventual equilibrium state must remain confined within that same hypercube. This property helps us predict and localize where the system will ultimately settle. The following section is dedicated to the characterization of the equilibria of the dynamics \eqref{eq:dynamic}.

\section{Characterization of Equilibria: From Fully Synchronized to Clustered States} \label{sec:equilibria}

Having established that trajectories converge to the set of equilibria of \eqref{eq:dynamic} (Propositions~\ref{prop:monotone_flow} and Lemma~\ref{lemma:SOP_flow}), we now characterize the structure of these points. An equilibrium state $\bm{x}^* \in \mathcal{X}$ satisfies $\dot{\bm{x}}^* = \bm{0}$, which from \eqref{eq:dynamic} implies:
\begin{equation} \label{eq:equilibrium_condition}
	\bm{x}^* = \bm{D}^{-1} \bm{A} \bm{s}(\bm{x}^*).
\end{equation}

Component-wise, this fundamental condition states that for every agent $i \in \Vcal$, its equilibrium state $x_i^*$ must equal the weighted average of the signals received from its neighbors:
\begin{equation}\label{eq:equilibrium_condition_scalar}
	x_i^* = \frac{1}{d_i} \sum_{j \in \mathcal{N}_i} s(x_j^*).
\end{equation}

Equilibria fall into one of two classes:
\begin{itemize}
    \item \emph{Fully Synchronized Equilibria (FSE):} States of complete agreement, belonging to the FSM ($\bm{x}^* \in \mathcal{S}$), where all agents share a common value $c \in [-1,1]$. i.e., $\bm{x}^*=c\one$.
    \item \emph{Non-Fully Synchronized Equilibria (NFSE):} States of persistent disagreement, outside of the FSM ($\bm{x}^* \in \mathcal{X} \setminus \mathcal{S}$), where agents' states settle at different values, often forming clusters.
\end{itemize}

\subsection{Properties of Fully Synchronized Equilibria}
For an equilibrium $\bm{x}^* = c\one$ to exist, it must satisfy the equilibrium condition \eqref{eq:equilibrium_condition}:
\begin{equation*}
    c\one = \bm{D}^{-1}\bm{A}\bm{s}(c\one) = s(c)\bm{D}^{-1}\bm{A}\one.
\end{equation*}
Since the matrix $\bm{D}^{-1}\bm{A}$ is row-stochastic, its rows sum to one, meaning $\bm{D}^{-1}\bm{A}\one = \one$. The condition thus simplifies to $c\one = s(c)\one$, which requires $c = s(c)$. This proves a crucial connection:
\begin{lemma}\label{lem:fse_are_fixed_points}
    The set of Fully Synchronized Equilibria (FSE) of the system \eqref{eq:dynamic} is precisely the set of states $\bm{x}^* = c\one$ where $c$ is a fixed point of the signal function $s$, i.e., $c \in \mathrm{Fix}(s)$.
\end{lemma}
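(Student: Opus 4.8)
The plan is to prove the claimed set equality by establishing a single equivalence: a state $\bm{x}^* = c\one$ lying on the FSM $\mathcal{S}$ is an equilibrium of \eqref{eq:dynamic} if and only if $c \in \mathrm{Fix}(s)$. Proving this biconditional yields both inclusions at once. The natural starting point is the equilibrium condition \eqref{eq:equilibrium_condition}, namely $\bm{x}^* = \bm{D}^{-1}\bm{A}\bm{s}(\bm{x}^*)$, evaluated on the candidate synchronized states $\bm{x}^* = c\one$.

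First I would use the fact that $\bm{s}$ acts component-wise: since every entry of $c\one$ equals $c$, we have $\bm{s}(c\one) = s(c)\one$. Substituting this into \eqref{eq:equilibrium_condition} reduces its right-hand side to $s(c)\,\bm{D}^{-1}\bm{A}\one$. The key structural ingredient is then the row-stochasticity of $\bm{D}^{-1}\bm{A}$, which follows from the degree normalization $\sum_{j=1}^{N} a_{ij}/d_i = 1$ and gives $\bm{D}^{-1}\bm{A}\one = \one$. This collapses the vector identity to the scalar equation $c\one = s(c)\one$, equivalent to $c = s(c)$, i.e., $c \in \mathrm{Fix}(s)$.

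Since every step in this chain is an equivalence, the forward direction (an FSE forces $c \in \mathrm{Fix}(s)$) and the reverse direction (any fixed point $c$ produces a valid equilibrium $c\one$) are obtained simultaneously, completing the set equality. I expect no genuine obstacle: the argument is a direct computation once the row-stochasticity of $\bm{D}^{-1}\bm{A}$ is invoked. The only point meriting a brief check is that the candidate $c\one$ actually lies in the admissible state space $\mathcal{X} = [-1,1]^N$, which is automatic because $\mathrm{Fix}(s) \subseteq [-1,1]$ by the codomain of $s$ in Standing Assumption~\ref{ass:signal}.
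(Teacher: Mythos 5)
Your argument is correct and is essentially identical to the paper's own derivation: both substitute $\bm{x}^* = c\one$ into the equilibrium condition \eqref{eq:equilibrium_condition}, use $\bm{s}(c\one) = s(c)\one$ and the row-stochasticity identity $\bm{D}^{-1}\bm{A}\one = \one$ to reduce everything to $c = s(c)$. The only (minor) addition is your explicit remark that each step is an equivalence, which the paper leaves implicit.
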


Since Lemma~\ref{lemma:existence_of_stable_fixed_point} guarantees that at least one such fixed point $c$ always exists, the existence of at least one FSE is guaranteed. The central question then becomes one of stability. The following theorem establishes an elegant result: the stability of an $N$-dimensional FSE is completely determined by the one-dimensional stability of its corresponding fixed point, as defined in Definition~\ref{def:fixed_point_stability}.

Beyond simply proving existence, our goal is to characterize the equilibria structure more precisely. The local stability of FSE was studied in \cite{couthuresGlobalSynchronizationMultiagent2025}. We recall the main result:

\begin{theorem}[Stability of FSE]\label{thm:local_stability_of_synchronization_equilibria}
		Let $\bm{x}^*=c \one$ be a Fully Synchronized Equilibrium. Then:
		\begin{enumerate}
			\item $\bm{x}^*$ is locally stable if and only if $c$ is a stable fixed point of $s$ ($c \in \mathrm{Fix}^{\bullet}(s)$).
			\item $\bm{x}^*$ is locally asymptotically stable if and only if $c$ is a stable and isolated fixed point of $s$.
			\item $\bm{x}^*$ is unstable if and only if $c$ is not a stable fixed point of $s$ ($c \in \mathrm{Fix}^{\circ}(s)$).
		\end{enumerate}
\end{theorem}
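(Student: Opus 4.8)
The plan is to reduce the $N$-dimensional stability question to the one-dimensional dynamics $\dot{x} = s(x) - x$ obtained by restricting \eqref{eq:dynamic} to the forward-invariant FSM $\mathcal{S}$, and then to control off-manifold behaviour with the invariant hypercubes of Proposition~\ref{prop:invariance_of_hypercubes_defined_by_fixed_points} together with the comparison principle implied by cooperativity (Proposition~\ref{prop:monotone_flow}). I first note that statements 1 and 3 are contrapositives of one another, since ``unstable'' is by definition the negation of ``Lyapunov stable'' and $\mathrm{Fix}^{\circ}(s)$ is the complement of $\mathrm{Fix}^{\bullet}(s)$ in $\mathrm{Fix}(s)$; hence it suffices to prove (a) that $\bm{x}^{*}=c\one$ is Lyapunov stable iff $c \in \mathrm{Fix}^{\bullet}(s)$, and (b) that it is asymptotically stable iff $c$ is moreover isolated.

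For the stability equivalence (a), I would first translate Definition~\ref{def:fixed_point_stability}: for $x<c$ the inequality $(x-c)(s(x)-x)\le 0$ is equivalent to $s(x)\ge x$, while for $x>c$ it is equivalent to $s(x)\le x$. Thus if $c\in\mathrm{Fix}^{\bullet}(s)$ there is $\epsilon_{*}>0$ with $s(c-\epsilon)\ge c-\epsilon$ and $s(c+\epsilon)\le c+\epsilon$ for all $0<\epsilon\le\epsilon_{*}$, so Proposition~\ref{prop:invariance_of_hypercubes_defined_by_fixed_points}-\ref{prop:invariance_of_hypercubes_defined_by_fixed_points:item:1} makes each hypercube $[c-\epsilon,c+\epsilon]^{N}$ forward invariant. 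Since the Euclidean ball of radius $\epsilon$ about $c\one$ lies inside this hypercube, which in turn lies inside the ball of radius $\epsilon\sqrt{N}$, the usual $\epsilon$--$\delta$ bookkeeping (take $\delta=\epsilon=\min(\epsilon_{*},\epsilon_{0}/\sqrt{N})$) gives Lyapunov stability. Conversely, if $c\notin\mathrm{Fix}^{\bullet}(s)$ it is unstable from, say, the left, meaning $s(x)<x$ for $x$ just below $c$; starting on the manifold at $\bm{x}(0)=(c-\eta)\one$ and using forward invariance of $\mathcal{S}$, the trajectory remains of the form $x(t)\one$ with $\dot{x}=s(x)-x<0$, so $x(t)$ is driven monotonically away from $c$ and leaves a fixed neighbourhood for every $\eta>0$; this is exactly Lyapunov instability, settling 1 and 3.

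For the asymptotic equivalence (b), Lyapunov stability is already in hand, so I only need attractivity when $c$ is additionally isolated. Choosing $\epsilon\le\epsilon_{*}$ below the isolation radius, the hypercube $[c-\epsilon,c+\epsilon]^{N}$ is invariant and contains no fixed point of $s$ other than $c$; isolatedness then upgrades the one-sided inequalities to the strict forms $s(x)>x$ on $(c-\epsilon,c)$ and $s(x)<x$ on $(c,c+\epsilon)$, so the scalar flow converges monotonically to $c$ from both sides. For a trajectory $\bm{x}(t)$ in the hypercube I would sandwich it between two on-manifold (scalar) solutions: the comparison principle from Proposition~\ref{prop:monotone_flow} applied to $\underline{x}(0)\one \le \bm{x}(0) \le \overline{x}(0)\one$, with $\underline{x}(0)=\min_i x_i(0)$ and $\overline{x}(0)=\max_i x_i(0)$, yields $\underline{x}(t)\one \le \bm{x}(t)\le \overline{x}(t)\one$ for all $t$; since both bounds tend to $c\one$, the squeeze gives $\bm{x}(t)\to c\one$. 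For the reverse direction, if $c$ were not isolated there would be fixed points $c_{n}\to c$ with $c_{n}\neq c$, and by Lemma~\ref{lem:fse_are_fixed_points} each $c_{n}\one$ is an equilibrium arbitrarily close to $c\one$; their constant trajectories cannot converge to $c\one$, contradicting attractivity, so asymptotic stability forces isolation.

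The main obstacle is the attractivity step: because Definition~\ref{def:fixed_point_stability} is phrased with non-strict inequalities, mere stability of $c$ does not by itself force the one-dimensional bounding solutions to reach $c$ (they could stall at an adjacent fixed point or on a plateau where $s(x)=x$). The key is to use isolatedness to promote the inequalities to strict ones on a punctured neighbourhood and to pick $\epsilon$ small enough that $[c-\epsilon,c+\epsilon]^{N}$ excludes every other fixed point; the cooperativity-based sandwiching then closes the argument cleanly, with no smoothness or hyperbolicity assumption on $s$.
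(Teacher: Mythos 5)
The paper does not actually prove this theorem: it is recalled from prior work and the ``proof'' is a pointer to \cite[Theorem~1]{couthuresGlobalSynchronizationMultiagent2025}, so a line-by-line comparison is not possible. Judged on its own, your argument is correct and, pleasingly, self-contained within the machinery of this paper: you combine the forward invariance of the nested hypercubes $[c-\epsilon,c+\epsilon]^N$ (Proposition~\ref{prop:invariance_of_hypercubes_defined_by_fixed_points}) for the Lyapunov-stability direction, the invariance of $\mathcal{S}$ plus the scalar flow $\dot x = s(x)-x$ for the instability direction, and the cooperative comparison principle (Proposition~\ref{prop:monotone_flow}) to squeeze $\bm{x}(t)$ between the on-manifold solutions issued from $\min_i x_i(0)$ and $\max_i x_i(0)$ for attractivity. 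The logical skeleton (1 and 3 are contrapositives; 2 adds isolation, which you correctly identify as exactly what upgrades the non-strict inequalities of Definition~\ref{def:fixed_point_stability} to strict ones and rules out nearby equilibria $c_n\one$) is sound, and the requirement that $s$ be merely non-decreasing and Lipschitz is respected throughout. Two spots deserve one more line each but are not gaps: in the instability step, ``driven monotonically away'' should be closed by observing that a monotone trajectory confined to the punctured interval would have to converge to a zero of $s(x)-x$ there, which the strict inequality forbids, so it must exit any ball of radius strictly smaller than the width of that interval; and for boundary fixed points $c=\pm 1$ the hypercube must be read as $[c-\epsilon,c+\epsilon]^N\cap\mathcal{X}$, with the corresponding one-sided condition $s(\pm 1)\lessgtr \pm 1$ holding automatically from the codomain of $s$.
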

\begin{proof}
    See \cite[Theorem~1]{couthuresGlobalSynchronizationMultiagent2025}.
\end{proof}

This theorem provides a complete characterization of the stability of all possible states of full agreement.

\subsection{Basins of Attraction for Synchronized States}\label{subsec:attraction_basin_of_FSE}

Theorem~\ref{thm:local_stability_of_synchronization_equilibria} tells us which FSE are stable, but not which initial conditions converge to them. By leveraging the invariant hypercubes identified in Proposition~\ref{prop:invariance_of_hypercubes_defined_by_fixed_points}, we can provide a geometric characterization of the basins of attraction for these equilibria. The following proposition shows that these basins are defined by the semi-stable or unstable fixed points of $s$, which act as \emph{separatrixes} \cite{chiangStabilityRegionsNonlinear2015d}. 

\begin{proposition}[Basin of Attraction of FSE]\label{prop:basin_of_attraction_N_dim}

    Let $c^*$ be a fixed point of $s$, defining the FSE $\bm{x}^* = c^*\one$. Let its basin of attraction be $\mathcal{B}(\bm{x}^*):= \{ \bm{x}(0) \in \Xcal \mid \lim_{t\to \infty} \bm{x}(t) = \bm{x}^*\}$. Define the nearest non-stable fixed points bounding $c^*$ as:
    \begin{align*}
        \underline{c} &:= \max (\{ c \in \mathrm{Fix}^{\circ}(s) \mid c < c^* \} \cup \{-1\}), \\
        \overline{c} &:= \min (\{ c \in \mathrm{Fix}^{\circ}(s) \mid c > c^* \} \cup \{1\}).
    \end{align*}

    Then the following equivalences hold:
    
    \begin{enumerate}
        \item The region $\left[ \underline{c}, c^*\right]^N\setminus \{ \underline{c}\one\} $ is a subset of $ \mathcal{B}(\bm{x}^*)$ if and only if $s(x)  > x$, for all $x \in \left( \underline{c}, c^*\right)$, i.e., $\underline{c} \in \mathrm{Fix_R^{\circ}}(s) \cup \{-1\}$.\label{prop:basin_of_attraction_N_dim:item:1}
        \item The region $\left[ c^*, \overline{c}\right]^N \setminus \{ \overline{c}\one\}$ is a subset of $ \mathcal{B}(\bm{x}^*)$ if and only if $s(x) < x$, for all $x \in \left( c^*, \overline{c}\right)$, i.e., $\overline{c} \in \mathrm{Fix_L^{\circ}}(s) \cup \{1\}$.\label{prop:basin_of_attraction_N_dim:item:2}
        \item The region $[\underline{c}, \overline{c}]^N \setminus \{ \underline{c}\one, \overline{c}\one\} \subseteq \mathcal{B}(\bm{x}^*)$ if and only if $\bm{x}^*$ is locally asymptotically stable. i.e., $c^* \in \mathrm{Fix}^{\bullet}(s)$.\label{prop:basin_of_attraction_N_dim:item:3}
    \end{enumerate}
    Those results extend to intervals over $\Scal$ of fixed point $I\subseteq \mathrm{Fix}(x)$.
\end{proposition}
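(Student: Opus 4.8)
The plan is to prove item \ref{prop:basin_of_attraction_N_dim:item:1} in full, obtain item \ref{prop:basin_of_attraction_N_dim:item:2} by the order-reversing symmetry $x \mapsto -x$ (which conjugates \eqref{eq:dynamic} to a system with signal $-s(-x)$ and swaps the roles of the minimum and maximum, and of $\underline{c}$ and $\overline{c}$), and then assemble item \ref{prop:basin_of_attraction_N_dim:item:3} from the two one-sided results together with a two-sided squeeze. The engine throughout is a comparison between the full $N$-dimensional flow and the scalar flow $\dot{y} = s(y) - y$ to which \eqref{eq:dynamic} reduces on $\Scal$. Writing $m(t) := \min_i x_i(t)$ and $M(t) := \max_i x_i(t)$, I would evaluate \eqref{eq:dynamic} at a minimizing (resp. maximizing) index and use row-stochasticity of $\bm{D}^{-1}\bm{A}$ together with $s$ non-decreasing to get the Dini differential inequalities $D^+ m(t) \ge s(m(t)) - m(t)$ and $D^+ M(t) \le s(M(t)) - M(t)$, which bound $m$ below and $M$ above by scalar trajectories.

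For the ``if'' direction of item \ref{prop:basin_of_attraction_N_dim:item:1}, assume $s(x) > x$ on $(\underline{c}, c^*)$. By Corollary~\ref{cor:invariance_of_hypercubes_defined_by_fixed_points} the hypercube $[\underline{c}, c^*]^N$ is forward invariant, so the first step is to classify its equilibria. For an equilibrium with minimum $m$ and maximum $M$, condition \eqref{eq:equilibrium_condition_scalar} forces $s(m) \le m$ and $s(M) \ge M$; since $s > \mathrm{id}$ strictly on $(\underline{c}, c^*)$, the first inequality forces $m \in \{\underline{c}, c^*\}$. If $m = c^*$ then $\bm{x}^* = c^*\one$; if $m = \underline{c}$, then at any node attaining $\underline{c}$ every neighbour must also attain $\underline{c}$ (otherwise a neighbour lies in $(\underline{c}, c^*]$, giving $s(x_j) > \underline{c}$ and a strict excess in \eqref{eq:equilibrium_condition_scalar}), so connectedness gives $\bm{x}^* = \underline{c}\one$. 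Hence the only equilibria in $[\underline{c}, c^*]^N$ are $\underline{c}\one$ and $c^*\one$. Now take $\bm{x}(0) \in [\underline{c}, c^*]^N \setminus \{\underline{c}\one\}$. A strong-monotonicity argument (discussed below) lifts the trajectory off the lower corner, producing $t_1 > 0$ with $m(t_1) > \underline{c}$; from $t_1$ on, the comparison $D^+ m \ge s(m) - m > 0$ drives $m(t) \to c^*$, while invariance keeps $M(t) \le c^*$, so $m \le M \le c^*$ squeezes $\bm{x}(t) \to c^*\one$, i.e. $\bm{x}(0) \in \mathcal{B}(\bm{x}^*)$.

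For the ``only if'' direction I would work on $\Scal$: every synchronized point $y\one$ with $y \in (\underline{c}, c^*)$ lies in the region, and there the dynamics are $\dot y = s(y) - y$. If $s(x_0) \le x_0$ for some $x_0 \in (\underline{c}, c^*)$, then either $x_0 \in \mathrm{Fix}(s)$, so $x_0\one$ is an equilibrium distinct from $\bm{x}^*$, or the scalar trajectory from $x_0$ is non-increasing and cannot reach $c^* > x_0$; in either case $x_0\one \notin \mathcal{B}(\bm{x}^*)$, contradicting containment of the region in the basin. This establishes the equivalence with $\underline{c} \in \mathrm{Fix_R^{\circ}}(s) \cup \{-1\}$, and item \ref{prop:basin_of_attraction_N_dim:item:2} follows verbatim under $x \mapsto -x$.

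For item \ref{prop:basin_of_attraction_N_dim:item:3}, if $c^* \in \mathrm{Fix}^{\bullet}(s)$ then, since $\underline{c}, \overline{c}$ are the nearest non-stable fixed points, there are no fixed points in $(\underline{c}, c^*) \cup (c^*, \overline{c})$ and stability forces $s > \mathrm{id}$ on $(\underline{c}, c^*)$ and $s < \mathrm{id}$ on $(c^*, \overline{c})$. The classification above, applied to the invariant hypercube $[\underline{c}, \overline{c}]^N$, then leaves exactly the three equilibria $\underline{c}\one$, $c^*\one$, $\overline{c}\one$, and for any $\bm{x}(0)$ in $[\underline{c}, \overline{c}]^N \setminus \{\underline{c}\one, \overline{c}\one\}$ the two-sided comparison gives $\liminf_t m(t) \ge c^*$ and $\limsup_t M(t) \le c^*$, squeezing $\bm{x}(t) \to c^*\one$; the converse is immediate from Theorem~\ref{thm:local_stability_of_synchronization_equilibria}, since $c^* \notin \mathrm{Fix}^{\bullet}(s)$ makes $\bm{x}^*$ unstable and unable to attract a punctured neighbourhood. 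The extension to an interval $I \subseteq \mathrm{Fix}(s)$ is handled by replacing $c^*\one$ with the attracting segment $\{c\one : c \in I\}$ and running the same squeeze with $m, M$ confined to $I$. The main obstacle is the ``lift off the corner'' step: I must show a trajectory starting in $[\underline{c}, \overline{c}]^N$ but not at a corner cannot stay on that corner's faces. The mechanism is clear, if $x_i(t) = \underline{c}$ while a neighbour $x_j(t) > \underline{c}$, then \eqref{eq:dynamic} gives $\dot x_i \ge \tfrac{1}{d_i}(s(x_j) - \underline{c}) > 0$, so the set of nodes strictly above $\underline{c}$ is nonempty, cannot shrink, and spreads along edges to cover $\Vcal$ by connectedness; making ``spreads in finite time'' fully rigorous for a merely non-decreasing, Lipschitz $s$ is the delicate point, whereas under the stronger hypotheses of Lemma~\ref{lemma:SOP_flow} the flow is strongly monotone and pushes any non-corner initial condition into the open hypercube instantaneously.
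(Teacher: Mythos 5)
Your proof is correct, but it follows a genuinely different route from the paper's for the substantive (``if'') direction. The paper proves containment in the basin by introducing the weighted Lyapunov function $V(\bm{x})=\tfrac12(\bm{x}-\bm{x}^*)^\top\bm{D}(\bm{x}-\bm{x}^*)$, splitting $\dot V$ into a diagonal term $\sum_i d_i(x_i-c^*)(s(x_i)-x_i)$ and a Laplacian term, and then invoking LaSalle's invariance principle on an exhausting family of compact invariant hypercubes $H_n^i$; you instead run a min/max comparison argument, bounding $D^+ m \ge s(m)-m$ and $D^+ M \le s(M)-M$ via row-stochasticity and monotonicity of $s$, combined with a classification of equilibria in the invariant hypercube and a ``propagation off the corner'' step. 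Your route is more elementary (no Lyapunov function, no LaSalle) and, notably, it handles item~\ref{prop:basin_of_attraction_N_dim:item:3} on the \emph{full} hypercube $[\underline{c},\overline{c}]^N$ directly by a two-sided squeeze, whereas the paper derives item~\ref{prop:basin_of_attraction_N_dim:item:3} as ``a corollary of (i) and (ii)'' even though the union $[\underline{c},c^*]^N\cup[c^*,\overline{c}]^N$ is strictly smaller than $[\underline{c},\overline{c}]^N$; your squeeze closes that gap cleanly. The ``only if'' directions coincide in substance with the paper's restriction to the scalar dynamics on $\Scal$, and your version (ruling out $s(x_0)\le x_0$ by noting the scalar trajectory cannot cross $x_0$ upward) is if anything more careful than the paper's.

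Two small points. First, the step you flag as delicate — lifting a non-corner initial condition off the face $\{x_i=\underline{c}\}$ for merely non-decreasing Lipschitz $s$ — does close without strong monotonicity, precisely because the hypothesis $s>\mathrm{id}$ on $(\underline{c},c^*)$ is in force: any lifted node $j$ satisfies $\dot x_j \ge \underline{c}-x_j$ (so $x_j(t)-\underline{c}\ge (x_j(0)-\underline{c})e^{-t}>0$ and the lifted set never shrinks), and for a non-lifted neighbour $i$ of $j$ one has $\tfrac{d}{dt}(x_i-\underline{c}) \ge -(x_i-\underline{c}) + \tfrac{a_{ij}}{d_i}\bigl(s(x_j)-\underline{c}\bigr)$ with $s(x_j)-\underline{c}>0$ on any compact time window, so integration gives $x_i(t)>\underline{c}$ for every $t>0$; connectedness then propagates this to all of $\Vcal$ instantaneously. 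Second, in item~\ref{prop:basin_of_attraction_N_dim:item:3} the converse is not quite ``immediate'' from instability alone (an unstable equilibrium can in principle attract a punctured neighbourhood); the correct one-line fix is the scalar argument you already use elsewhere: if $c^*$ is, say, not right-stable, there are $x>c^*$ arbitrarily close with $s(x)>x$, and the synchronized trajectory from $x\one$ is nondecreasing and hence cannot converge to $c^*\one$.
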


\begin{proof}
    Note that $\underline{c} \one$ and $\overline{c} \one $ are also equilibria of \eqref{eq:dynamic}. Thus, $\underline{c} \one \notin \mathcal{B}(x^*)$ and $\overline{c} \one \notin \mathcal{B}(x^*)$.

    \ref{prop:basin_of_attraction_N_dim:item:1} ($\Rightarrow$) Let us restrict to the FS case since $\Scal$ is forward invariant. Over, $\Scal$ we have a 1-dimensional dynamics: $\dot{x} = s(x) -x := f(x)$ defined over $[-1,1]$.
    
    Since $[\underline{c}, c^*]^N\setminus \{ \underline{c}\one\} \subseteq \mathcal{B}(x^*)$, one has $[\underline{c}, c^*]^N \cap \Scal \setminus \{ \underline{c}\one\}\subseteq \mathcal{B}(x^*)$, then for all $x(0) \in (\underline{c}, c^*]$, we must have $\lim_{t\to\infty} x(t) = x^*$. Moreover, since for all $x \in (\underline{c}, c^*)$, we have $x < c^*$, the function $t \mapsto x(t)$ must be increasing. Therefore, $\dot{x}(t) >0$ for all $t \geq 0$. Then,
	\begin{equation*}
		\dot{x}(t) = s(x(t)) - x(t) > 0 \Leftrightarrow s(x(t)) > x(t).
	\end{equation*}
	This yields $s(x) > x$ for all $x \in (\underline{c}, c^*)$.

	($\Leftarrow$) Let us consider the function $V(\bm{x}) = \frac{1}{2}(\bm{x} - \bm{x}^*)^\top \bm{D}(\bm{x} - \bm{x}^*)  \geq 0$. The time derivative of $V$ along the trajectories of the dynamics is given by
	\begin{align*}
		\dot{V}(\bm{x}) &= (\bm{x} - \bm{x}^*)^\top \!\bm{D} \dot{\bm{x}} = (\bm{x} - \bm{x}^*)^\top\! \bm{A}\bm{s}(\bm{x}) - (\bm{x} - \bm{x}^*)^\top\! \bm{D}\bm{x}\\
        & = (\bm{x} - c^* \one)^\top \!\bm{D}(\bm{s}(\bm{x}) - \bm{x}) - (\bm{x} - c^* \one)^\top \! \bm{L}\bm{s}(\bm{x}),
	\end{align*}
    using the definition of the Laplacian matrix $\bm{L} = \bm{D} - \bm{A}$. Since $\one^\top$ is a left eigenvector of $\bm{L}$ associated with eigenvalue $0$, this rewrites in summation form:
    \begin{align*}
        \dot{V}(\bm{x}) &= \sum_{i=1}^N d_i (x_i - c^*)(s(x_i) - x_i) \\
        &\qquad - \frac{1}{2}\sum_{i,j=1}^N a_{ij} (x_i - x_j)(s(x_i) - s(x_j)).
    \end{align*}
    One has $\{\bm{x} \in \Xcal \mid \dot{V}(\bm{x}) = 0\} \cap (\left[ \underline{c}, c^*\right]^N\setminus \{ \underline{c}\one\}) = \{\bm{x}^*\}$, since $\bm{L}\one = 0$ and $c^*$ is the only fixed point of $s$ in $\left( \underline{c}, c^*\right]$ (since $s(x)  > x$, for all $x \in \left( \underline{c}, c^*\right)$). Additionally, over the hypercube $[\underline{c}, c^*)^N\setminus \{ \underline{c}\one\}$, the first term of $\dot{V}(\bm{x})$ is negative since there exists at least one $i \in \Vcal$ such that $x_i\neq c_{\mathrm{L}}^*$ and then verify $(x_i - c^*)(s(x_i) - x_i) < 0$ by the assumption $s(x_i) > x_i$ and $\sum_{i,j=1}^N a_{ij} (x_i - x_j)(s(x_i) - s(x_j)) \geq 0$ by non-decreasing-ness of $s$. Therefore, $V$ is a Lyapunov function for the dynamics \eqref{eq:dynamic} over $[\underline{c}, c^*]^N\setminus \{ \underline{c}\one\}$.
    
    Let $n_0 \in \N$ such that $[\underline{c} + 1/n_0, c^*] \subset (\underline{c}, c^*]$. For any $i \in \Vcal$, let us note $H_n^i = [\underline{c}, c^*]^{i-1} \times[\underline{c} + 1/n, c^*] \times [\underline{c}, c^*]^{N-i} $ for $n \geq n_0$, this hypercube is a compact forward invariant for the dynamics \eqref{eq:dynamic} from Corollary~\ref{cor:invariance_of_hypercubes_defined_by_fixed_points}-\ref{prop:invariance_of_hypercubes_defined_by_fixed_points:item:2bis}. Additionally, one has $\dot{V}(\bm{x}) < 0$ for any $\bm{x} \in H_n^i \setminus \{c^* \one\}$. Thus, by LaSalle's invariance principle \cite[Theorem~4.4]{khalil_nonlinear_2002}, one has $H_n^i \subseteq \mathcal{B}(\bm{x}^*)$. Then, since this is true for all $i \in \Vcal$ and $n \geq n_0$, we have that every $H_n^i \subseteq \mathcal{B}(x^*)$ and thus
	\begin{equation*}
		\bigcup_{n \geq n_0}\bigcup_{i\in\Vcal} H_n^i \subseteq \mathcal{B}(x^*) \Leftrightarrow [\underline{c}, x^*]^N \setminus\{c_{\mathrm{L}^*}\one \} \subseteq \mathcal{B}(x^*).
	\end{equation*}

	\ref{prop:basin_of_attraction_N_dim:item:2} The proof follows by symmetry from the previous.
	
	\ref{prop:basin_of_attraction_N_dim:item:3} This is a corollary of \ref{prop:basin_of_attraction_N_dim:item:1} and \ref{prop:basin_of_attraction_N_dim:item:2} or \cite[Proposition~4]{couthuresGlobalSynchronizationMultiagent2025}.

    The extension to intervals over $\Scal$ uses the same argument except that in that case $\{\bm{x} \in \Xcal \mid \dot{V}(\bm{x}) = 0\} = I^N \cap \Scal$ and is not reduced to a singleton. The results follow from LaSalle's invariance principle \cite[Theorem~4.4]{khalil_nonlinear_2002}.
\end{proof}

\begin{figure}[t]
	\centering
	\begin{subfigure}[t]{0.48\linewidth}
	\centering
	\begin{tikzpicture}[scale=0.65,
		vec/.style={
              decorate,
              decoration={
                markings,      
                mark=at position 0.25 with {\arrow{stealth}},
                mark=at position 0.85 with {\arrow{stealth}}
              }}]
		\begin{axis}[
			axis lines=middle,
			xlabel={$x$},
			ylabel={},
			xmin=-1.1, xmax=1.1,
			ymin=-1.1, ymax=1.1,
			xtick={-1,  -0.5,  0, 0.5, 1},
			ytick={-1, -0.5, 0, 0.5, 1},
			xticklabel style={font=\small, anchor=north},
			yticklabel style={font=\small, anchor=north west},
			legend pos=north west,
			legend cell align={left},
			grid=major,
			width=8cm,
			height=8cm,
		]

		\addplot[domain=-1:1, samples=2, dashed, color=gray, thick] {x};
		\addlegendentry{\small $y=x$}

		\addplot[domain=-1:1, samples=501, color=myblue, very thick] {tanh(2.5*x)};
		\addlegendentry{\small $s(x)$}
		
		\addlegendimage{only marks, mark=o, fill=white, draw=black}
		\addlegendentry{\small $\mathrm{Fix}^{\circ}(s)$}
		
		\addlegendimage{only marks, mark=*, fill=black, draw=black}
		\addlegendentry{\small $\mathrm{Fix}^{\bullet}(s)$}

		\node[circle, draw, fill=black, inner sep=1.5pt] at (axis cs: -0.98562, -0.98562) {};
		\node[circle, draw, fill=black, inner sep=1.5pt] at (axis cs: 0.98562, 0.98562) {};
		\node[circle, draw, fill=white, inner sep=1.5pt] at (axis cs: 0, 0) {};

		\end{axis}

		\begin{axis}[
			axis y line=none,
			axis x line*=bottom,
			xmin=-1.1, xmax=1.1,
			ymin=0, ymax=1, 
			xtick={-1, -0.5, 0, 0.5, 1},
			xticklabel style={font=\small, anchor=north},
			xlabel={$x$},
			width=8cm,
			height=2.5cm,
			yshift=-1cm, 
			axis line style={-}, 
		]
		
		\begin{pgfonlayer}{background}
			\fill[myblue!30, opacity=0.8] (axis cs: -1, 0) rectangle (axis cs: 0, 0.9);
			\fill[mygreen!30, opacity=0.8] (axis cs: 0, 0) rectangle (axis cs: 1, 0.9);
		\end{pgfonlayer}

		\draw[postaction=vec, color=black,  thick] (axis cs: -1, 0.3) -- (axis cs: -0.98562, 0.3);
		\draw[postaction=vec, color=black,  thick] (axis cs: 0, 0.3) -- (axis cs: 0.98562, 0.3);
		\draw[postaction=vec, color=black,  thick] (axis cs: 0, 0.3) -- (axis cs: -0.98562, 0.3);
		\draw[postaction=vec, color=black,  thick] (axis cs: 1, 0.3) -- (axis cs: 0.98562, 0.3);

		\node[circle, draw, fill=black, inner sep=2pt] at (axis cs: -0.98562, 0.3) {};
		\node[circle, draw, fill=black, inner sep=2pt] at (axis cs: 0.98562, 0.3) {};
		\node[circle, draw, fill=white, inner sep=2pt] at (axis cs: 0, 0.3) {};

		\node[text width=2cm, align=center, font=\small] at (axis cs: -0.5, 0.6) {$\mathcal{B}(-0.95)$};
		\node[text width=2cm, align=center, font=\small] at (axis cs: 0.5, 0.6) {$\mathcal{B}(0.95)$};
		
		\end{axis}

	\end{tikzpicture}
	\caption{$s(x) = \tanh(2.5x)$.}
	\label{fig:fs_dynamics_examples:a}
\end{subfigure}
\hfil\begin{subfigure}[t]{0.48\linewidth}
	\centering
	\begin{tikzpicture}[scale=0.65,
		vec/.style={
              decorate,
              decoration={
                markings,      
                mark=at position 0.25 with {\arrow{stealth}},
                mark=at position 0.85 with {\arrow{stealth}}
              }}]
		\begin{axis}[
			axis lines=middle,
			xlabel={$x$},
			ylabel={},
			xmin=-1.1, xmax=1.1,
			ymin=-1.1, ymax=1.1,
			xtick={-1,  -0.5,  0, 0.5, 1},
			ytick={-1, -0.5, 0, 0.5, 1},
			xticklabel style={font=\small, anchor=north},
			yticklabel style={font=\small, anchor=north west},
			legend pos=north west,
			legend cell align={left},
			grid=major,
			width=8cm,
			height=8cm,
		]

		\addplot[domain=-1:1, samples=2, dashed, color=gray, thick] {x};
		\addlegendentry{\small $y=x$}

		\addplot[domain=-1:1, samples=501, color=myblue, very thick] {x - min(sin(deg(2 * pi *x))/( 2 * pi) , sin(deg(2 * pi*x+pi))/( 2 * pi))};
		\addlegendentry{\small $s(x)$}

		\addlegendimage{only marks, mark=o, fill=white, draw=black}
		\addlegendentry{\small $\mathrm{Fix}^{\circ}(s)$}
		
		\addlegendimage{only marks, mark=*, fill=black, draw=black}
		\addlegendentry{\small $\mathrm{Fix}^{\bullet}(s)$}

		\pgfplotsinvokeforeach{-2,...,1}{
			\node[circle, draw, fill=white, inner sep=1.5pt] at (axis cs: #1/2, #1/2) {};
		}
		\node[circle, draw, fill=black, inner sep=1.5pt] at (axis cs: 1, 1) {};

		\end{axis}

		\begin{axis}[
			axis y line=none,
			axis x line*=bottom,
			xmin=-1.1, xmax=1.1,
			ymin=0, ymax=1, 
			xtick={-1, -0.5, 0, 0.5, 1},
			xticklabel style={font=\small, anchor=north},
			xlabel={$x$},
			width=8cm,
			height=2.5cm,
			yshift=-1cm, 
			axis line style={-}, 
		]
		
		\begin{pgfonlayer}{background}
			\fill[myblue!30, opacity=0.8] (axis cs: -1, 0) rectangle (axis cs: -0.5, 0.9);
			\fill[mygreen!30, opacity=0.8] (axis cs: -0.5, 0) rectangle (axis cs: 0, 0.9);
			\fill[myred!30, opacity=0.8]  (axis cs: 0, 0) rectangle (axis cs: 0.5, 0.9);
			\fill[myorange!30, opacity=0.8](axis cs: 0.5, 0) rectangle (axis cs: 1, 0.9);
		\end{pgfonlayer}

		\pgfplotsinvokeforeach{-2,...,1}{
			\draw[postaction=vec, color=black,  thick] (axis cs: {#1/2 + 0.02}, 0.3) -- (axis cs: {((#1+1)/2) - 0.02}, 0.3);
		}
		
		\pgfplotsinvokeforeach{-2,...,2}{
			\node[circle, draw, fill=white, inner sep=2pt] at (axis cs: #1/2, 0.3) {};
		}
		\node[circle, draw, fill=black, inner sep=2pt] at (axis cs: 1, 0.3) {};

		\node[text width=2cm, align=center, font=\small] at (axis cs: -0.75, 0.6) {$\mathcal{B}(-0.5)$};
		\node[text width=2cm, align=center, font=\small] at (axis cs: -0.25, 0.6) {$\mathcal{B}(0)$};
		\node[text width=2cm, align=center, font=\small] at (axis cs: 0.25, 0.6) {$\mathcal{B}(0.5)$};
		\node[text width=2cm, align=center, font=\small] at (axis cs: 0.75, 0.6) {$\mathcal{B}(1)$};
		
		\end{axis}

	\end{tikzpicture}
	\caption{$s(x) = x - \min(\sin(\alpha x),\allowbreak \sin(\alpha x + \pi)) / \alpha$, with $\alpha = 2\pi$.}
	\label{fig:fs_dynamics_examples:b}
\end{subfigure}
\caption{Illustration of the attraction basins for dynamics restricted to FSM (i.e., $\dot{x}\one = (s(x) - x) \one$) for two different signal functions $s(x)$. The first plots show the function $s(x)$ and the identity line $y=x$ and exhibits the fixed points of $s(x)$. The stable fixed points are represented by a black circle and the unstable ones by a white circle. The second plots show the basins of attraction of the different equilibria over $\Scal$.} \vspace{-.5cm}
\label{fig:fs_dynamics_examples}
\end{figure}

Proposition~\ref{prop:basin_of_attraction_N_dim} yields geometric insights into the system's global behavior. Since the system is cooperative and its trajectories are ordered, the one-dimensional dynamics on the FSM effectively governs the behavior within the entire $N$-dimensional hypercubes defined by successive fixed points. This implies that any trajectory starting within, or later entering, an invariant hypercube such as $(\underline{c}, \overline{c})^N$ is guaranteed to converge asymptotically to a stable equilibrium on the FSM.

This principle of convergence has a big impact on the potential location of any NFSE. If an equilibrium exists outside the FSM, it cannot lie within any of the invariant ``trapping regions" that guarantee convergence to a synchronized state (the green areas in Figure~\ref{fig:invariance_of_hypercubes_defined_by_fixed_points}). Therefore, a necessary condition for the existence of NFSE is that they must lie in the complement between these regions with respect to $\mathcal{X}\setminus [c_i, c_{i+1}]^N$ for $i\in \{1, \dots,M\}$, where $c_1 < c_2< \dots < c_M$ are the fixed points of $s$ ordered in increasing order (such as the complement of the blue area with respect to the green ones). This geometric constraint provides the foundation for our investigation into the conditions under which such states of persistent disagreement can emerge.

\subsection{Emergence of Non-Fully Synchronized Equilibria}

The analysis so far has focused on equilibria characterized by full synchronization. We now investigate the conditions under which states of persistent disagreement, or NFSE, can emerge. This also raises a fundamental question: can NFSE act as stable attractors for the dynamics? The following example provides a constructive proof of their existence and offers intuition for the mechanisms that enable them.

\begin{example}[Emergence of a Stable NFSE on a Line Graph]\label{example:non_sync_stability_N5}
	To provide a constructive proof that \emph{asymptotically stable} NFSE can exist, we consider a line graph $\mathcal{G}$ with $N=5$ agents and an odd sigmoidal signal function, $s(-x) = -s(x)$. This system possesses an anti-symmetric invariant subspace,
	\begin{equation*}
		\mathcal{M} = \{ \bm{x} \in \mathcal{X} \mid x_3=0, x_2 = -x_4, \text{ and } x_1 = -x_5 \}.
	\end{equation*} 
	Then, the set $\mathcal{M}$ is an invariant manifold under the dynamics \eqref{eq:dynamic}. Indeed, for any $\bm{x} \in \mathcal{M}$:
	\begin{align*}
		\dot{x}_1 &= s(x_2) - x_1 = -s(-x_2) + x_5 = -s(x_4) + x_5 = -\dot{x}_5\\
		\dot{x}_2 &= \frac{s(x_1)+s(x_3)}{2} - x_2 = -\left( \frac{s(-x_1)}{2} + x_4 \right) = -\dot{x}_4 \\
		\dot{x}_3 &= \frac{s(x_2)+s(x_4)}{2} - x_3 = \frac{s(x_2)+s(-x_2)}{2} - 0 = 0.
	\end{align*}
	Therefore, any trajectory starting on $\mathcal{M}$ remains on $\mathcal{M}$. The dynamics on this manifold reduces to the two-dimensional system in variables $(\tilde{x}_1, \tilde{x}_2) = (x_1, x_2)$ given by
	\begin{equation}\label{eq:dynamics_on_line_N5}
		\begin{cases}
			\dot{\tilde{x}}_1 =  s(\tilde{x}_2) - \tilde{x}_1, \\
			\dot{\tilde{x}}_2 =  \frac{1}{2}s(\tilde{x}_1) - \tilde{x}_2.
		\end{cases}
	\end{equation}
	
    The asymptotic stability of equilibria in this system depends critically on the gain at the origin, $K=s'(0)$. A local bifurcation analysis, visualized comprehensively in Figure~\ref{fig:bifurcation_diagram_N5}, reveals a two-stage process. First, at $K_{\mathrm{bif}} = \sqrt{2}$, the FSE becomes unstable on $\mathcal{M}$, giving rise to two symmetric branches of NFSE. However, these equilibria remain unstable in the full state space $\mathcal{X}$ due to dynamics transverse to the manifold. Second, at a higher critical gain $K_{\mathrm{stab}}$, a transversal stabilization bifurcation occurs, rendering the NFSE locally stable in $\mathcal{X}$.
\end{example}

\begin{figure}[t]
	\centering
	\includegraphics[width=\linewidth]{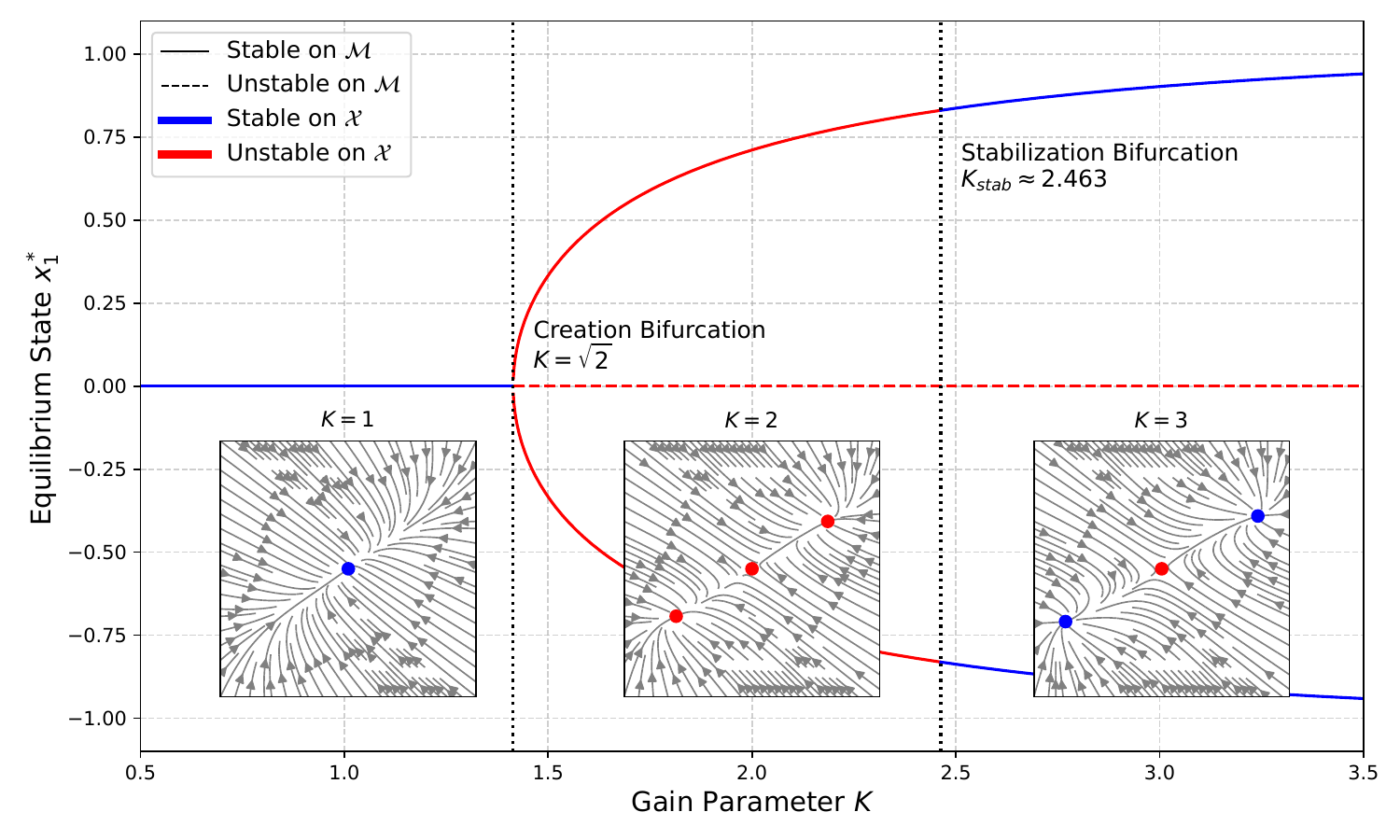}
    \caption{Bifurcation diagram for the $N=5$ line graph with $s(x)=\tanh(Kx)$ for $K \in [0.5,3.5]$, illustrating the distinction between stability on the invariant manifold $\mathcal{M}$ and stability in the full state space $\mathcal{X}$. The plot shows the equilibrium state $x_1^*$ as a function of the gain parameter $K$.
	\emph{Stability on $\mathcal{M}$ (lines):} The FSE at $x_1^*=0$ loses stability at $K=\sqrt{2}$ (dashed line), creating two branches of NFSE that are stable on the manifold (solid lines).
	\emph{Stability in $\mathcal{X}$ (red/blue colors):} The FSE becomes unstable in the full space at $K=\sqrt{2}$ (red dashed line). Crucially, the new NFSE branches are also born unstable in $\mathcal{X}$ (solid red lines), despite their stability on $\mathcal{M}$. They only become stable in the full space (solid blue lines) after the transversal stabilization bifurcation at $K_{\mathrm{stab}} \approx 2.463$.
    The inset phase portraits visualize the dynamics \eqref{eq:dynamics_on_line_N5} on $\mathcal{M}$ at three representative values of $K$.}\vspace{-0.3cm}
	\label{fig:bifurcation_diagram_N5}
\end{figure}

This example is not an isolated curiosity; it reveals the fundamental conditions necessary for any NFSE to exist. First, the loss of stability of the FSE at the origin was the critical event that allowed a new NFSE to emerge. This suggests a general principle: the existence of at least one non-stable fixed point ($c^\circ \in \mathrm{Fix}^\circ(s)$) is a prerequisite for breaking the system's tendency to synchronization. Second, the emergent NFSE was characterized by a clear partitioning of agent states around this unstable point. This hints that any NFSE must be structurally organized around such a ``splitting point." Finally, the shape of the signal function $s(x)$ was the ultimate cause; its steep slope at the origin created the instability. This shows the importance of the local properties of $s(x)$, specifically the existence of points where it generates a repelling rather than a restoring force. The following theorem formalizes and generalizes these important insights.

\subsection{Necessary Conditions for the Existence of NFSE} \label{subsec:non_sync_equilibrium}

\begin{theorem}[Necessary Conditions for NFSE]\label{thm:conditions_for_nfse}
    Let $\bm{x}^*$ be an NFSE of the dynamics \eqref{eq:dynamic}. i.e., $\bm{x}^*$ is an equilibrium that lies outside the FSM, $\bm{x}^* \notin \mathcal{S}$. Then:
	\begin{enumerate}
		\item The set of non-stable fixed points must be non-empty, i.e., $\mathrm{Fix}^{\circ}(s) \neq \emptyset$. \label{thm:conditions_for_nfse:itm:1}
        \item There must exist a non-stable fixed point $c^\circ \in \mathrm{Fix}^{\circ}(s)$ that splits the agents into two non-empty sets, $\mathcal{I}=\{i \mid x_i^* < c^\circ\}$ and $\mathcal{J}=\{j \mid x_j^* > c^\circ\}$. Furthermore, both of these sets must contain at least two agents: $|\mathcal{I}| \geq 2$ and $|\mathcal{J}| \geq 2$.\label{thm:conditions_for_nfse:itm:2}
        \item Let $x_m^* = \min_{i\in \Vcal} x_i^*$ and $x_M^* = \max_{i\in \Vcal} x_i^*$. The signal function $s$ must possess both a left-unstable fixed point, $c_\mathrm{L}^{\circ}$, and a right-unstable fixed point, $c_\mathrm{R}^{\circ}$, that lie strictly between the extremal states of the equilibrium: $x_m^* < c_\mathrm{L}^{\circ} \leq c_\mathrm{R}^{\circ} < x_M^*$.\label{thm:conditions_for_nfse:itm:3}
	\end{enumerate}
\end{theorem}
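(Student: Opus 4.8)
The plan is to prove the three statements in the order (iii), (i), (ii), since the first and second claims follow cheaply once the unstable fixed points located by (iii) are in hand. Throughout I write $g(x) := s(x) - x$ for the one-dimensional restoring force and set $x_m^* = \min_{i} x_i^*$, $x_M^* = \max_i x_i^*$; since $\bm{x}^*$ is an NFSE we have $x_m^* < x_M^*$. The two backbone facts I would establish first are: (a) the boundary sign conditions $g(x_M^*) \ge 0$ and $g(x_m^*) \le 0$, obtained by writing the equilibrium condition \eqref{eq:equilibrium_condition_scalar} at a maximal agent $M$, bounding each $s(x_j^*) \le s(x_M^*)$ by monotonicity of $s$ and row-stochasticity of $\bm{D}^{-1}\bm{A}$ (and symmetrically at a minimal agent); and (b) the conservation identity $\sum_{i} d_i\, g(x_i^*) = 0$, obtained by multiplying \eqref{eq:equilibrium_condition_scalar} by $d_i$, summing over $i$, and using the symmetry $\sum_i a_{ij} = d_j$. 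Fact (a) already yields, by the intermediate value theorem, a fixed point of $s$ inside $[x_m^*, x_M^*]$.

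The heart of the argument is to upgrade this to genuine one-sided instability strictly inside the interval. First I would rule out the degenerate possibility that every agent sits at a fixed point: if $g(x_i^*) = 0$ for all $i$, then the averaging identity at $M$ reduces to $x_M^* = \frac{1}{d_M}\sum_{j\in\mathcal{N}_M} x_j^*$, forcing every neighbor of $M$ to equal $x_M^*$, and by connectivity (Standing Assumption~\ref{ass:graph}) the whole network synchronizes, contradicting $\bm{x}^*\notin\mathcal{S}$. Hence, by (b) and $d_i>0$, there must exist agents with $g(x_p^*)>0$ and agents with $g(x_q^*)<0$. The same connectivity-plus-averaging mechanism also excludes $g$ being strictly single-signed on the open interval $(x_m^*, x_M^*)$: if $g>0$ throughout, then (a) and continuity force $g(x_m^*)=0$, the conservation identity forces every $g(x_i^*)=0$, hence every agent onto the two endpoints, and the averaging identity at $M$ again propagates synchronization; the case $g<0$ throughout is symmetric. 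This guarantees that $g$ takes values of both signs strictly inside $(x_m^*, x_M^*)$.

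With sign changes secured in the interior, I would construct the two fixed points explicitly. Setting $c_R^{\circ} := \sup\{x \in [x_m^*, x_M^*] \mid g(x)\le 0\}$ and $c_L^{\circ} := \inf\{x \in [x_m^*, x_M^*] \mid g(x)\ge 0\}$, continuity gives $g(c_R^{\circ}) = g(c_L^{\circ}) = 0$, so both are fixed points; moreover $g>0$ immediately to the right of $c_R^{\circ}$, making it right-unstable ($c_R^{\circ}\in\mathrm{Fix_R^{\circ}}(s)$), and $g<0$ immediately to the left of $c_L^{\circ}$, making it left-unstable ($c_L^{\circ}\in\mathrm{Fix_L^{\circ}}(s)$). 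The ordering $c_L^{\circ}\le c_R^{\circ}$ follows because any fixed point furnished by the intermediate value theorem lies in both defining sets, hence between their infimum and supremum. The strict interior bounds $x_m^*<c_L^{\circ}\le c_R^{\circ}<x_M^*$ are then forced by the no-single-sign and no-all-at-fixed-points conclusions of the previous step. This proves (iii), and statement (i) is then immediate, as $c_L^{\circ}, c_R^{\circ}\in\mathrm{Fix}^{\circ}(s)\ne\emptyset$.

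Finally, for (ii) I would take the splitting point $c^{\circ} := c_L^{\circ}$ (a non-stable fixed point with $x_m^*<c^{\circ}<x_M^*$), so that the minimal and maximal agents place $\mathcal{I}=\{i\mid x_i^*<c^{\circ}\}$ and $\mathcal{J}=\{j\mid x_j^*>c^{\circ}\}$ both nonempty. To obtain $|\mathcal{J}|\ge 2$, suppose $\mathcal{J}=\{M\}$; every neighbor of $M$ then satisfies $x_k^*\le c^{\circ}$, whence $s(x_k^*)\le s(c^{\circ}) = c^{\circ}$ and the equilibrium condition gives $x_M^*\le c^{\circ}$, contradicting $M\in\mathcal{J}$; the bound $|\mathcal{I}|\ge 2$ is symmetric. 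The main obstacle I anticipate is the strict interior placement of $c_L^{\circ}$ and $c_R^{\circ}$: when $s$ has flat stretches or oscillates near $x_m^*$ or $x_M^*$, the naive supremum and infimum can collapse onto an endpoint, and it is precisely the conservation identity together with connectivity, forbidding both the all-at-fixed-points and the single-sign configurations, that pins the instability strictly inside the interval. As a cross-check for this step, one may invoke the trapping-region machinery: Corollary~\ref{cor:invariance_of_hypercubes_defined_by_fixed_points} and Proposition~\ref{prop:basin_of_attraction_N_dim} show that a hypercube bounded by a right-unstable and a left-unstable fixed point with only stable fixed points between them lies in the basin of a single synchronized state, so an NFSE cannot be enclosed in such a region, which re-derives the existence of an interior non-stable fixed point.
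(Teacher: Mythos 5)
Your route differs from the paper's in two respects that are genuinely attractive: the conservation identity $\sum_i d_i\,g(x_i^*)=0$ is a clean way to produce agents on both sides of the identity line, and deriving items (i) and (ii) as corollaries of (iii) (your no-self-loop neighbor argument for $|\mathcal{I}|,|\mathcal{J}|\ge 2$ is essentially the paper's and is correct) bypasses the basin-of-attraction machinery the paper invokes for those items. However, there is a genuine gap in your proof of (iii), exactly at the obstacle you flag: the strict interior placement of $c_{\mathrm{L}}^{\circ}$ and $c_{\mathrm{R}}^{\circ}$, and the repair you propose does not close it.

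Concretely, suppose the minimal state is itself a fixed point of $s$, i.e.\ $g(x_m^*)=0$; this case is admissible (it occurs when $s$ has a flat stretch above $x_m^*$) and is treated separately in the paper. Then $x_m^*$ itself belongs to the set $\{x\in[x_m^*,x_M^*]\mid g(x)\ge 0\}$, so your definition returns $c_{\mathrm{L}}^{\circ}=x_m^*$ exactly: the required strict inequality $x_m^*<c_{\mathrm{L}}^{\circ}$ fails, and left-instability cannot even be tested, since it concerns $g$ on a punctured left-neighborhood of $x_m^*$, about which nothing is known. The symmetric failure occurs for $c_{\mathrm{R}}^{\circ}$ when $g(x_M^*)=0$. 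The tools you invoke to fix this---the conservation identity and the exclusion of the all-at-fixed-points and single-sign configurations---are global statements: they locate points with $g>0$ and with $g<0$ somewhere in the interval, but they give no control on the sign of $g$ in a one-sided neighborhood of the endpoints, which is what the infimum/supremum construction needs to detach from them. The missing ingredient is local: when $g(x_m^*)=0$, the equilibrium condition at a minimal agent reads $x_m^*=s(x_m^*)=\frac{1}{d_m}\sum_j a_{mj}s(x_j^*)$ with every $s(x_j^*)\ge s(x_m^*)=x_m^*$, forcing $s(x_j^*)=x_m^*$ for all $j\in\mathcal{N}_m$; connectivity and $\bm{x}^*\notin\mathcal{S}$ guarantee that some minimal agent has a neighbor $k$ with $x_k^*>x_m^*$, whence $s\equiv x_m^*$ on $[x_m^*,x_k^*]$ and $g(x)=x_m^*-x<0$ for $x\in(x_m^*,x_k^*]$. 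Only after anchoring the infimum at such a point $x'_m>x_m^*$ with $g(x'_m)<0$ strictly (and symmetrically at the top) does the construction yield fixed points strictly inside $(x_m^*,x_M^*)$ with the claimed one-sided instabilities. This is precisely how the paper closes these degenerate cases, and it is the step your proposal is missing.
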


\begin{proof}
	\ref{thm:conditions_for_nfse:itm:1} We will proceed by contradiction. Assume there exists an NFSE $\bm{x}^* \notin \mathcal{S}$ while $\mathrm{Fix}^{\circ}(s)$ is empty. 
	
	If $\mathrm{Fix}^{\circ}(s)$ is empty, then $s$ neither reaches the identity line from below nor exits it above. By the continuity of $s$, this implies there is either a unique fixed point $c^{\bullet}$ of $s$ or a continuum of them (i.e., there exists an interval $I \subset [-1,1]$ over which $s$ is an exact estimation). The same reasoning applies to both situations by taking $c^{\bullet}$ as the minimal fixed point in the case of a continuum. Then, $x \mapsto s(x-c^{\bullet}) - c^{\bullet}$ is an underestimation. By \cite[Proposition~3]{couthuresGlobalSynchronizationMultiagent2025}, if $s$ is an underestimation over $[-1,1]$, it implies that $\mathcal{S}$ is globally asymptotically stable, thus there is no non-synchronized equilibrium contradicting the premise.

	\ref{thm:conditions_for_nfse:itm:2} Let $\bm{x}^* \notin \mathcal{S}$ be given. Let $x_m^* = \min_{i \in \Vcal} x_i^*$ and $x_M^* = \max_{i \in \Vcal} x_i^*$. We first show by contradiction that there must exist some $c^\circ \in \mathrm{Fix}^{\circ}(s)$ such that $x_m^* < c^\circ < x_M^*$.

	Suppose that for all $c^\circ \in \mathrm{Fix}^{\circ}(s)$, $c^\circ \leq x_i^* \leq 1 $ for all $i\in \Vcal$. This must be true for their maximum $c^\circ_{\max} = \max(\mathrm{Fix}^\circ(s))$, then $c_{\max}^{\circ} \leq x_i \leq 1$. By assumption, $\bm{x}^*\notin \Scal$, thus we must have $x_M^* > c_{\max}^{\circ}$ and $x_m < 1$. i.e., $\bm{x}^* \in  [c^\circ_{\max}, 1]^N \setminus\{c^\circ_{\max} \one, \one \}$. Now there are two possible situations:
    \begin{itemize}[wide]
        \item $\mathrm{Fix}^{\bullet}(s) \cap (c^\circ_{\max} ,1] = \emptyset$. Then, by continuity of $s$ we must have $s(x) \leq x$ for all $x > c_{\max}^{\circ}$. Since $c_{\max}^{\circ}$ is the maximum of $\mathrm{Fix}^{\circ}(s)$, the inequality is strict, otherwise it would exist another unstable fixed point of $s$ greater than $c_{\max}^{\circ}$. i.e., $s(x) < x$ for all $x > c_{\max}^{\circ}$in this case, from Proposition~\ref{prop:basin_of_attraction_N_dim}-\ref{prop:basin_of_attraction_N_dim:item:2}, one has $\bm{x}^* \in \mathcal{B}(c_{\max}^{\circ} \one)$.
        \item $\mathrm{Fix}^{\bullet}(s) \cap (c^\circ_{\max} ,1] \neq \emptyset$. If this set is reduced to a singleton $c^\bullet$ in this case by Proposition~\ref{prop:basin_of_attraction_N_dim}-\ref{prop:basin_of_attraction_N_dim:item:3} one has $[c^\circ_{\max}, 1]^N \setminus \{c_{\max}^{\circ}\one, \one\} \subseteq \mathcal{B}(c^\bullet\one)$ and implies $\bm{x}^* \in \mathcal{B}(c^\bullet\one)$. If this set is an interval $I$, the reasoning is the same, using the extension of Proposition~\ref{prop:basin_of_attraction_N_dim}-\ref{prop:basin_of_attraction_N_dim:item:3} to intervals and $\bm{x}^* \in \mathcal{B}(I^N \cap \Scal)$. i.e., $\bm{x}^*$ is in the attraction basin of a continuum of equilibria.
    \end{itemize}
    In both cases, $\bm{x}^*$ is in the basin of attraction of FSE equilibria. Moreover, $\bm{x}^* \in \mathcal{B}(\bm{x}^*)$ by definition. Since $\bm{x}^*$ cannot be in two different attraction basins, this contradicts the assumption that for all $c^\circ \in \mathrm{Fix}^{\circ}(s)$, $x_i^* \geq c^\circ$ for all $i \in \Vcal$. Thus, there exists $c^\circ \in \mathrm{Fix}^{\circ}(s)$ such that $x_m^* < c^\circ < x_M^*$. Using \eqref{eq:equilibrium_condition_scalar}, one obtains that:
	\begin{equation*}
		x_m^* = \frac{1}{d_m}\sum_{j=1}^N a_{mj}s(x_j^*) < c^\circ.
	\end{equation*}
	Since $x_m^*$ is a convex combination of the family $(s(x_i^*))_{i\in \Ncal_m}$, this implies there must exist some $j \in \mathcal{N}_m$ with $s(x_j^*) < c^\circ$. Recalling that $s$ is non-decreasing one gets $s(x_m) \leq s(x_j) < c^\circ$.  It is clear that $j,m \in\mathcal{I} = \{ i \in \Vcal \mid s(x_i) < c^\circ\}$. Moreover, since $j\in \mathcal{N}_m$ and $m\notin \mathcal{N}_m$ (since there is self loop in the graph $\Gcal$), this implies, $j\neq m$ and we conclude that $\mathcal{I}$ has at least two elements. Similarly, since $x_M^* > c^\circ$, the same reasoning applies to show that $s(x_M) > c^{\circ}$, and that  $\mathcal{J} = \{ j \in \Vcal \mid s(x_j) > c^\circ\}$ has at least two elements.

    \ref{thm:conditions_for_nfse:itm:3} Let $\bm{x}^*$ be an NFSE, so $x_m^* < x_M^*$. Let $g(x) = s(x)-x$. The equilibrium condition \eqref{eq:equilibrium_condition_scalar} yields:
    \begin{equation*}
        x_m^* = \frac{1}{d_m}\sum_{j=1}^N a_{mj}s(x_j^*) \geq \frac{1}{d_m}\sum_{j=1}^N a_{mj}s(x_m^*) = s(x_m^*).
    \end{equation*}
    We then have $x_m^* \geq s(x_m^*)$ implying $g(x_m^*) \le 0$. A symmetric argument yields $g(x_M^*) \ge 0$. If the inequalities are strict, there exist $x'_m, x'_M$ with $x_m^* \le x'_m < x'_M \le x_M^*$ such that $g(x'_m) < 0$ and $g(x'_M) > 0$.

    Otherwise, if $g(x_m^*) = 0$, since $x_j^* \geq x_m^*$ for all $j \in \Ncal_m$ and $s$ is non-decreasing we must have $s(x_j^*) \geq x_m^*$. Moreover, as $x_m^*$ is a convex combination of $(s(x_j^*))_{j\in \Ncal_m}$, we must have $s(x_j^*) = x_M^*$ for all $j \in \Ncal_m$. Since $\bm{x}^*$ is an NFSE on a connected graph, there exists $k \in \Ncal_m$ with $x_k^* > x_m^*$ (this statement must be true for at least one $m^* \in \arg\max_{i\in \Vcal}x_i^*$, otherwise it imply $\bm{x}\in \Scal$ by propagation of equality to all the graph), implying $s$ is constant on $[x_m^*, x_k^*]$. Thus, for any $x \in (x_m^*, x_k^*]$, $g(x)=x_m^*-x < 0$. Symmetrically, if $g(x_M^*) = 0$, we can find $x$ near and below $x_M^*$ where $g(x) > 0$. Hence, there exist $x'_m, x'_M$ with $x_m^* \le x'_m < x'_M \le x_M^*$ such that $g(x'_m) < 0$ and $g(x'_M) > 0$.

    Let $c_\mathrm{L}^{\circ} := \inf\{x \in (x'_m, x'_M) \mid g(x) \ge 0\}$ and $c_\mathrm{R}^{\circ} := \sup\{x \in (x'_m, x'_M) \mid g(x) \le 0\}$. By continuity of $g$, both are fixed points of $s$. i.e., $g(c_\mathrm{L}^{\circ}) = 0$ and $g(c_\mathrm{R}^{\circ}) =0$. The construction implies that for some $\delta > 0$, $g(x)<0$ on $(c_\mathrm{L}^{\circ}-\delta, c_\mathrm{L}^{\circ})$ and $g(x)>0$ on $(c_\mathrm{R}^{\circ}, c_\mathrm{R}^{\circ}+\delta)$, so $c_\mathrm{L}^{\circ} \in \mathrm{Fix_L^\circ}(s)$ and $c_\mathrm{R}^{\circ} \in \mathrm{Fix_R^\circ}(s)$. By definition, the strict inequalities $x_m^* < c_\mathrm{L}^{\circ} < x_M^*$ and $x_m^* < c_\mathrm{R}^{\circ} < x_M^*$ hold.

    Finally, if $c_\mathrm{L}^{\circ} > c_\mathrm{R}^{\circ}$, then for any $x \in (c_\mathrm{R}^{\circ}, c_\mathrm{L}^{\circ})$, the definitions imply both $g(x) > 0$ and $g(x) < 0$, which would be a contradiction. Thus, $c_\mathrm{L}^{\circ} \le c_\mathrm{R}^{\circ}$, which completes the proof.
\end{proof}

Theorem~\ref{thm:conditions_for_nfse} provides a connection between the properties of the scalar function $s$ and the emergence of disagreement in the $N$-dimensional system.
Item \ref{thm:conditions_for_nfse:itm:3} provides the most immediate practical check: for disagreement to be possible, the signal function \emph{must} possess both left- and right-unstable fixed points. This allows one to rule out NFSE for entire classes of functions by visual inspection. For instance, the signal function in Figure~\ref{fig:fs_dynamics_examples:a}, with its unstable fixed point at the origin, satisfies this condition and can therefore support NFSE (as demonstrated in Example~\ref{example:non_sync_stability_N5}). In contrast, the function in Figure~\ref{fig:fs_dynamics_examples:b}, while possessing right-unstable fixed points, has no left-unstable fixed points. Consequently, Theorem~\ref{thm:conditions_for_nfse} guarantees that NFSE \emph{cannot exist} for any system governed by this signal function, regardless of the network topology.

Furthermore, the structural requirement from item \ref{thm:conditions_for_nfse:itm:2}, that any disagreement must involve at least two agents on either side of a splitting point, has a direct consequence on the minimum network size required to support such a state.

\begin{corollary}\label{coro:N_leq_3}
    If the network size $N \leq 3$, then all equilibria of \eqref{eq:dynamic} are FSE.
\end{corollary}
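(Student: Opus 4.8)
The plan is to argue by contraposition, leaning entirely on the structural necessary condition already established in Theorem~\ref{thm:conditions_for_nfse}. The key observation is that item~\ref{thm:conditions_for_nfse:itm:2} of that theorem does more than assert the existence of a splitting fixed point: it encodes a hard lower bound on the number of agents that any NFSE can involve.

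Concretely, I would suppose for contradiction that the system admits an NFSE $\bm{x}^*$ while $N \leq 3$. By Theorem~\ref{thm:conditions_for_nfse}-\ref{thm:conditions_for_nfse:itm:2}, there exists a non-stable fixed point $c^\circ \in \mathrm{Fix}^{\circ}(s)$ together with the two sets $\mathcal{I} = \{i \mid x_i^* < c^\circ\}$ and $\mathcal{J} = \{j \mid x_j^* > c^\circ\}$, each required to contain at least two agents, $|\mathcal{I}| \geq 2$ and $|\mathcal{J}| \geq 2$. Since no agent can simultaneously satisfy $x_i^* < c^\circ$ and $x_i^* > c^\circ$, the sets $\mathcal{I}$ and $\mathcal{J}$ are disjoint subsets of $\Vcal$. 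Counting elements then yields $N = |\Vcal| \geq |\mathcal{I}| + |\mathcal{J}| \geq 4$, which contradicts $N \leq 3$. Hence no NFSE can exist, and by the dichotomy of Section~\ref{sec:equilibria} every equilibrium of \eqref{eq:dynamic} lies on the FSM, i.e., is an FSE.

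I do not expect any genuine obstacle here, since the cardinality bound is already packaged inside item~\ref{thm:conditions_for_nfse:itm:2}; the corollary is essentially a counting statement. The only point requiring a moment of care is to note that the inequality $N \geq 4$ survives even when some agents satisfy $x_i^* = c^\circ$ exactly: such agents belong to neither $\mathcal{I}$ nor $\mathcal{J}$, so they only enlarge $N$ and can never push the count below the four forced by the two disjoint sets.
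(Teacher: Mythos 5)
Your proof is correct and follows exactly the route the paper intends: the corollary is stated as an immediate consequence of Theorem~\ref{thm:conditions_for_nfse}-\ref{thm:conditions_for_nfse:itm:2}, and the disjointness-plus-cardinality count $N \geq |\mathcal{I}| + |\mathcal{J}| \geq 4$ is precisely the argument the authors have in mind (the paper omits an explicit proof). Your remark about agents with $x_i^* = c^\circ$ falling in neither set is a worthwhile clarification that only strengthens the bound.
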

While Corollary~\ref{coro:N_leq_3} highlights a size constraint, the network's topology plays an equally important role, as we now show.

\subsection{On the Critical Role of Network Topology}\label{sec:signal_and_graph_condition}

Theorem~\ref{thm:conditions_for_nfse} established that the existence of NFSE depends on the shape of the signal function $s$. We now show that the network topology plays an equally important role. This section demonstrates that for any given network, there exists a class of signal functions capable of generating disagreement, and that this capability is directly linked to the network's algebraic connectivity.

We begin by ordering the eigenvalues of the row-stochastic matrix $\bm{D}^{-1} \bm{A}$ as $-1 \leq \lambda_1 \leq \dots \leq \lambda_{N-1} < \lambda_N = 1$. Since the graph is connected, we have $\lambda_{N-1} < 1$. The key metric for our analysis is the second-largest eigenvalue, $\lambda_{N-1}$, which governs the convergence speed of linear consensus \cite{MG10}. Its value is related to the graph's algebraic connectivity, $\mu = 1 - \lambda_{N-1}$. A value of $\lambda_{N-1}$ close to $1$ indicates a sparsely connected or string-like (path-like) structure, while a value close to $0$, or negative, indicates a densely connected graph.

It is a known result that for highly connected graphs such as complete or complete bipartite graphs (where $\lambda_{N-1} \leq 0$), NFSE cannot exist for any non-decreasing signal function \cite{couthuresGlobalSynchronizationMultiagent2025}. We can state the following lemma.  

\begin{lemma}[Highly connected graphs imply FSE]\label{lemma:char_equilibria_dense}
    Let the graph $\Gcal$ be such that $\lambda_{N-1} < 0$. Then, any equilibrium $\bm{x}^*$ of the dynamics \eqref{eq:dynamic} must be an FSE.
\end{lemma}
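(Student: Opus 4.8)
The plan is to exploit the self-adjointness of $\bm{D}^{-1}\bm{A}$ with respect to the weighted inner product $\langle \bm{u},\bm{v}\rangle_{\bm{D}} := \bm{u}^\top \bm{D}\bm{v}$. Indeed, by symmetry of $\bm{A}$ one has $\langle \bm{D}^{-1}\bm{A}\bm{u},\bm{v}\rangle_{\bm{D}} = \bm{u}^\top\bm{A}\bm{v} = \langle \bm{u},\bm{D}^{-1}\bm{A}\bm{v}\rangle_{\bm{D}}$, so $\bm{D}^{-1}\bm{A}$ admits a $\bm{D}$-orthonormal eigenbasis with real eigenvalues $\lambda_1 \le \dots \le \lambda_N = 1$. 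The top eigenvalue $1$ is carried by $\one$ (since $\bm{D}^{-1}\bm{A}\one = \one$), and its $\bm{D}$-orthogonal complement $\one^{\perp_{\bm{D}}} = \{\bm{w}\mid \one^\top\bm{D}\bm{w}=0\}$ carries the remaining eigenvalues, all bounded above by $\lambda_{N-1} < 0$.

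Next I would decompose the equilibrium. Writing $\bm{y} := \bm{s}(\bm{x}^*)$, split $\bm{x}^* = \bar x\,\one + \bm{u}$ and $\bm{y} = \bar y\,\one + \bm{v}$ into their $\bm{D}$-weighted means $\bar x,\bar y$ and fluctuations $\bm{u},\bm{v}\in\one^{\perp_{\bm{D}}}$. Because the row sums of $\bm{A}$ equal the degrees, $\one^\top\bm{A} = \one^\top\bm{D}$, so $\bm{D}^{-1}\bm{A}\bm{v}$ again lies in $\one^{\perp_{\bm{D}}}$. Substituting into the equilibrium condition $\bm{x}^* = \bm{D}^{-1}\bm{A}\bm{y}$ and matching the components along $\one$ and along $\one^{\perp_{\bm{D}}}$ yields $\bar x = \bar y$ and, crucially, $\bm{u} = \bm{D}^{-1}\bm{A}\bm{v}$.

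The heart of the argument is to bound the scalar $\langle \bm{u},\bm{v}\rangle_{\bm{D}}$ from both sides. On one hand, $\langle \bm{u},\bm{v}\rangle_{\bm{D}} = (\bm{D}^{-1}\bm{A}\bm{v})^\top\bm{D}\bm{v} = \bm{v}^\top\bm{A}\bm{v} = \langle \bm{D}^{-1}\bm{A}\bm{v},\bm{v}\rangle_{\bm{D}} \le \lambda_{N-1}\norm{\bm{v}}_{\bm{D}}^2$ by the Rayleigh bound on $\one^{\perp_{\bm{D}}}$, which is strictly negative whenever $\bm{v}\neq\bm{0}$ since $\lambda_{N-1}<0$. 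On the other hand, $\langle \bm{u},\bm{v}\rangle_{\bm{D}} = \sum_i d_i (x_i^* - \bar x)(s(x_i^*) - \bar y)$ is a $\bm{D}$-weighted covariance between $x_i^*$ and $s(x_i^*)$, and via the identity $\sum_i d_i(x_i^*-\bar x)(s(x_i^*)-\bar y) = \tfrac{1}{2\sum_k d_k}\sum_{i,j} d_i d_j (x_i^*-x_j^*)(s(x_i^*)-s(x_j^*))$ it is nonnegative, because every summand is nonnegative when $s$ is non-decreasing. The two bounds are compatible only if $\bm{v}=\bm{0}$, i.e. $\bm{s}(\bm{x}^*) = \bar y\,\one$; plugging back gives $\bm{x}^* = \bm{D}^{-1}\bm{A}(\bar y\,\one) = \bar y\,\one \in \mathcal{S}$, so $\bm{x}^*$ is an FSE.

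The main obstacle to avoid is the temptation to argue by contraction in the $\bm{D}$-norm directly. Since $\lambda_1$ can equal $-1$ (e.g. for bipartite graphs), the operator norm of $\bm{D}^{-1}\bm{A}$ on $\one^{\perp_{\bm{D}}}$ equals $\max(\abs{\lambda_1},\abs{\lambda_{N-1}})$ and need not be smaller than $1$, so a norm comparison such as $\norm{\bm{u}}_{\bm{D}}\le\norm{\bm{v}}_{\bm{D}}$ carries no useful information. The resolution is to pair $\bm{u}$ with $\bm{v}$ rather than compare their magnitudes: the signed quantity $\langle \bm{u},\bm{v}\rangle_{\bm{D}}$ is forced negative by $\lambda_{N-1}<0$ alone, independently of $\lambda_1$, while monotonicity of $s$ forces it nonnegative. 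I expect verifying the covariance identity and confirming that $\bm{D}^{-1}\bm{A}\bm{v}$ remains in $\one^{\perp_{\bm{D}}}$ to be the only routine computations; the rest of the argument is structural.
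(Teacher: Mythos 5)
Your argument is correct, and it reaches the conclusion by a slightly different route than the paper. Both proofs work in the $\bm{D}$-weighted inner product, project onto the disagreement subspace $\one^{\perp_{\bm{D}}}$, and play a monotonicity-induced nonnegativity against the negativity of the spectrum of $\bm{P}=\bm{D}^{-1}\bm{A}$ there; the difference lies in which bilinear form carries the contradiction. The paper starts from the Laplacian form $\bm{y}^\top \bm{L}\bm{s}(\bm{y}) = \tfrac{1}{2}\sum_{i,j}a_{ij}(y_i-y_j)(s(y_i)-s(y_j)) \ge 0$, i.e.\ $\langle \bm{x}^*_\perp,(\bm{I}-\bm{P})\bm{s}(\bm{x}^*)_\perp\rangle_{\bm{D}}\ge 0$, then uses $\lambda_i<0$ for $i\le N-1$ to \emph{invert} $\bm{P}$ on the disagreement subspace and lands on the negative-definite form $\langle \bm{x}^*_\perp,(\bm{P}^{-1}-\bm{I})\bm{x}^*_\perp\rangle_{\bm{D}}\ge 0$. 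You instead bound the pairing $\langle \bm{u},\bm{v}\rangle_{\bm{D}} = \langle \bm{P}\bm{v},\bm{v}\rangle_{\bm{D}}$ directly by the Rayleigh quotient, and obtain the opposing nonnegativity from the degree-weighted covariance identity (a ``complete-graph'' analogue of the Laplacian form, with weights $d_i d_j$ in place of $a_{ij}$). Your version buys two small things: it never needs $\bm{P}$ to be invertible on $\one^{\perp_{\bm{D}}}$ (the hypothesis $\lambda_{N-1}<0$ enters only through the sign of the Rayleigh bound), and the final contradiction is between two scalars rather than between a quadratic form and its definiteness, which makes the ``only if $\bm{v}=\bm{0}$'' step immediate. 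The paper's version has the advantage of reusing the inequality $\bm{y}^\top\bm{L}\bm{s}(\bm{y})\ge 0$ that already appears elsewhere in its toolkit. All the computations you flag as routine (self-adjointness of $\bm{P}$ in $\langle\cdot,\cdot\rangle_{\bm{D}}$, invariance of $\one^{\perp_{\bm{D}}}$ under $\bm{P}$ via $\bm{A}\one=\bm{D}\one$, and the covariance identity) do check out.
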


\begin{proof}
    Let $\bm{P} = \bm{D}^{-1}\bm{A}$. An equilibrium state $\bm{x}^*$ satisfies \eqref{eq:equilibrium_condition}. i.e., $\bm{x}^*=\bm{P}\bm{s}(\bm{x}^*)$. The non-decreasing property of the signal function $s$ implies that $\bm{y}^\top\bm{L}\bm{s}(\bm{y}) =\bm{y}^\top(\bm{D}-\bm{A})\bm{s}(\bm{y}) \geq 0$ for any $\bm{y} \in \Xcal$. This is equivalent to the $\bm{D}$-weighted inner product inequality $\langle \bm{y}, (\bm{I}-\bm{P})\bm{s}(\bm{y}) \rangle_{\bm{D}} \ge 0$.

    Let $\bm{x}^*_\perp$ and $\bm{s}(\bm{x}^*)_\perp$ be the projection of $\bm{x}^*$ and $\bm{s}(\bm{x}^*)$ onto the disagreement subspace (in the $\bm{D}$-orthogonal complement of $\Scal$), respectively. At equilibrium, the inequality reduces to this subspace, yielding $\langle \bm{x}^*_\perp, (\bm{I}-\bm{P})\bm{s}(\bm{x}^*)_\perp \rangle_{\bm{D}} \geq 0$. Projecting the equilibrium condition itself gives $\bm{x}^*_\perp = \bm{P} \bm{s}(\bm{x}^*)_\perp$. By hypothesis, all eigenvalues of $\bm{P}$ on this subspace, $\{\lambda_i\}_{i=1}^{N-1}$, are strictly negative, so $\bm{P}$ is invertible here. Substituting $\bm{s}(\bm{x}^*)_\perp = \bm{P}^{-1}\bm{x}^*_\perp$ into the inequality gives:
    \begin{equation*}
		\langle \bm{x}^*_\perp, (\bm{P}^{-1}-\bm{I})\bm{x}^*_\perp \rangle_{\bm{D}} \ge 0.
	\end{equation*}
    The operator $\bm{P}^{-1}-\bm{I}$ is strictly negative definite on the disagreement subspace, as its eigenvalues $1/\lambda_i - 1$ are all less than $-1$. This quadratic form can only be non-negative if $\bm{x}^*_\perp = \bm{0}$, proving that any equilibrium $\bm{x}^*$ must be an FSE.
\end{proof}

This result formally establishes that for this class of highly connected graphs, disagreement is structurally impossible. We therefore focus on the more general case where $\lambda_{N-1} > 0$. The following proposition provides a sufficient condition, linking the maximal nonlinearity (quantified by $K$) and network connectivity (quantified by $\lambda_{N-1}$), for the existence of a signal function that generates NFSE.

\begin{proposition}\label{prop:non_sync_equilibrium}
	Let $K >0$ and $N \geq 4$. For any graph $\Gcal$, such that $\lambda_{N-1} > 0$. The following statements hold:
	\begin{enumerate}
        \item If $ K \lambda_{N-1} = 1$, there exists a signal function $s$ verifying Assumption~\ref{ass:signal}  such that the dynamics \eqref{eq:dynamic} possesses a continuum of NFSE, in $\mathrm{Span}(\bm{v}_{N-1}) \cap \Xcal$, where $\bm{v}_{N-1}$ is the eigenvector associated to $\lambda_{N-1}$.\label{prop:non_sync_equilibrium_1}
		\item If $K \lambda_{N-1} > 1$, there exists a signal function $s$ verifying Assumption~\ref{ass:signal} (more specifically $s$ is $K$-Lipschitz) such that dynamics \eqref{eq:dynamic} possesses NFSE.\label{prop:non_sync_equilibrium_2}
	\end{enumerate}
	Consequently, if $K \lambda_{N-1} \geq 1$, the fully synchronized manifold $\mathcal{S}$ is not globally asymptotically stable for the dynamics \eqref{eq:dynamic} across all admissible signal functions $s$.
\end{proposition}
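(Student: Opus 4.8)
The plan is to prove both existence claims by explicit construction; the concluding ``consequently'' assertion is then immediate, since any NFSE is an equilibrium $\bm{x}^*\notin\mathcal{S}$ whose constant trajectory never approaches $\mathcal{S}$, precluding global asymptotic stability of $\mathcal{S}$. The structural fact I would exploit is that $\bm{P}:=\bm{D}^{-1}\bm{A}$ is self-adjoint for $\langle\cdot,\cdot\rangle_{\bm{D}}$, hence admits a $\bm{D}$-orthonormal eigenbasis with real eigenvalues $-1\le\lambda_1\le\cdots\le\lambda_{N-1}<\lambda_N=1$ and $\bm{v}_N\propto\one$. Writing $\lambda:=\lambda_{N-1}\in(0,1)$ and fixing a unit eigenvector $\bm{v}:=\bm{v}_{N-1}$ with $\bm{P}\bm{v}=\lambda\bm{v}$, the $\bm{D}$-orthogonality $\sum_i d_i v_i=0$ forces $\bm{v}$ to have entries of both signs, so $t\bm{v}\notin\mathcal{S}$ for every $t\neq0$. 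This is what makes the candidate equilibria genuine NFSE, and it is consistent with the size constraint $N\ge4$ (for $N\le3$ one checks $\lambda_{N-1}\le0$, in accordance with Corollary~\ref{coro:N_leq_3}).

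For item~\ref{prop:non_sync_equilibrium_1}, since $K\lambda=1$ with $\lambda<1$ we have $K=1/\lambda>1$, so $1/K<1$. I would take the saturated ramp $s(x)=\max(-1,\min(1,Kx))$, which is non-decreasing, $K$-Lipschitz, maps $[-1,1]$ into itself and fixes $0$, hence satisfies Assumption~\ref{ass:signal}. Setting $t_0:=1/(K\max_i|v_i|)$, for every $|t|\le t_0$ each coordinate $tv_i$ lies in $[-1/K,1/K]$ where $s$ is exactly linear; thus $\bm{s}(t\bm{v})=Kt\bm{v}$, and the equilibrium condition \eqref{eq:equilibrium_condition} reads $\bm{P}\bm{s}(t\bm{v})=Kt\lambda\bm{v}=t\bm{v}$ using $K\lambda=1$. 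Every $t\bm{v}$ with $0<|t|\le t_0$ is therefore an NFSE, yielding the claimed continuum inside $\mathrm{Span}(\bm{v})\cap\Xcal$. The only care needed here is routine: verifying simultaneously all parts of Assumption~\ref{ass:signal} and that the scaled eigenvector stays in the linear regime and in $\Xcal$.

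For item~\ref{prop:non_sync_equilibrium_2}, the cleanest route reuses item~\ref{prop:non_sync_equilibrium_1} with a reduced effective slope: since $K\lambda>1$ gives $1/\lambda<K$, the ramp $s(x)=\max(-1,\min(1,x/\lambda))$ is $(1/\lambda)$-Lipschitz, hence $K$-Lipschitz and admissible, and the computation above already exhibits the continuum $\{t\bm{v}\}$ of NFSE. This proves the statement as written. If one instead insists on a witness of Lipschitz constant exactly $K$, namely the threshold function $s(x)=\max(-1,\min(1,Kx))$, then the line $\mathrm{Span}(\bm{v})$ no longer carries equilibria, because in the linear regime $\bm{P}\bm{s}(t\bm{v})=Kt\lambda\bm{v}\neq t\bm{v}$ for $t\neq0$ (as $K\lambda\neq1$); the NFSE must then lie off this line. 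For that case I would argue existence from the fact that $\bm{x}^*=\bm{0}$ is an FSE whose Jacobian $K\bm{P}-\bm{I}$ has the positive transverse eigenvalue $K\lambda-1$, so $\mathcal{S}$ is locally repelling along $\pm\bm{v}$; combined with Proposition~\ref{prop:monotone_flow} (cooperativity, boundedness, convergence to equilibria), a $\bm{D}$-norm trapping region on the disagreement subspace, or a topological-degree count for the odd map $\bm{x}\mapsto\bm{x}-\bm{P}\bm{s}(\bm{x})$, would produce an equilibrium off $\mathcal{S}$.

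I expect this off-line existence step to be the main obstacle: because $\bm{v}$ has mixed signs, the monotone-ordering machinery does not apply directly along $\bm{v}$, and the simultaneous instability of $\bm{0}$ in the synchronization direction (eigenvalue $K-1$ when $K>1$) complicates a naive index computation, so a careful confinement or degree argument is required. Since the reduction route avoids this difficulty altogether, I would adopt it for the proof of item~\ref{prop:non_sync_equilibrium_2} and present the slope-$K$ version only as a remark. Finally, for the concluding assertion, in either case the constructed $s$ admits an equilibrium $\bm{x}^*\notin\mathcal{S}$ whose trajectory is stationary and never enters $\mathcal{S}$; hence $\mathcal{S}$ is not globally asymptotically stable for this $s$. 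As such an $s$ exists for every $\Gcal$ with $K\lambda_{N-1}\ge1$, this establishes the stated failure of global synchronization across admissible signal functions.
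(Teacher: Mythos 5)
Your part~\ref{prop:non_sync_equilibrium_1} is essentially the paper's own proof: the same saturated ramp $s(x)=\max(-1,\min(1,Kx))$, the same observation that small multiples of $\bm{v}_{N-1}$ stay in the linear regime, and the same cancellation $K\lambda_{N-1}-1=0$; your extra care with the normalization $t_0=1/(K\max_i|v_i|)$ and with the mixed-sign property of $\bm{v}_{N-1}$ only tightens details the paper leaves implicit. Your part~\ref{prop:non_sync_equilibrium_2}, however, takes a genuinely different route. The paper does \emph{not} reduce it to part~\ref{prop:non_sync_equilibrium_1}: it constructs a smooth, strictly increasing, odd function with $s'(0)=K$ (built on $\tanh(Kx)$), whose fixed points are $\{-c,0,c\}$, and argues by contradiction that if every equilibrium were an FSE, then by Lemma~\ref{lemma:SOP_flow} the state space would partition into the three basins $\mathcal{B}(\pm c\one)$ and $\mathcal{B}(\bm{0})$, the separatrix $\partial\mathcal{B}(c\one)$ would have to equal the stable manifold $W^s(\bm{0})$ (via the stability-boundary theorem of Chiang et al.), and this is dimensionally impossible because both $K\lambda_N-1=K-1$ and $K\lambda_{N-1}-1$ are positive, forcing $\dim W^s(\bm{0})\le N-2$ while a common boundary of two disjoint nonempty open basins in $\Xcal$ must have dimension $N-1$. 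Your reduction --- taking the ramp of slope $1/\lambda_{N-1}<K$, which is a fortiori $K$-Lipschitz --- is logically sound for the statement as written, since ``Lipschitz-continuous with constant $K$'' does not require $K$ to be the best constant, and it makes item~\ref{prop:non_sync_equilibrium_2} an immediate corollary of item~\ref{prop:non_sync_equilibrium_1}. What the paper's harder argument buys is that disagreement is generated by signal functions whose gain genuinely attains $K$ (indeed smooth strict sigmoids with $s'(0)=K$, the class relevant to Example~\ref{example:non_sync_stability_N5}, Figure~\ref{fig:bifurcation_diagram_N5}, and the comparison with \cite{fontanMultiequilibriaAnalysisClass}), rather than only by the degenerate threshold-case ramp whose NFSE form a non-isolated, non-attracting continuum. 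You correctly identify the slope-exactly-$K$ case as the hard step and sketch degree-theoretic or trapping-region arguments for it, but you do not carry that step out; the paper's topological separatrix argument is precisely that missing piece, so if the exact-constant version is wanted, your proof is incomplete and the appendix argument (or an equivalent) is still required.
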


\begin{proof}
    The proof of part \ref{prop:non_sync_equilibrium_1} is constructive and provided here. The proof of part \ref{prop:non_sync_equilibrium_2} involves constructing a specific odd signal function and using a topological argument on the stability boundaries of its basins of attraction; due to its length, it is deferred to Appendix~\ref{app:proof_prop5}.
    
	\ref{prop:non_sync_equilibrium_1} By assumption, $\lambda_{N-1} >0$. Under Assumption~\ref{ass:graph}, $\lambda_N= 1$ is simple and we have $\lambda_{N-1} < 1$. Moreover, the eigenvector $\bm{v}_{N-1}$ cannot be aligned with $\bm{v}_N = \one$, thus $\bm{v}_{N-1} \in \Xcal \setminus \Scal$. Since $K \lambda_{N-1} =1$ and $\lambda_{N-1} <1$ one has $K>1$.
	Define the signal function $s(x) = \min(1, \max(-1, Kx))$. This function verifies conditions of Assumption~\ref{ass:signal}. Consider states of the form $\bm{x}_\varepsilon = \varepsilon \bm{v}_{N-1}$ for some scalar $0<\varepsilon \leq 1/K$. Since $\bm{x}_\varepsilon\in [-1/K,1/K]^N$ and the function is linear in this hypercube one has $s(x_i) = K x_i$ for all $i\in \Vcal$. Therefore, $\bm{s}(\bm{x}_\varepsilon) = K \bm{x}_\varepsilon = K \varepsilon \bm{v}_{N-1}$. Injecting it into the dynamics yields:
    \begin{align*}
		\dot{\bm{x}}_\varepsilon &=  \bm{D}^{-1}\bm{A} \bm{s} (\bm{x}_\varepsilon) - \bm{x}_\varepsilon = \varepsilon K \bm{D}^{-1}\bm{A}\bm{v}_{N-1} - \varepsilon \bm{v}_{N-1}\\
        &= \varepsilon (K\lambda_{N-1} -1) \bm{v}_{N-1} = \bm{0}.
	\end{align*}
    Thus, $\bm{x}_\varepsilon$ is an NFSE. This being true for all $0<\varepsilon \leq 1/K$, any $\bm{x}_\varepsilon$ for such $\varepsilon$ is an NFSE. Therefore, there exists a continuum of NFSE in $\mathrm{Span}(\bm{v}_{N-1}) \cap \Xcal$.
\end{proof}

\begin{remark}[Generalization of Sufficiency Condition]
	The proof of \ref{prop:non_sync_equilibrium_2} of this proposition, postponed to Appendix~\ref{app:proof_prop5}, establishes that $\lambda_{N-1}K \geq 1$ is a sufficient condition for the existence of a signal function $s$ inducing an NFSE. This result is analogous to the sufficiency result in \cite[Theorem~6]{fontanMultiequilibriaAnalysisClass}, which uses bifurcation theory to prove the emergence of mixed-sign equilibria under the explicit assumption that the eigenvalue $\lambda_{N-1}$ is simple. On the other hand, our proof relies on a topological argument concerning stability boundaries and does not require this simplicity assumption, thus holding for all network topologies such that $\lambda_{N-1} > 0$, including those with repeated eigenvalues.
\end{remark}

Proposition~\ref{prop:non_sync_equilibrium} provides a constructive proof: if the interplay between nonlinearity and topology is sufficiently strong ($K\lambda_{N-1} \geq 1$), then there \emph{exists} a signal function capable of generating disagreement. This should be read in conjunction with the necessary conditions of Theorem~\ref{thm:conditions_for_nfse}: for an NFSE to exist, the signal function must also have the requisite shape (e.g., possess unstable fixed points). An underestimation function, for example, will never generate an NFSE, regardless of the value of $K\lambda_{N-1}$. Figure~\ref{fig:sync_vs_nosync} provides a visual summary of these dynamics, illustrating the starkly different outcomes on various network topologies when the threshold condition $K\lambda_{N-1} < 1$ is satisfied versus when it is violated.

This synthesis leads to the ultimate question for full synchronization: Do these conditions collapse into a single sharp threshold? Specifically, is the condition $K \lambda_{N-1} < 1$ sufficient to \emph{guarantee} full synchronization for \emph{any} admissible signal function, thereby overpowering any specific shape properties? This question is answered in the next section.

\section{A Sharp Threshold for Full Synchronization}\label{sec:Threshold_global_sync}

This section provides the definitive answer to the question posed above. We prove that the condition $K \lambda_{N-1} < 1$ is indeed sufficient to confine all system equilibria to the FSM, regardless of the specific shape of the signal function $s$. This sufficiency result, when combined with the possibility of disagreement established in Proposition~\ref{prop:non_sync_equilibrium}, formally establishes the condition as a sharp threshold. 
We thereby generalize the condition found in \cite{fontanMultiequilibriaAnalysisClass}, demonstrating that a guarantee previously established for a restrictive class of smooth, sigmoidal interactions holds for any admissible interaction law.

\begin{theorem}[Sharp Threshold for Synchronization]\label{thm:only_sync_equilibria}
	Let $K$ be the Lipschitz constant of $s$. If $ K \lambda_{N-1} < 1$, then, all the equilibria of \eqref{eq:dynamic} are FSE.
\end{theorem}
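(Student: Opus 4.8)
The plan is to work throughout in the $\bm{D}$-weighted inner product $\langle \cdot,\cdot\rangle_{\bm{D}}$, in which $\bm{P} := \bm{D}^{-1}\bm{A}$ is self-adjoint (because $\bm{D}\bm{P}=\bm{A}$ is symmetric, equivalently $\bm{D}^{-1/2}\bm{A}\bm{D}^{-1/2}$ is symmetric). Hence $\bm{P}$ has a $\bm{D}$-orthonormal eigenbasis $\bm{v}_1,\dots,\bm{v}_{N-1},\bm{v}_N=\one$ with real eigenvalues $-1 \le \lambda_1 \le \dots \le \lambda_{N-1} < \lambda_N = 1$, and the disagreement subspace $\Scal^\perp$ (the $\bm{D}$-orthogonal complement of $\mathrm{Span}(\one)$) is spanned by $\bm{v}_1,\dots,\bm{v}_{N-1}$. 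Writing $\bm{s}:=\bm{s}(\bm{x}^*)$ and decomposing its disagreement part as $\bm{s}_\perp = \sum_{i=1}^{N-1} c_i \bm{v}_i$, the goal reduces to showing $c_i=0$ for all $i$: once $\bm{s}_\perp=\zero$, we have $\bm{s}(\bm{x}^*)=\bar{s}\one$ for some scalar $\bar{s}$, and then the equilibrium identity $\bm{x}^* = \bm{P}\bm{s}(\bm{x}^*)=\bar{s}\bm{P}\one=\bar{s}\one$ places $\bm{x}^*$ on $\Scal$.

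The key analytic input is a pointwise inequality combining both hypotheses on $s$. For any two node states, monotonicity gives $(s(x_i)-s(x_j))(x_i-x_j)\ge 0$ and Lipschitz continuity gives $|s(x_i)-s(x_j)|\le K|x_i-x_j|$; multiplying the second by $|s(x_i)-s(x_j)|$ and using that the two differences share the same sign yields
\[
  (s(x_i)-s(x_j))^2 \le K\,(x_i-x_j)(s(x_i)-s(x_j)).
\]
Summing against the symmetric nonnegative weights $a_{ij}$ and invoking the standard Laplacian identity $\sum_{i,j}a_{ij}(u_i-u_j)(w_i-w_j)=2\,\bm{u}^\top\bm{L}\bm{w}$ turns this into the compact quadratic-form inequality $\bm{s}^\top\bm{L}\bm{s}\le K\,\bm{x}^{*\top}\bm{L}\bm{s}$.

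Next I would substitute the equilibrium condition $\bm{x}^* = \bm{P}\bm{s}$ using $\bm{L}=\bm{D}(\bm{I}-\bm{P})$. Projecting onto $\Scal^\perp$ gives $\bm{x}^*_\perp=\bm{P}\bm{s}_\perp=\sum_i\lambda_i c_i\bm{v}_i$, and a direct expansion in the $\bm{D}$-orthonormal eigenbasis yields $\bm{s}^\top\bm{L}\bm{s}=\sum_i(1-\lambda_i)c_i^2$ and $\bm{x}^{*\top}\bm{L}\bm{s}=\sum_i\lambda_i(1-\lambda_i)c_i^2$. The inequality then collapses to
\[
  \sum_{i=1}^{N-1}(1-\lambda_i)(1-K\lambda_i)\,c_i^2 \le 0 .
\]
For every $i\le N-1$ one has $1-\lambda_i>0$ (since $\lambda_i<1$), while $1-K\lambda_i>0$ holds automatically when $\lambda_i\le 0$ and precisely because $K\lambda_i\le K\lambda_{N-1}<1$ when $\lambda_i>0$. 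Every summand is therefore nonnegative, forcing $c_i=0$ for all $i$, which finishes the argument.

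I expect the main obstacle to be the possibly large negative eigenvalue $\lambda_1$: the hypothesis controls only $\lambda_{N-1}$, so the naive route of bounding $\norm{\bm{s}_\perp}_{\bm{D}}\le K\norm{\bm{x}^*_\perp}_{\bm{D}}$ and $\norm{\bm{x}^*_\perp}_{\bm{D}}\le\rho\,\norm{\bm{s}_\perp}_{\bm{D}}$ only gives $\rho K<1$ with $\rho=\max_{i\le N-1}|\lambda_i|$, which can be near $1$ on (near-)bipartite graphs and is thus insufficient. The decisive point is that the \emph{sharpened} pointwise inequality, after the equilibrium relation is inserted, produces the factor $(1-\lambda_i)(1-K\lambda_i)$ whose positivity on the negative part of the spectrum comes for free from monotonicity (the $\bm{x}^{*\top}\bm{L}\bm{s}$ term already carries the sign of $\lambda_i$) and on the positive part from the threshold $K\lambda_{N-1}<1$ — so the two hypotheses act on disjoint portions of the spectrum. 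In the write-up I would still verify carefully the sign bookkeeping in the pointwise step and that the exactness of the eigenbasis expansion rests on the $\bm{D}$-orthonormality of the $\bm{v}_i$.
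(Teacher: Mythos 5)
Your proof is correct, and it takes a genuinely different route from the paper. The paper argues by contraposition: assuming an NFSE exists, it invokes the structural characterization of Theorem~\ref{thm:conditions_for_nfse} to locate an unstable fixed point $c^\circ$ splitting the agents, builds the diagonal secant-slope matrix $\bm{M}$ centered at $c^\circ$, shows that $1$ is an eigenvalue of the symmetrized matrix $\bm{M}^{1/2}\tilde{\bm{A}}\bm{M}^{1/2}$ with a mixed-sign eigenvector (hence not the Perron eigenvalue), and concludes $1 \leq K\lambda_{N-1}$ via Ostrowski's congruence theorem. You instead exploit the sharpened pointwise estimate $(s(x_i)-s(x_j))^2 \leq K(x_i-x_j)(s(x_i)-s(x_j))$, sum it into the quadratic-form inequality $\bm{s}^\top\bm{L}\bm{s} \leq K\,\bm{x}^{*\top}\bm{L}\bm{s}$, and expand in the $\bm{D}$-orthonormal eigenbasis of $\bm{D}^{-1}\bm{A}$ to obtain $\sum_{i=1}^{N-1}(1-\lambda_i)(1-K\lambda_i)c_i^2 \leq 0$, whose summands are all strictly positive multiples of $c_i^2$ under the threshold hypothesis. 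Your key observation — that the factor $1-K\lambda_i$ is controlled by monotonicity alone on the nonpositive part of the spectrum and by $K\lambda_{N-1}<1$ on the positive part — is exactly right, and it is in effect a nonlinear sharpening of the argument the paper uses only for the dense-graph case in Lemma~\ref{lemma:char_equilibria_dense}. What your route buys is self-containment and brevity: it needs neither the NFSE structure theorem, nor Perron--Frobenius, nor Ostrowski, and it is direct rather than by contradiction. What the paper's route buys is structural information (the splitting point $c^\circ$ and the mixed-sign disagreement eigenvector), which connects the threshold to the clustering picture developed in the rest of the paper. The only points to spell out in a final write-up are the ones you already flag: the cross-terms in $\bm{s}^\top\bm{L}\bm{s}$ and $\bm{x}^{*\top}\bm{L}\bm{s}$ involving the $\one$-component vanish because $\bm{L}\one=\bm{0}$ and $\bm{L}$ is symmetric, and the eigenbasis expansion is exact precisely because the $\bm{v}_i$ are $\bm{D}$-orthonormal.
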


The proof relies on Ostrowski's theorem on matrix congruence, which we state here for completeness.

\begin{lemma}[Ostrowski {\cite[Theorem~4.5.9]{hornMatrixAnalysis2017}}]
	\label{lemma:ostrowski_singular_sas}
	Let $\bm{A}, \bm{S} \in \R^{N \times N}$ with $\bm{A}$ symmetric. Let the eigenvalues of $\bm{A}$, $\bm{S S}^\top$, and $\bm{SAS}^\top$ be arranged in nondecreasing order, denoted by $\lambda_k(\bm{A})$, $\lambda_k(\bm{S S}^\top)$, and $\lambda_k(\bm{SAS}^\top)$ respectively, for $k=1, \ldots, N$. Then, for each $k=1, \ldots, N$, there exists a real number $\theta_k \in [\lambda_1(\bm{S S}^\top), \lambda_N(\bm{S S}^\top)]$ such that $\lambda_k(\bm{SAS}^\top) = \theta_k \lambda_k(\bm{A})$. 
\end{lemma}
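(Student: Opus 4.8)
The plan is to prove the statement first for nonsingular $\bm{S}$ using the Courant--Fischer min--max characterization of eigenvalues, and then recover the general (possibly singular) case by a continuity argument. Throughout, write $\alpha = \lambda_1(\bm{S}\bm{S}^\top)$, $\beta = \lambda_N(\bm{S}\bm{S}^\top)$, and for $\bm{x}\neq\bm{0}$ denote the Rayleigh quotient $R_{\bm{M}}(\bm{x}) = \bm{x}^\top \bm{M}\bm{x}/(\bm{x}^\top\bm{x})$. The cornerstone is the multiplicative identity obtained, for nonsingular $\bm{S}$ and $\bm{x}\neq\bm{0}$, by setting $\bm{y}=\bm{S}^\top\bm{x}\neq\bm{0}$:
\begin{equation*}
    R_{\bm{S}\bm{A}\bm{S}^\top}(\bm{x}) = \frac{\bm{y}^\top\bm{A}\bm{y}}{\bm{x}^\top\bm{x}} = R_{\bm{A}}(\bm{y})\,R_{\bm{S}\bm{S}^\top}(\bm{x}),
\end{equation*}
combined with the elementary Rayleigh bound $R_{\bm{S}\bm{S}^\top}(\bm{x})\in[\alpha,\beta]$.

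First I would establish two-sided bounds for nonsingular $\bm{S}$. From $\lambda_k(\bm{S}\bm{A}\bm{S}^\top) = \min_{\dim V = k}\max_{\bm{0}\neq\bm{x}\in V} R_{\bm{S}\bm{A}\bm{S}^\top}(\bm{x})$, I transport an optimal $k$-dimensional subspace $W$ for $\bm{A}$ (on which $R_{\bm{A}}\le\lambda_k(\bm{A})$) to $V=(\bm{S}^\top)^{-1}W$, which again has dimension $k$ since $\bm{S}^\top$ is invertible. Substituting $\bm{y}=\bm{S}^\top\bm{x}$ into the identity and performing a short case analysis on the signs of $R_{\bm{A}}(\bm{y})$ and $\lambda_k(\bm{A})$ yields $\lambda_k(\bm{S}\bm{A}\bm{S}^\top)\le\beta\lambda_k(\bm{A})$ when $\lambda_k(\bm{A})\ge 0$ and $\le\alpha\lambda_k(\bm{A})$ when $\lambda_k(\bm{A})<0$. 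The complementary characterization $\lambda_k(\bm{S}\bm{A}\bm{S}^\top) = \max_{\dim V = N-k+1}\min_{\bm{0}\neq\bm{x}\in V} R_{\bm{S}\bm{A}\bm{S}^\top}(\bm{x})$, applied to an optimal $(N-k+1)$-dimensional subspace for $\bm{A}$, gives the matching lower bounds $\lambda_k(\bm{S}\bm{A}\bm{S}^\top)\ge\alpha\lambda_k(\bm{A})$ and $\ge\beta\lambda_k(\bm{A})$ in the two sign regimes.

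From these bounds the nonsingular case follows at once. If $\lambda_k(\bm{A})>0$, then $\alpha\lambda_k(\bm{A})\le\lambda_k(\bm{S}\bm{A}\bm{S}^\top)\le\beta\lambda_k(\bm{A})$, so $\theta_k := \lambda_k(\bm{S}\bm{A}\bm{S}^\top)/\lambda_k(\bm{A})\in[\alpha,\beta]$; if $\lambda_k(\bm{A})<0$ the inequalities reverse upon dividing by the negative quantity, again placing $\theta_k$ in $[\alpha,\beta]$; and if $\lambda_k(\bm{A})=0$ the upper and lower bounds together force $\lambda_k(\bm{S}\bm{A}\bm{S}^\top)=0$, so any $\theta_k\in[\alpha,\beta]$ satisfies $\lambda_k(\bm{S}\bm{A}\bm{S}^\top)=\theta_k\lambda_k(\bm{A})$.

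Finally I would remove the nonsingularity assumption by continuity. Replacing $\bm{S}$ with $\bm{S}_\varepsilon = \bm{S}+\varepsilon\bm{I}$ makes it nonsingular for all but finitely many $\varepsilon$, and as $\varepsilon\to 0$ the ordered eigenvalues of $\bm{S}_\varepsilon\bm{A}\bm{S}_\varepsilon^\top$, $\bm{A}$, and $\bm{S}_\varepsilon\bm{S}_\varepsilon^\top$ converge to those of the corresponding limit matrices. The scalars $\theta_k^{(\varepsilon)}\in[\lambda_1(\bm{S}_\varepsilon\bm{S}_\varepsilon^\top),\lambda_N(\bm{S}_\varepsilon\bm{S}_\varepsilon^\top)]$ remain in a bounded interval converging to $[\alpha,\beta]$, so passing to a convergent subsequence yields a limit $\theta_k\in[\alpha,\beta]$ with $\lambda_k(\bm{S}\bm{A}\bm{S}^\top)=\theta_k\lambda_k(\bm{A})$. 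I expect the main obstacle to be the sign bookkeeping in the nonsingular step: since $\bm{A}$ is indefinite, the product $R_{\bm{A}}\,R_{\bm{S}\bm{S}^\top}$ is extremized by $\alpha$ or $\beta$ depending on the sign of $R_{\bm{A}}(\bm{y})$, so the case split must be arranged carefully to land the correct multiplier on each inequality.
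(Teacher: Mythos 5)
Your proposal is correct, but it cannot be compared to an in-paper argument: the paper does not prove this lemma at all, it simply imports it by citation from Horn and Johnson. Your nonsingular-case argument --- the Rayleigh-quotient factorization $R_{\bm{S}\bm{A}\bm{S}^\top}(\bm{x}) = R_{\bm{A}}(\bm{S}^\top\bm{x})\,R_{\bm{S}\bm{S}^\top}(\bm{x})$ pushed through both Courant--Fischer characterizations with the sign split on $\lambda_k(\bm{A})$ --- is essentially the classical textbook proof of Ostrowski's theorem, and I verified the bookkeeping you flagged as the delicate point: the upper bound $\beta\lambda_k(\bm{A})$ (resp.\ $\alpha\lambda_k(\bm{A})$) for $\lambda_k(\bm{A})\geq 0$ (resp.\ $<0$) and the matching lower bounds are all right, including the degenerate case $\lambda_k(\bm{A})=0$, where your two-sided bounds force $\lambda_k(\bm{S}\bm{A}\bm{S}^\top)=0$ so that any $\theta_k\in[\alpha,\beta]$ works. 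The genuinely valuable addition is your continuity step: the classical statement assumes $\bm{S}$ nonsingular (with $\theta_k>0$), whereas the version stated here permits singular $\bm{S}$ --- and this permissiveness is actually load-bearing in the paper, since Theorem~\ref{thm:only_sync_equilibria} applies the lemma with $\bm{S}=\bm{M}^{1/2}$ whose diagonal entries $m_i$ may vanish. Your perturbation $\bm{S}_\varepsilon=\bm{S}+\varepsilon\bm{I}$ is nonsingular for all but finitely many $\varepsilon$ (the determinant is a monic degree-$N$ polynomial in $\varepsilon$), ordered eigenvalues of symmetric matrices are continuous by Weyl's inequality, and extracting a subsequential limit of $\theta_k^{(\varepsilon)}$ from the shrinking intervals $[\lambda_1(\bm{S}_\varepsilon\bm{S}_\varepsilon^\top),\lambda_N(\bm{S}_\varepsilon\bm{S}_\varepsilon^\top)]$ correctly yields $\theta_k\in[\alpha,\beta]$ with the equality preserved in the limit. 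Two cosmetic remarks: the phrase ``since $\bm{A}$ is indefinite'' should read ``since $\bm{A}$ may be indefinite'' ($\bm{A}$ is only assumed symmetric), and in the singular case $\theta_k$ need no longer be strictly positive since $\alpha=\lambda_1(\bm{S}\bm{S}^\top)$ can be zero --- which is consistent with, and explains, the paper's weaker conclusion $\theta_k\in[\lambda_1(\bm{S}\bm{S}^\top),\lambda_N(\bm{S}\bm{S}^\top)]$ rather than $\theta_k>0$.
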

\begin{proof}[Proof of Theorem \ref{thm:only_sync_equilibria}]
    The proof proceeds by contraposition: we assume the existence of an NFSE, $\bm{x}^*$, and we show that this leads to the necessary condition $K \lambda_{N-1} \geq 1$.

    An equilibrium point $\bm{x}^*$ of the system \eqref{eq:dynamic} must satisfy the condition \eqref{eq:equilibrium_condition}. i.e., $\bm{x}^* = \bm{D}^{-1} \bm{A} \bm{s}(\bm{x}^*)$. Since $\bm{x}^*$ is an NFSE, by Theorem~\ref{thm:conditions_for_nfse}-\ref{thm:conditions_for_nfse:itm:2}, there exists $c^\circ \in \mathrm{Fix}^{\circ}(s)$ such that the components of $\bm{x}^*$ and $\bm{s}(\bm{x}^*)$ are not all on one side of $c^\circ$. Specifically, the sets $\mathcal{I} = \{i \mid s(x_i^*) < c^\circ\}$ and $\mathcal{J} = \{j \mid s(x_j^*) > c^\circ\}$ are non-empty.

We now center the equilibrium equation around this fixed point $c^\circ$. Let $\bm{v}^* = \bm{x}^* - c^\circ\bm{1}$. Subtracting $c^\circ\bm{1}$ from the equilibrium equation and using $\bm{D}^{-1}\bm{A}\bm{1}=\bm{1}$ yields:
\begin{equation*}
	\bm{v}^* = \bm{D}^{-1} \bm{A} (\bm{s}(\bm{x}^*) - c^\circ\bm{1}).
\end{equation*}
To recover $\bm{v}^*$ in the right hand side, we define a diagonal matrix $\bm{M} = \mathrm{diag}(m_1, \dots, m_N)$, where 
\begin{equation*}
	m_i = \begin{cases}
		\dfrac{s(x_i^*) - c^\circ}{x_i^* - c^\circ} & \text{if } x_i^* \neq c^\circ, \\
		K & \text{if } x_i^* = c^\circ.
	\end{cases}
\end{equation*}
The Lipschitz continuity of $s$ ensures that each diagonal entry satisfies $0 \leq m_i \leq K$. By construction, we have $\bm{s}(\bm{x}^*) - c^\circ \one = \bm{M} \bm{v}^*$, which transforms the equation into:
\begin{equation} \label{eq:intermediate_eigen_problem}
	\bm{v}^* = \bm{D}^{-1} \bm{A} \bm{M} \bm{v}^*.
\end{equation}

To proceed, we define the vector $\bm{y}^* = \bm{M}^{1/2} \bm{v}^*$ and show that it must be non-zero and contain both positive and negative entries. We argue by contradiction. Suppose $\bm{y}^*$ contains no positive entries, i.e., $y_i^* \leq 0$ for all $i \in \Vcal$. This implies that for any index $j \in \mathcal{J}$ (where $v_j^* > 0$), the corresponding entry $m_j$ must be zero. By definition of $m_j$, this would mean $s(x_j^*) = c^\circ$. This contradicts the fact that $\mathcal{J}$ is the set of indices where $s(x_j^*) > c^\circ$. Therefore, our assumption is false, and $\bm{y}^*$ must contain at least one positive entry. A symmetric argument using the set $\mathcal{I}$ shows it must also contain at least one negative entry. Consequently, $\bm{y}^*$ is a non-zero vector with mixed-sign entries.

Left-multiplying \eqref{eq:intermediate_eigen_problem} by $\bm{M}^{1/2}$ yields $\bm{y}^* = \bm{M}^{1/2} \bm{D}^{-1} \bm{A} \bm{M}^{1/2} \bm{y}^*$. This reveals that $1$ is an eigenvalue of the matrix $\bm{\Theta} = \bm{M}^{1/2} \bm{D}^{-1} \bm{A} \bm{M}^{1/2}$, with corresponding eigenvector $\bm{y}^*$.

To analyze the eigenvalues of $\bm{\Theta}$, we relate it to a symmetric matrix. Let $\tilde{\bm{A}} = \bm{D}^{-1/2} \bm{A} \bm{D}^{-1/2}$ be the symmetrically normalized adjacency matrix, which is symmetric and has the same eigenvalues as $\bm{D}^{-1}\bm{A}$ by similarity. The matrix $\bm{\Theta}$ is similar to the symmetric matrix $\tilde{\bm{\Theta}} = \bm{M}^{1/2} \tilde{\bm{A}} \bm{M}^{1/2}$ via the transformation $\tilde{\bm{\Theta}} = (\bm{D}^{1/2} \bm{M}^{-1/2}) \bm{\Theta} (\bm{M}^{1/2} \bm{D}^{-1/2})$. Therefore, $\bm{\Theta}$ and $\tilde{\bm{\Theta}}$ share the same real eigenvalues, which means $1$ is also an eigenvalue of the symmetric matrix $\tilde{\bm{\Theta}}$. The corresponding eigenvector is $\bm{z}^* = \bm{D}^{1/2} \bm{y}^*$. Since $\bm{D}^{1/2}$ has strictly positive diagonal entries, $\bm{z}^*$ inherits the mixed-sign property of $\bm{v}^*$. 

Since $\tilde{\bm{\Theta}}$ is a nonnegative matrix, by Perron-Frobenius theorem (Lemma~\ref{lemma:perron_frobenius}), the eigenvectors of $\tilde{\bm{\Theta}}$ associated with the largest eigenvalue $\lambda_N(\tilde{\bm{\Theta}})$ must be nonnegative. Since $\bm{z}^*$ has mixed-sign, it cannot be the Perron-Frobenius eigenvector corresponding to the largest eigenvalue. Therefore, $\bm{z}^*$  must be associated with $\lambda_i(\tilde{\bm{\Theta}})$ for $i\in \{1,\dots,N\}$ such that $\lambda_i(\tilde{\bm{\Theta}}) \neq \lambda_N(\tilde{\bm{\Theta}})$. This implies that $1 < \lambda_N(\tilde{\bm{\Theta}})$, and therefore $1$ must be one of the lower eigenvalues, so $1 \leq \lambda_{N-1}(\tilde{\bm{\Theta}})$.

Finally, we use Ostrowski's theorem (Lemma~\ref{lemma:ostrowski_singular_sas}) for matrix congruence on $\tilde{\bm{\Theta}} = (\bm{M}^{1/2})^\top \tilde{\bm{A}} (\bm{M}^{1/2})$. This theorem guarantees that $\lambda_k(\tilde{\bm{\Theta}}) = \theta_k \lambda_k(\tilde{\bm{A}})$ for some scalar $\theta_k \in [\min_i(m_i), \max_i(m_i)]$. Since $0 \leq m_i \leq K$ for all $i$, we have $\theta_k \in [0, K]$. Applying this to the $(N-1)$-th eigenvalue, and noting that $\lambda_{N-1}(\tilde{\bm{A}}) = \lambda_{N-1}$, we get:
\begin{equation*}
	1 \leq \lambda_{N-1}(\tilde{\bm{\Theta}}) = \theta_{N-1} \lambda_{N-1}(\tilde{\bm{A}}) \leq K \lambda_{N-1}.
\end{equation*}
The existence of an NFSE has led to the necessary condition $K \lambda_{N-1} \geq 1$. By contraposition, if $K \lambda_{N-1} < 1$, no such equilibria can exist.
\end{proof}

\begin{remark}[Generalization of Necessity Condition]
	This theorem and its proof generalize the necessity condition to obtain NFSE with sigmoidal signals in \cite[Theorem~4]{fontanMultiequilibriaAnalysisClass} in two key aspects. First, our result applies to a much broader class of signal functions, requiring only that $s$ be non-decreasing and Lipschitz-continuous, whereas the model in \cite{fontanMultiequilibriaAnalysisClass} assumes smooth, odd, and sigmoidal nonlinearities. Second, our proof technique correctly establishes the non-strict inequality $K \lambda_{N-1} \geq 1$ as the necessary condition for disagreement, refining the strict inequality derived for their specific model.
\end{remark}

Theorem~\ref{thm:only_sync_equilibria} provides the condition under which the set of equilibria is confined to the FSM. This has immediate consequences for the long-term behavior of trajectories.

\begin{corollary}[Global Attractivity of the FSM]\label{coro:attractivity_of_S}
	Assume that $ K \lambda_{N-1} < 1$. Then, for almost all initial conditions $\bm{x}(0) \in \Xcal$, one has $\lim_{t\to \infty} \bm{x}(t) \in \Scal$. If additionally, $s$ is strictly increasing and smooth (for simplicity), this is verified for all initial conditions in $\Xcal$. That is, the fully synchronized manifold $\mathcal{S}$ is globally attractive.
\end{corollary}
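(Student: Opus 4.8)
The plan is to combine the two strands of the preceding analysis: the convergence guarantees arising from the cooperative structure (Proposition~\ref{prop:monotone_flow} and Lemma~\ref{lemma:SOP_flow}) with the localization of every equilibrium to the FSM under the threshold condition (Theorem~\ref{thm:only_sync_equilibria}). Since the corollary is essentially the conjunction of these already-established facts, I expect the proof to be a short two-step argument rather than a fresh computation.

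For the ``almost all'' statement, I would first invoke Proposition~\ref{prop:monotone_flow}: because the system~\eqref{eq:dynamic} is cooperative, almost all trajectories converge to a single equilibrium point $\bm{x}^*$. Then, under the standing hypothesis $K\lambda_{N-1} < 1$, Theorem~\ref{thm:only_sync_equilibria} guarantees that every equilibrium of~\eqref{eq:dynamic} is an FSE, hence $\bm{x}^* \in \Scal$. Composing the two yields $\lim_{t\to\infty}\bm{x}(t) = \bm{x}^* \in \Scal$ for almost all $\bm{x}(0)\in\Xcal$, which is precisely the first claim.

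For the strengthening to \emph{all} initial conditions, I would replace the generic convergence of Proposition~\ref{prop:monotone_flow} by the everywhere-convergence of Lemma~\ref{lemma:SOP_flow}: under the additional hypothesis that $s$ is strictly increasing and continuously differentiable, the Jacobian of~\eqref{eq:dynamic} is irreducible throughout $\Xcal$, so the flow is strongly monotone and every trajectory---not merely almost every one---converges to an equilibrium. Applying Theorem~\ref{thm:only_sync_equilibria} once more places that equilibrium in $\Scal$, giving $\lim_{t\to\infty}\bm{x}(t)\in\Scal$ for all $\bm{x}(0)\in\Xcal$, whence the global attractivity of $\Scal$.

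The argument presents no genuine obstacle, since both ingredients are in hand from earlier sections. The only point requiring care is the precise meaning of convergence: Proposition~\ref{prop:monotone_flow} and Lemma~\ref{lemma:SOP_flow} supply convergence to a single equilibrium \emph{point} (not merely to the equilibrium set in the sense of vanishing distance), and Theorem~\ref{thm:only_sync_equilibria} then identifies that limit point as lying in $\Scal$. It is exactly this pointwise convergence that permits writing $\lim_{t\to\infty}\bm{x}(t)\in\Scal$ and concluding attractivity of the manifold, rather than the weaker assertion that the disagreement coordinates merely decay.
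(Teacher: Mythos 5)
Your proposal is correct and follows essentially the same two-step argument as the paper's own proof: Proposition~\ref{prop:monotone_flow} (resp.\ Lemma~\ref{lemma:SOP_flow} under the extra regularity on $s$) gives convergence of almost all (resp.\ all) trajectories to an equilibrium, and Theorem~\ref{thm:only_sync_equilibria} places every equilibrium in $\Scal$. Your closing remark about pointwise convergence to a single equilibrium, rather than mere convergence to the equilibrium set, is a valid point of care that is consistent with how the paper uses these results.
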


\begin{proof}
	From Proposition~\ref{prop:monotone_flow}, we have that for almost all initial conditions, the associated trajectory converges to an equilibrium. Since the only equilibria are synchronized equilibria, the FSM is globally attractive. If $s$ is strictly increasing and smooth, from Lemma~\ref{lemma:SOP_flow}, we have the previous verified for all initial conditions, leading to the global attractiveness of $\Scal$.  
\end{proof}

\begin{remark}[Attractivity vs. Lyapunov Stability]\label{rem:attract_vs_stable}
	It is important to distinguish the global attractivity of $\mathcal{S}$ guaranteed by Corollary~\ref{coro:attractivity_of_S} from its Lyapunov stability. Even when the threshold is met, it is possible for the trajectories to transiently move away from $\mathcal{S}$ before finally returning, which excludes a general statement on global asymptotic stability.
\end{remark}

\begin{figure}
	\centering
	\includegraphics[width=\linewidth]{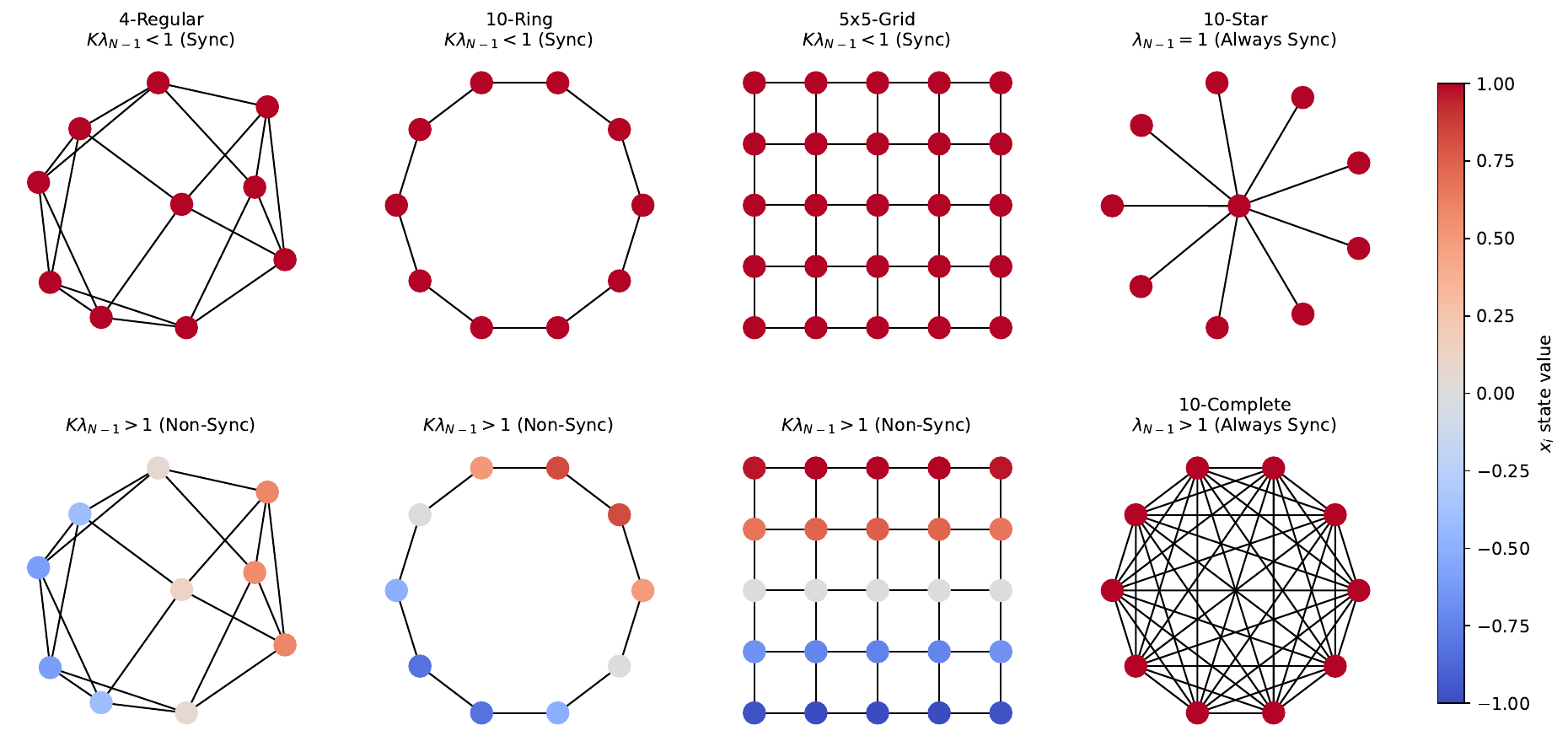}
	\caption{Visual validation of the sharp synchronization threshold on various network topologies. The figure displays the final equilibrium state, with node colors representing the agent state $x_i^* \in [-1,1]$. For diverse topologies (columns 1-3), the top panels confirm that meeting the condition $K\lambda_{N-1} < 1$ guarantees convergence to an FSE, as predicted by Corollary~\ref{coro:attractivity_of_S}. In contrast, the bottom panels demonstrate that violating this threshold leads to an NFSE, characterized by robust clustering. Column 4 highlights the special case of highly connected graphs (e.g., Star and Complete), which are intrinsically robust to disagreement and always synchronize. All simulations use the signal function $s(x) = \max(-1, \min(1, Kx))$, with identical initial conditions for top and bottom panels to ensure a fair comparison.\vspace{-0.5cm}}
	\label{fig:sync_vs_nosync}
\end{figure}

The distinction between attractivity and stability highlighted in Remark~\ref{rem:attract_vs_stable} asks for further discussion. The sharp threshold $K\lambda_{N-1} < 1$ guarantees that no NFSE can exist to act as alternative attractors. It ensures that all trajectories must eventually converge to the FSM. However, it does not constrain the transient behavior. The standard Lyapunov function for consensus, $V(\bm{e}) = \norm{\bm{e}}_{\bm{D}}^2/2$, may not be strictly decreasing everywhere if some eigenvalues of $\bm{D}^{-1}\bm{A}$ are negative and large in magnitude, potentially allowing for transient growth of the disagreement vector $\bm{e}$. 

To guarantee not just eventual asymptotic convergence, but a direct and exponentially fast decay of disagreement from any initial condition, we must ensure that the stabilizing effect of consensus outweighs the destabilizing potential across the \emph{entire spectrum} of the system, not just for the mode corresponding to $\lambda_{N-1}$. This requires a stricter condition, as established in the following theorem.

\begin{theorem}[Exponential stability of the FSM]\label{thm:global_exponential_stability}
	Let us define $\alpha :=  K \max_{i \in \left\{1, \dots, N-1\right\}} \abs{\lambda_i}$. If $\alpha < 1$, then the fully synchronized manifold $\mathcal{S}$ is globally exponentially stable.
\end{theorem}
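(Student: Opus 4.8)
The plan is to use the degree-weighted quadratic Lyapunov function $V(\bm{x}) = \tfrac{1}{2}\norm{\bm{e}}_{\bm{D}}^2$, where $\bm{e} := \bm{x} - \bar{x}\one$ is the disagreement vector and $\bar{x} := (\one^\top\bm{D}\bm{x})/(\one^\top\bm{D}\one)$ is the degree-weighted mean, i.e. the $\bm{D}$-orthogonal projection of $\bm{x}$ onto $\mathrm{Span}(\one)$. The quantity $\norm{\bm{e}}_{\bm{D}}$ is exactly the $\bm{D}$-weighted distance from $\bm{x}$ to the FSM, so exponential decay of $V$ is equivalent (up to norm equivalence) to global exponential stability of $\Scal$. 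The structural fact I would exploit is that $\bm{P} := \bm{D}^{-1}\bm{A}$ is self-adjoint for the inner product $\langle \bm{u},\bm{w}\rangle_{\bm{D}} = \bm{u}^\top\bm{D}\bm{w}$, since $\langle \bm{u},\bm{P}\bm{w}\rangle_{\bm{D}} = \bm{u}^\top\bm{A}\bm{w}$ is symmetric in $\bm{u},\bm{w}$. Consequently $\bm{P}$ admits a $\bm{D}$-orthonormal eigenbasis, the complement $\one^{\perp_{\bm{D}}}$ is $\bm{P}$-invariant and spanned by the eigenvectors for $\lambda_1,\dots,\lambda_{N-1}$, and the induced $\bm{D}$-operator norm of $\bm{P}$ restricted to this subspace equals $\rho := \max_{i\leq N-1}\abs{\lambda_i}$.

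The computation of $\dot V$ then proceeds as follows. Because $\bm{e}^\top\bm{D}\one = 0$, the drift of the mean drops out and $\dot V = \bm{e}^\top\bm{D}\dot{\bm{x}} = \bm{e}^\top\bm{D}(\bm{P}\bm{s}(\bm{x}) - \bm{x})$. The self-term gives $\bm{e}^\top\bm{D}\bm{x} = \norm{\bm{e}}_{\bm{D}}^2 = 2V$, again using $\bm{D}$-orthogonality of $\bm{e}$ and $\bar{x}\one$. For the coupling term I would move $\bm{P}$ onto $\bm{e}$ by self-adjointness and discard the consensus component of $\bm{s}(\bm{x})$, since $\bm{P}\bm{e}\in\one^{\perp_{\bm{D}}}$; writing $\bm{s}(\bm{x})_\perp$ for the $\bm{D}$-orthogonal projection of $\bm{s}(\bm{x})$ onto $\one^{\perp_{\bm{D}}}$, this yields
\[
	\dot V = \langle \bm{P}\bm{e},\, \bm{s}(\bm{x})_\perp\rangle_{\bm{D}} - 2V.
\]

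The crux is to bound the coupling term by $2\alpha V$, combining two estimates via Cauchy--Schwarz in $\langle\cdot,\cdot\rangle_{\bm{D}}$. First, $\norm{\bm{P}\bm{e}}_{\bm{D}}\leq \rho\,\norm{\bm{e}}_{\bm{D}}$ from the spectral fact above. Second --- and this is the step I expect to be the main obstacle --- I need $\norm{\bm{s}(\bm{x})_\perp}_{\bm{D}}\leq K\norm{\bm{e}}_{\bm{D}}$. The trick is that the orthogonal projection $\bm{\Pi}$ annihilates $\one$, so $\bm{s}(\bm{x})_\perp = \bm{\Pi}(\bm{s}(\bm{x}) - s(\bar{x})\one)$, and $\bm{\Pi}$ is a $\bm{D}$-norm contraction; hence $\norm{\bm{s}(\bm{x})_\perp}_{\bm{D}}^2 \leq \sum_i d_i (s(x_i)-s(\bar{x}))^2 \leq K^2\sum_i d_i(x_i-\bar{x})^2 = K^2\norm{\bm{e}}_{\bm{D}}^2$, where the middle inequality is the componentwise $K$-Lipschitz bound. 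Choosing the shift $s(\bar{x})\one$ (rather than an arbitrary multiple of $\one$) is precisely what converts the scalar Lipschitz constant into a bound on the projected image. Together these give $\dot V \leq (\rho K - 1)\norm{\bm{e}}_{\bm{D}}^2 = -2(1-\alpha)V$ with $\alpha = \rho K$. Since $\alpha<1$, the comparison lemma yields $V(t)\leq V(0)e^{-2(1-\alpha)t}$, i.e. $\norm{\bm{e}(t)}_{\bm{D}}\leq \norm{\bm{e}(0)}_{\bm{D}}\,e^{-(1-\alpha)t}$, and by equivalence of $\norm{\cdot}_{\bm{D}}$ with the Euclidean norm this proves that $\Scal$ is globally exponentially stable.
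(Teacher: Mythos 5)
Your proof is correct and follows essentially the same route as the paper: the same degree-weighted Lyapunov function $V=\tfrac12\norm{\bm{e}}_{\bm{D}}^2$ with the degree-weighted mean, the same elimination of the consensus component via the shift $s(\bar{x})\one$, and the same Cauchy--Schwarz plus spectral bound on the disagreement subspace --- your $\bm{D}$-self-adjointness formulation is just the coordinate-free version of the paper's explicit change of variables $\tilde{\bm{e}}=\bm{D}^{1/2}\bm{e}$ and slope matrix $\bm{\Theta}$. Your decay rate $e^{-(1-\alpha)t}$ is in fact the correct one; the paper's final display silently redefines $\alpha$ as $K\max_i\abs{\lambda_i}-1$, which is a sign slip relative to the theorem statement.
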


\begin{proof}
Let $\bm{e} = \bm{x} - \bar{x}\one$ be the error from center of mass of the agents $\bar{x} := \sum_{i=1}^N d_i x_i / \sum_{i=1}^N d_i$. When the error is zero, the agents are synchronized. The dynamics of the error is given by 	$\dot{\bm{e}} = \dot{\bm{x}} - \dot{\bar{x}} \one$.

Let us take the Lyapunov function $V(\bm{e}) = \norm{\bm{e}}_{\bm{D}}^2/2 = \bm{e}^\top \bm{D} \bm{e} / 2$ for which $V(\bm{e}) = 0$ if and only if $\bm{e} = 0$. The time derivative of the Lyapunov function is given by
\begin{align*}
	\dot{V}(\bm{e}) &= \bm{e}^\top \bm{D} \dot{\bm{e}} = \left( \bm{x} - \bar{x}\one\right)^\top \bm{D}\left( \bm{D}^{-1} \! \bm{A} \bm{s}(\bm{x}) - \bm{x} \right).
\end{align*}

Note that since $\one$ is a the right eigenvector of $\bm{L} = \bm{D} - \bm{A}$ associated to the eigenvalue $0$, one has
\begin{align*}
    \left( \bm{x} - \bar{x}\one\right)^\top \bm{A} \one &= \left( \bm{x} - \bar{x}\one\right)^\top \bm{D} \one = \bm{x}^\top \bm{D} \one - \bar{x} \one^\top \bm{D}\one\\
    &= \sum_{i=1}^N d_i x_i - \frac{\sum_{i=1}^N d_i }{ \sum_{i=1}^N d_i}\sum_{i=1}^N d_i x_i  = 0.
\end{align*}
Thus, $\bm{e}^\top \bm{A} \one = \bm{e}^\top \bm{D} \one = 0$. 

By substituting $\bm{x} = \bm{e} + \bar{x}\one$ and $\bm{s}(\bm{x}) = \bm{s}(\bm{x}) - s(\bar{x})\one + s(\bar{x})\one$ in the dynamics of the error, one has
\begin{align*}
	\dot{V}(\bm{e}) &= \bm{e}^\top \bm{A} \bm{s}(\bm{x}) - \bm{e}^\top \bm{D} \left(\bm{x} - \bar{x}\one + \bar{x}\one \right)\\
	&= \bm{e}^\top \bm{A} \left( \bm{s}(\bm{x}) - s(\bar{x})\one + s(\bar{x})\one\right) - \bm{e}^\top \bm{D} \bm{e}\\
	&= \bm{e}^\top \bm{A} \left( \bm{s}(\bm{x}) - s(\bar{x})\one \right) - \bm{e}^\top \bm{D} \bm{e}.
\end{align*}

The Lipschitz condition of $s$ implies $|s(x_i) - s(\bar{x})| \leq K |x_i - \bar{x}| = K |e_i|$, for all $i \in \left\{1, \dots, N\right\}$. This allows us to write:
\begin{equation*}
	\bm{s}(\bm{x}) - s(\bar{x})\one = K \bm{\Theta} \bm{e}.
\end{equation*}
where $\bm{\Theta} = \mathrm{diag}(\theta_1, \dots, \theta_N)$ is a diagonal matrix with state-dependent entries satisfying $\theta_i = (s(x_i) - s(\bar{x})) / K( x_i - \bar{x}) = (s(x_i) - s(\bar{x})) / K e_i$ for $e_i \neq 0$ and $\theta_i = 1$ for $e_i = 0$. Note that $0 \leq \theta_i \leq 1$.

Substituting this into the expression for $\dot{V}$ yields:
\begin{equation*}
	\dot{V}(\bm{e}) = K \bm{e}^\top \bm{A} \bm{\Theta} \bm{e} - \bm{e}^\top \bm{D} \bm{e}.
\end{equation*}
Let us perform a change of coordinates, $\tilde{\bm{e}} = \bm{D}^{1/2} \bm{e}$. The expression for $\dot{V}$ becomes:
\begin{align*}
	\dot{V}(\bm{e}) 
	&= K \tilde{\bm{e}}^\top \tilde{\bm{A}} \bm{\Theta} \tilde{\bm{e}} - \|\tilde{\bm{e}}\|_2^2.
\end{align*}
where $\tilde{\bm{A}} = \bm{D}^{-1/2} \bm{A} \bm{D}^{-1/2}$ is the normalized adjacency matrix. As in proof of Theorem~\ref{thm:only_sync_equilibria}, $\tilde{\bm{A}}$ is similar to $\bm{D}^{-1}\bm{A}$, therefore, they share the same eigenvalues $\lambda_1, \dots, \lambda_N$.

In the transformed coordinates, the normalized eigenvector of $\tilde{\bm{A}}$ for $\lambda_N=1$ is $\tilde{\bm{v}}_N = \bm{D}^{1/2}\one$. The transformed error vector $\tilde{\bm{e}}$ is orthogonal to $\tilde{\bm{v}}_N$:
\begin{equation*}
	\tilde{\bm{e}}^\top \tilde{\bm{v}}_N = (\bm{D}^{1/2}\bm{e})^\top \bm{D}^{1/2}\one = \bm{e}^\top \bm{D} \one = 0.
\end{equation*}
Thus, $\tilde{\bm{e}}$ lies in the subspace spanned by the eigenvectors $\{\tilde{\bm{v}}_1, \dots, \tilde{\bm{v}}_{N-1}\}$ of $\tilde{\bm{A}}$, which are orthogonal to $\tilde{\bm{v}}_N$.

We now bound the quadratic form by applying the Cauchy-Schwarz inequality:
\begin{equation*}
	\tilde{\bm{e}}^\top \tilde{\bm{A}} \bm{\Theta} \tilde{\bm{e}} = (\tilde{\bm{A}}\tilde{\bm{e}})^\top (\bm{\Theta}\tilde{\bm{e}}) \leq \|\tilde{\bm{A}}\tilde{\bm{e}}\|_2 \|\bm{\Theta}\tilde{\bm{e}}\|_2.
\end{equation*}
Let us bound the two norms separately.
Since $\bm{\Theta}$ is a diagonal matrix with entries $|\theta_i| \leq 1$, its induced 2-norm satisfies $\|\bm{\Theta}\|_2 = \max_i|\theta_i| \leq 1$. Thus, $\|\bm{\Theta}\tilde{\bm{e}}\|_2 \leq \|\tilde{\bm{e}}\|_2$.

For $\|\tilde{\bm{A}}\tilde{\bm{e}}\|_2$, we use the spectral decomposition of $\tilde{\bm{A}}$ since it is symmetric. Let $\{\tilde{\bm{v}}_i\}_{i=1}^N$ be an orthonormal basis of eigenvectors for $\tilde{\bm{A}}$. As established, $\tilde{\bm{e}}$ has no component along $\bm{v}_N$, so it can be written as $\tilde{\bm{e}} = \sum_{i=1}^{N-1} c_i \tilde{\bm{v}}_i$, where $c_i = \tilde{\bm{e}}^\top \tilde{\bm{v}}_i$. Then,
\begin{equation*}
	\|\tilde{\bm{A}}\tilde{\bm{e}}\|_2 = \Big\| \sum_{i=1}^{N-1} c_i \lambda_i \tilde{\bm{v}}_i \Big\|_2  \leq  \max_{j \in \{1, \dots, N-1\}} |\lambda_j| \|\tilde{\bm{e}}\|_2.
\end{equation*}

Combining these two bounds, we get:
\begin{equation*}
	\tilde{\bm{e}}^\top \tilde{\bm{A}} \bm{\Theta} \tilde{\bm{e}} \leq \Big( \max_{i \in \{1, \dots, N-1\}} |\lambda_i| \Big) \|\tilde{\bm{e}}\|_2^2.
\end{equation*}

Substituting this back into the expression for $\dot{V}(\bm{e})$:
\begin{align*}
	\dot{V}(\bm{e}) &\leq K \Big( \max_{i \in \{1, \dots, N-1\}} |\lambda_i| \Big) \|\tilde{\bm{e}}\|_2^2 - \|\tilde{\bm{e}}\|_2^2 \\
	&= 2 \Big( K \max_{i \in \{1, \dots, N-1\}} |\lambda_i| - 1 \Big) V(\bm{e}) = - 2 \alpha V(\bm{e}),
\end{align*}
since $\|\tilde{\bm{e}}\|_2^2 = 2V(\bm{e})$ and $\alpha = K \max_{i \in \{1, \dots, N-1\}} |\lambda_i| - 1 > 0$ by the condition of the theorem. By Grönwall's inequality, we have:
\begin{equation*}
	V(\bm{e}(t)) \leq V(\bm{e}(0)) e^{-2 \alpha t} \Leftrightarrow \|\bm{e}(t)\|_{\bm{D}} \leq \|\bm{e}(0)\|_{\bm{D}} e^{-\alpha t}.
\end{equation*}
This shows that $V(\bm{e}(t)) \to 0$ exponentially, which implies $\|\bm{e}(t)\|_{\bm{D}} \to 0$ exponentially. Since the graph is connected, $\bm{D}$ is positive definite, and the norm $\|\cdot\|_{\bm{D}}$ is equivalent to any other vector norm on $\mathbb{R}^N$. Thus, the manifold $\mathcal{S}$ (where $\bm{e}=0$) is globally exponentially stable.
\end{proof}

\section{Input-to-State Stability of Synchronized Clusters} \label{sec:ISS}

Having established the sharp conditions for full synchronization, we turn to the equally important and arguably more common phenomenon of \emph{local synchronization}, or clustering. In many complex networks, from biological systems to social organizations, global consensus is neither the goal nor the outcome. Instead, these systems self-organize into distinct and coherent clusters that perform specialized tasks while maintaining a persistent, structured disagreement with one another. This raises a natural question: why do such clusters maintain their internal cohesion when they are perpetually influenced by the rest of the network?

\subsection{Preliminaries on Input-to-State Stability}

This section analyzes the stability of these clusters through the lens of \emph{Input-to-State Stability}. We model the disagreement \emph{within} a cluster as the system state and the influence from the rest of the network as a persistent external \emph{input}. Our main result demonstrates that if a cluster possesses sufficient internal cohesion, its internal disagreement is bounded and robust to these external influences. Crucially, we show that this bound depends not on the total magnitude of the external influence, but on its heterogeneity across the cluster's nodes.

To build on this argument, we begin by recalling its formal definition.

\begin{definition}[Input-to-State Stable (ISS)]
	Consider a nonlinear system described by the differential equation
	\begin{equation*}
		\dot{\bm{x}}(t) = f(\bm{x}(t), \bm{u}(t)),
	\end{equation*}
	where $\bm{x}(t) \in \mathbb{R}^n$ is the state vector and $\bm{u}(t) \in \mathbb{R}^m$ is the input vector. Let $\bm{x}(t)$ denote the solution starting from the initial state $\bm{x}(0) = \bm{x}_0$ under the input $\bm{u}(t)$.
	
	The system is said to be \emph{Input-to-State Stable (ISS)} if there exist functions $\gamma: [0, \infty) \to [0, \infty)$ and $\beta: [0, \infty) \times [0, \infty) \to [0, \infty)$ such that:
	\begin{enumerate}
		\item $\gamma$ is a $\mathcal{K}$ function. i.e., it is continuous, strictly increasing, and $\gamma(0) = 0$.
		\item $\beta$ is a $\mathcal{K}\mathcal{L}$ function. i.e., for any fixed $t \ge 0$, $\beta(\cdot, t)$ is a $\mathcal{K}$ function, and for any fixed $r > 0$, $\beta(r, t)$ is decreasing with respect to $t$ and $\lim_{t \to \infty} \beta(r, t) = 0$.
	\end{enumerate}

	And for any initial state $\bm{x}_0 \in \mathbb{R}^n$ and any bounded measurable input $\bm{u}: [0, \infty) \to \mathbb{R}^m$, the solution $\bm{x}(t)$ exists for all $t \ge 0$ and satisfies the inequality:
	\begin{equation*} \label{eq:iss_definition}
		\norm{\bm{x}(t)} \le \beta(\norm{\bm{x}_0}, t) + \gamma\big(\sup_{0 \leq \tau \leq t} \norm{\bm{u}(\tau)}\big), \quad \text{for all $t \ge 0$.}
	\end{equation*}
\end{definition}
	
\begin{remark}	
	The ISS inequality captures two key properties. First, with zero input ($u(t)\equiv0$), the system is globally asymptotically stable. Second, with a bounded input, the state is guaranteed to remain ultimately bounded. For finite-dimensional systems, this property holds irrespective of the chosen norm.
\end{remark}

\subsection{ISS Analysis of Internal Cluster Disagreement}

We now apply this framework to analyze the dynamics within a connected subgraph $\Gcal' = (\Vcal', \Ecal')$ that contains $N'$ nodes. Our goal is to express the dynamics of the cluster $\Vcal'$ in the form of an ISS system, where the ``state" is the internal disagreement within the cluster and the ``input" is the influence from the rest of the network. i.e., from the node in $\Vcal \setminus \Vcal'$.

Let $\bm{S}\in \R^{N'\times N}$ be the selection matrix that maps the full state vector $\bm{x}$ to the subgraph state vector $\bm{x}' := \bm{S} \bm{x}$. Specifically, for $j \in \Vcal'$, $\bm{S}_{ij} = 1$ for its corresponding index $i \in \Vcal$ and $0$ otherwise. Note that $\bm{S}\bm{S}^\top = \bm{I}_{N'}$ and $\bm{S}^\top\bm{S} = \mathrm{diag}(\mathbb{I}_{\Vcal'})$, where $\mathbb{I}_{\Vcal'}\in \R^N$ is the indicator vector of $\Vcal' \subseteq \Vcal$. i.e., $[\mathbb{I}_{\Vcal'}]_i = 1$ if node $i \in \Vcal$ is in $\Vcal'$ and $0$ otherwise.

Applying this to the system dynamics \eqref{eq:dynamic} yields:
\begin{align} \label{eq:subgraph_dyn_intermediate}
	\dot{\bm{x}}' = \bm{S} \bm{D}^{-1} \bm{A} \bm{s}(\bm{x}) - \bm{x}'.
\end{align}

We can decompose the term $\bm{A} \bm{s}(\bm{x})$ based on the subgraph structure. Let $\bm{A}_{\mathrm{in}} = \bm{S} \bm{A} \bm{S}^\top \in \R^{N\times N} $ be the adjacency matrix of the induced subgraph $\Gcal'$. Let $\bm{A}_{\mathrm{ext}}$ be the matrix representing connections between $\Vcal'$ and $\Vcal \setminus \Vcal'$. Then, for the nodes in $\Vcal'$, the input from the network can be split into an internal and an external part:
\begin{equation*}
    \bm{S} \bm{A} \bm{s}(\bm{x}) = \bm{A}_{\mathrm{in}} \bm{s}(\bm{x}') + \bm{S}\bm{A}_{\mathrm{ext}} \bm{s}(\bm{x}).
\end{equation*}

Let us define the following:
\begin{itemize}
	\item $\bm{D}_{\mathrm{in}} = \mathrm{diag}(d_i^{\mathrm{in}})_{i \in \Vcal'}$, where $d_i^{\mathrm{in}} = \sum_{j \in \Vcal'} a_{ij}$, is the diagonal matrix of internal degrees for nodes in $\Gcal'$.
	\item $\bm{\widetilde{D}}_{\mathrm{in}} = \bm{S} \bm{D} \bm{S}^\top \in \R^{N'\times N'}$ is the diagonal matrix containing the total degrees $d_i$ for nodes $i \in \Vcal'$ in the original graph $\Gcal$. Note that $\bm{\widetilde{D}}_{\mathrm{in}}^{-1} = \bm{S} \bm{D}^{-1} \bm{S}^\top$. Since $\bm{S}^\top\bm{S}$ is diagonal and $\bm{S}\bm{S}^\top = \bm{I}_{N'}$, one has $\bm{\widetilde{D}}_{\mathrm{in}}^{-1} \bm{S} = \bm{S} \bm{D}^{-1} \bm{S}^\top\bm{S}=  \bm{S} \bm{S}^\top\bm{S} \bm{D}^{-1}  = \bm{S} \bm{D}^{-1}$.
\end{itemize}

Substituting these definitions into \eqref{eq:subgraph_dyn_intermediate} and rearranging allows us to express the dynamics as an ideal internal system plus two distinct perturbation terms:
\begin{align} \label{eq:dynamics_subgraph_perturbated}
	\dot{\bm{x}}' &= \bm{D}_{\mathrm{in}}^{-1} \bm{A}_{\mathrm{in}} \bm{s}(\bm{x}') - \bm{x}' \nonumber \\
	& \qquad + \left( \bm{\widetilde{D}}_{\mathrm{in}}^{-1} - \bm{D}_{\mathrm{in}}^{-1} \right) \bm{A}_{\mathrm{in}} \bm{s}(\bm{x}') + \bm{\widetilde{D}}_{\mathrm{in}}^{-1} \bm{S} \bm{A}_{\mathrm{ext}} \bm{s}(\bm{x}) \nonumber \\
	& := \underbrace{\left(\bm{D}_{\mathrm{in}}^{-1} \bm{A}_{\mathrm{in}} \bm{s}(\bm{x}') - \bm{x}'\right)}_{\text{Ideal Internal Dynamics}} + \underbrace{\bm{p}(\bm{x}).}_{\text{Total Perturbation}}
\end{align}

The total perturbation $\bm{p}(\bm{x})$ originates from two sources:
\begin{itemize}
    \item \emph{External Perturbation}: Influence from neighbors outside the cluster, given by $\bm{\widetilde{D}}_{\mathrm{in}}^{-1} \bm{S} \bm{A}_{\mathrm{ext}} \bm{s}(\bm{x})$.
    \item \emph{Internal Perturbation}: A more subtle effect arising from the mismatch between a node's total degree and its internal degree, given by $\left( \bm{\widetilde{D}}_{\mathrm{in}}^{-1} - \bm{D}_{\mathrm{in}}^{-1} \right) \bm{A}_{\mathrm{in}} \bm{s}(\bm{x}')$.
\end{itemize}

A key insight of consensus dynamics is that the system's internal disagreement is \emph{insensitive to any uniform perturbation that pushes all nodes in the cluster equally}. The component of the perturbation that drives disagreement is its non-uniform part. We therefore define the \emph{residual perturbation} as the component of $\bm{p}(\bm{x})$ that is orthogonal to the fully synchronized manifold of the cluster:
\begin{equation*}
	\bm{\tilde{p}}(\bm{x}) := \bm{p}(\bm{x}) - \bar{p}(\bm{x}) \one_{N'},
\end{equation*}
where $\bar{p}(\bm{x}) = (\sum_{i\in \Vcal'} d_i^{\mathrm{in}} [p(\bm{x})]_i)/(\sum_{i \in \Vcal'} d_i^{\mathrm{in}})$ is the weighted average of the perturbation over the cluster.

The norm of the residual perturbation, 
\begin{equation*}
    \norm{\bm{\tilde{p}}(\bm{x})}_{\bm{D}_{\mathrm{in}}}^2 = \sum_{i=1}^{N'} d_i^{\mathrm{in}}([p(\bm{x})]_i - \bar{p}(\bm{x}))^2,
\end{equation*}
is a weighted variance of the total perturbation's effect on the subgraph. The closest $\norm{\bm{\tilde{p}}(\bm{x})}_{\bm{D}_{\mathrm{in}}}$ is to $0$, the most uniform is the perturbation across the subgraph $\Gcal'$. We can define an upper bound for this norm as:
\begin{equation*}
	\norm{\bm{\tilde{p}}(\bm{x})}_{\bm{D}_{\mathrm{in}}} \leq \sup_{\bm{x} \in \mathcal{X}} \norm{\bm{\tilde{p}}(\bm{x})}_{\bm{D}_{\mathrm{in}}} := P^{\mathrm{sup}}.
\end{equation*}

We can now state the main result of this section.

\begin{theorem}[ISS of Synchronized Clusters]\label{thm:ISS_subgraph}
Consider a connected induced subgraph $\Gcal' = (\Vcal', \Ecal')$ of the graph $\Gcal$. Let us define $\alpha_{\mathrm{in}} := 1 - K \max_{i\in \{1, \dots, N-1\}}| \lambda_{i}(\bm{D}_{\mathrm{in}}^{-1} \bm{A}_{\mathrm{in}})|$.

If the internal stability condition $\alpha_{\mathrm{in}} > 0$ holds, then the disagreement within the subgraph $\Gcal'$ is input-to-state stable with respect to the residual perturbation $\bm{\tilde{p}}(\bm{x})$.

Specifically, for the disagreement vector $\bm{e} = \bm{x}' - \bar{x}' \one_{N'}$, where $\bar{x}' = (\sum_{i \in \Vcal'} d_i^{\mathrm{in}} x_i)/(\sum_{i \in \Vcal'} d_i^{\mathrm{in}})$, there exist a $\mathcal{KL}$ function $\beta$ and a $\mathcal{K}$ function $\gamma$, given by
	\begin{equation*}
		\beta(r, t) = r e^{-\alpha_{\mathrm{in}} t} \quad \text{and} \quad \gamma(r) = \frac{r}{\alpha_{\mathrm{in}}},
	\end{equation*}
	such that the following ISS inequality holds for all $t \ge 0$:
	\begin{equation*}
		\norm{\bm{e}(t)}_{\bm{D}_{\mathrm{in}}} \leq \beta(\norm{\bm{e}(0)}_{\bm{D}_{\mathrm{in}}}, t) + \gamma\left(\sup_{0 \leq \tau \leq t} \norm{\bm{\tilde{p}}(\bm{x}(\tau))}_{\bm{D}_{\mathrm{in}}}\right).
	\end{equation*}	

Moreover, the norm of the disagreement $\norm{\bm{e}(t)}_{\bm{D}_{\mathrm{in}}}$ is ultimately bounded by $P^{\mathrm{sup}} / \alpha_{\mathrm{in}}$.	
\end{theorem}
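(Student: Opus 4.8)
The plan is to adapt the Lyapunov argument of Theorem~\ref{thm:global_exponential_stability} to the perturbed internal dynamics \eqref{eq:dynamics_subgraph_perturbated}, treating the residual perturbation as the ISS input. I would take the candidate Lyapunov function $V(\bm{e}) = \tfrac{1}{2}\norm{\bm{e}}_{\bm{D}_{\mathrm{in}}}^2$ for the disagreement vector $\bm{e} = \bm{x}' - \bar{x}'\one_{N'}$. The first key observation, mirroring the full-network proof, is that the $\bm{D}_{\mathrm{in}}$-weighted mean $\bar{x}'$ makes $\bm{e}$ satisfy $\bm{e}^\top\bm{D}_{\mathrm{in}}\one_{N'} = 0$; consequently the term $\dot{\bar{x}'}\one_{N'}$ contributes nothing, and I may compute $\dot{V} = \bm{e}^\top\bm{D}_{\mathrm{in}}\dot{\bm{x}}'$ directly from \eqref{eq:dynamics_subgraph_perturbated}.

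The second, and conceptually central, step exploits this same orthogonality to reduce the full perturbation to its heterogeneous part. Since $\bm{e}^\top\bm{D}_{\mathrm{in}}\one_{N'} = 0$, the cross term obeys $\bm{e}^\top\bm{D}_{\mathrm{in}}\bm{p}(\bm{x}) = \bm{e}^\top\bm{D}_{\mathrm{in}}(\bm{p}(\bm{x}) - \bar{p}\one_{N'}) = \bm{e}^\top\bm{D}_{\mathrm{in}}\bm{\tilde{p}}(\bm{x})$, so only $\bm{\tilde{p}}$ enters the energy balance; this is precisely the mechanism behind the claim that cohesion is insensitive to the magnitude, and sensitive only to the heterogeneity, of the external influence. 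For the ideal internal part $\bm{e}^\top\bm{A}_{\mathrm{in}}\bm{s}(\bm{x}') - \bm{e}^\top\bm{D}_{\mathrm{in}}\bm{e}$ I would reuse the machinery of Theorem~\ref{thm:global_exponential_stability} verbatim on the subgraph: insert $s(\bar{x}')\one_{N'}$ using $\bm{A}_{\mathrm{in}}\one_{N'} = \bm{D}_{\mathrm{in}}\one_{N'}$, write $\bm{s}(\bm{x}') - s(\bar{x}')\one_{N'} = K\bm{\Theta}\bm{e}$ with $\bm{\Theta}$ diagonal and $0 \le \theta_i \le 1$ (by Lipschitz-continuity and monotonicity of $s$), pass to $\tilde{\bm{e}} = \bm{D}_{\mathrm{in}}^{1/2}\bm{e}$, and note $\tilde{\bm{e}} \perp \bm{D}_{\mathrm{in}}^{1/2}\one_{N'}$, the Perron eigenvector of the symmetric normalized $\tilde{\bm{A}}_{\mathrm{in}} = \bm{D}_{\mathrm{in}}^{-1/2}\bm{A}_{\mathrm{in}}\bm{D}_{\mathrm{in}}^{-1/2}$ for the simple eigenvalue $1$ (the subgraph $\Gcal'$ being connected). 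A Cauchy--Schwarz estimate then bounds this quadratic form over the non-Perron spectrum, yielding $\dot{V} \le -2\alpha_{\mathrm{in}} V + \bm{e}^\top\bm{D}_{\mathrm{in}}\bm{\tilde{p}}$.

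To close the ISS estimate I would bound the cross term by Cauchy--Schwarz in the $\bm{D}_{\mathrm{in}}$-inner product, $\bm{e}^\top\bm{D}_{\mathrm{in}}\bm{\tilde{p}} \le \norm{\bm{e}}_{\bm{D}_{\mathrm{in}}}\norm{\bm{\tilde{p}}}_{\bm{D}_{\mathrm{in}}}$, and convert the inequality on $V$ into one on $W := \norm{\bm{e}}_{\bm{D}_{\mathrm{in}}} = \sqrt{2V}$, obtaining $\dot{W} \le -\alpha_{\mathrm{in}} W + \norm{\bm{\tilde{p}}}_{\bm{D}_{\mathrm{in}}}$. Applying the comparison lemma (Gr\"onwall) gives $W(t) \le W(0)e^{-\alpha_{\mathrm{in}} t} + \int_0^t e^{-\alpha_{\mathrm{in}}(t-\tau)}\norm{\bm{\tilde{p}}(\bm{x}(\tau))}_{\bm{D}_{\mathrm{in}}}\,d\tau$, and bounding the convolution by $\tfrac{1}{\alpha_{\mathrm{in}}}\sup_{0\le\tau\le t}\norm{\bm{\tilde{p}}(\bm{x}(\tau))}_{\bm{D}_{\mathrm{in}}}$ produces exactly the advertised $\beta(r,t)=re^{-\alpha_{\mathrm{in}}t}$ and $\gamma(r)=r/\alpha_{\mathrm{in}}$. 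The ultimate bound $P^{\mathrm{sup}}/\alpha_{\mathrm{in}}$ follows by letting $t \to \infty$ and using $\norm{\bm{\tilde{p}}}_{\bm{D}_{\mathrm{in}}} \le P^{\mathrm{sup}}$.

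I expect the main obstacle to be the passage from the differential inequality on $V$ to the one on $W$: dividing by $W$ is illegitimate where $\bm{e}$ vanishes, so this step must be justified through the upper Dini derivative (the comparison lemma still applies since $\dot{V}\le 0$ when $\bm{e}=0$) rather than naive division. The spectral bound on the internal quadratic form is routine given Theorem~\ref{thm:global_exponential_stability}, but I would need to be careful that the maximum is taken over the $N'-1$ non-Perron eigenvalues of the subgraph's own operator $\bm{D}_{\mathrm{in}}^{-1}\bm{A}_{\mathrm{in}}$ (so the index range in the definition of $\alpha_{\mathrm{in}}$ should read $\{1,\dots,N'-1\}$), and that connectivity of $\Gcal'$ is precisely what guarantees the eigenvalue $1$ is simple so that $\tilde{\bm{e}}$ genuinely avoids the Perron direction.
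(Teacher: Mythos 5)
Your proposal follows essentially the same route as the paper's proof: the same $\bm{D}_{\mathrm{in}}$-weighted Lyapunov function, the same orthogonality argument $\bm{e}^\top\bm{D}_{\mathrm{in}}\one_{N'}=0$ that eliminates the uniform part of the perturbation, the same reuse of the spectral machinery of Theorem~\ref{thm:global_exponential_stability} for the internal quadratic form, and the same Cauchy--Schwarz plus Gr\"onwall/comparison step yielding $\beta$ and $\gamma$. Your two cautionary remarks are both sound: the paper treats the division by $\norm{\bm{e}}_{\bm{D}_{\mathrm{in}}}$ only informally (asserting the inequality ``holds trivially'' at zero, where a Dini-derivative comparison argument is the rigorous fix), and your observation that the index range in the definition of $\alpha_{\mathrm{in}}$ should run over the $N'-1$ non-Perron eigenvalues of the $N'\times N'$ operator $\bm{D}_{\mathrm{in}}^{-1}\bm{A}_{\mathrm{in}}$ correctly identifies what appears to be a typo in the theorem statement.
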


\begin{proof}
	The proof starts by considering the ISS-Lyapunov function candidate:
	\begin{equation*}
		V(\bm{e}) = \frac{1}{2} \bm{e}^\top \bm{D}_{\mathrm{in}} \bm{e} = \frac{1}{2} \norm{\bm{e}}_{\bm{D}_{\mathrm{in}}}^2.
	\end{equation*}

	From the dynamics in \eqref{eq:dynamics_subgraph_perturbated} and following similar steps to the proof of Theorem~\ref{thm:global_exponential_stability}, the time derivative of $V(\bm{e})$ along the trajectories of the system is:
	\begin{equation*}
		\dot{V}(\bm{e}) = \bm{e}^\top \bm{A}_{\mathrm{in}} \bm{s}(\bm{x}') - \bm{e}^\top \bm{D}_{\mathrm{in}} \bm{x}' + \bm{e}^\top \bm{D}_{\mathrm{in}} \bm{p}(\bm{x}).
	\end{equation*}
	The internal dynamics term is bounded as (from the proof of Theorem~\ref{thm:global_exponential_stability}):
	\begin{equation*}
		\bm{e}^\top \bm{A}_{\mathrm{in}} \bm{s}(\bm{x}') - \bm{e}^\top \bm{D}_{\mathrm{in}} \bm{x}' \leq  -2\alpha_{\mathrm{in}} V(\bm{e}).
	\end{equation*}

	The perturbation term $\bm{e}^\top \bm{D}_{\mathrm{in}} \bm{p}(\bm{x})$ is analyzed by decomposing the perturbation vector $\bm{p}(\bm{x})$ into its $\bm{D}_{\mathrm{in}}$-weighted average, $\bar{p}(\bm{x}) \one_{N'}$, and the residual part, $\bm{\tilde{p}}(\bm{x})$. Thus, $\bm{p}(\bm{x}) = \bar{p}(\bm{x}) \one_{N'} + \bm{\tilde{p}}(\bm{x})$.
		
	Substituting this decomposition into the perturbation term yields:
	\begin{align*}
		\bm{e}^\top \bm{D}_{\mathrm{in}} \bm{p} (\bm{x}) &= \bm{e}^\top \bm{D}_{\mathrm{in}} \left( \bar{p}(\bm{x}) \one_{N'} + \bm{\tilde{p}}(\bm{x}) \right) \\
        &= \bar{p}(\bm{x})\bm{e}^\top \bm{D}_{\mathrm{in}}  \one_{N'} + \bm{e}^\top \bm{D}_{\mathrm{in}}\bm{\tilde{p}}(\bm{x})= \bm{e}^\top \bm{D}_{\mathrm{in}} \bm{\tilde{p}}(\bm{x}).
	\end{align*}
	Since $\bm{e}^\top \bm{D}_{\mathrm{in}} \one_{N'} = 0$ by definition of the weighted average $\bar{x}'$, the first term vanishes (as established in proof of Theorem~\ref{thm:global_exponential_stability}).
	
	Using the Cauchy-Schwarz inequality for the $\bm{D}_{\mathrm{in}}$-weighted inner product $\langle \mathbf{u}, \mathbf{v} \rangle_{\bm{D}_{\mathrm{in}}} = \mathbf{u}^\top \bm{D}_{\mathrm{in}} \mathbf{v}$:
	\begin{align*}
		\bm{e}^\top \bm{D}_{\mathrm{in}} \bm{\tilde{p}}(\bm{x}) = \langle \bm{e}, \bm{\tilde{p}}(\bm{x}) \rangle_{\bm{D}_{\mathrm{in}}} \leq  \norm{ \bm{e} }_{\bm{D}_{\mathrm{in}}} \norm{ \bm{\tilde{p}}(\bm{x}) }_{\bm{D}_{\mathrm{in}}}.
	\end{align*}

Combining these, the differential inequality for $V(\bm{e})$ becomes:
\begin{equation*}
	\dot{V}(\bm{e}) \leq -2\alpha_{\mathrm{in}} V(\bm{e}) + \norm{\bm{e}}_{\bm{D}_{\mathrm{in}}} \norm{\bm{\tilde{p}}(\bm{x}(t))}_{\bm{D}_{\mathrm{in}}}.
\end{equation*}

Let $y(t) = \norm{\bm{e}(t)}_{\bm{D}_{\mathrm{in}}}$. Then $V(\bm{e}(t)) = \frac{1}{2} y(t)^2$ and its time derivative is $\dot{V}(t) = y(t) \dot{y}(t)$. Substituting these into the inequality gives:
\begin{align*}
	y(t) \dot{y}(t) \leq -\alpha_{\mathrm{in}} y(t)^2 + y(t) \norm{\bm{\tilde{p}}(\bm{x}(t))}_{\bm{D}_{\mathrm{in}}}.
\end{align*}

For any $t$ where $y(t) > 0$, i.e., $\norm{\bm{e}(t)}_{\bm{D}_{\mathrm{in}}} > 0$, we can divide by $y(t)$:
\begin{equation*}
	\dot{y}(t) \leq -\alpha_{\mathrm{in}} y(t) + \norm{\bm{\tilde{p}}(\bm{x}(t))}_{\bm{D}_{\mathrm{in}}} = -\alpha_{\mathrm{in}} y(t) + u(t),
\end{equation*}
where $u(t) = \norm{\bm{\tilde{p}}(\bm{x}(t))}_{\bm{D}_{\mathrm{in}}}$. This is a linear differential inequality $\dot{y}(t) \leq -\alpha_{\mathrm{in}} y(t) + u(t)$, which also holds trivially when $y(t)=0$.

Applying Gronwall's inequality, the solution is bounded by:
\begin{align*}
	y(t) &\leq y(0) e^{-\alpha_{\mathrm{in}} t} + \int_0^t e^{-\alpha_{\mathrm{in}}(t-\tau)} u(\tau) \mathrm{d}\tau \\
	&\leq y(0) e^{-\alpha_{\mathrm{in}} t} + \left(\sup_{0 \leq \tau \leq t} u(\tau)\right) \int_0^t e^{-\alpha_{\mathrm{in}}(t-\tau)} \mathrm{d}\tau \\
	&= y(0) e^{-\alpha_{\mathrm{in}} t} + \frac{1}{\alpha_{\mathrm{in}}} \left(\sup_{0 \leq \tau \leq t} u(\tau)\right) (1 - e^{-\alpha_{\mathrm{in}} t}).
\end{align*}
Since $(1 - e^{-\alpha_{\mathrm{in}} t}) \le 1$ for $t \ge 0$, we have the simpler bound:
\begin{equation*}
	y(t) \leq y(0) e^{-\alpha_{\mathrm{in}} t} + \frac{1}{\alpha_{\mathrm{in}}} \sup_{0 \leq \tau \leq t} u(\tau).
\end{equation*}

Substituting back $y(t) = \norm{\bm{e}(t)}_{\bm{D}_{\mathrm{in}}}$ yields the ISS inequality:
\begin{equation*}
	\norm{\bm{e}(t)}_{\bm{D}_{\mathrm{in}}} \leq \norm{\bm{e}(0)}_{\bm{D}_{\mathrm{in}}} e^{-\alpha_{\mathrm{in}} t} + \frac{1}{\alpha_{\mathrm{in}}} \sup_{0 \leq \tau \leq t} \norm{\bm{\tilde{p}}(\bm{x}(\tau))}_{\bm{D}_{\mathrm{in}}}.
\end{equation*}
with $\beta(r, t) = r e^{-\alpha_{\mathrm{in}} t}$ and $\gamma(r) = r /\alpha_{\mathrm{in}}$.

For the ultimate bound, taking the limit sup as $t \to \infty$:
\begin{align*}
	\limsup_{t \to \infty} \norm{\bm{e}(t)}_{\bm{D}_{\mathrm{in}}} \leq \limsup_{t \to \infty} \Big( \norm{\bm{e}(0)}_{\bm{D}_{\mathrm{in}}} e^{-\alpha_{\mathrm{in}} t} + \\ \frac{1}{\alpha_{\mathrm{in}}} \sup_{0 \leq \tau \leq t} \norm{\bm{\tilde{p}}(\bm{x}(\tau))}_{\bm{D}_{\mathrm{in}}} \Big).
\end{align*}
Since $\alpha_{\mathrm{in}} > 0$, the first term goes to zero, leaving:
\begin{align*}
	\limsup_{t \to \infty} \norm{\bm{e}(t)}_{\bm{D}_{\mathrm{in}}} \leq \frac{1}{\alpha_{\mathrm{in}}} \sup_{t \ge 0} \norm{\bm{\tilde{p}}(\bm{x}(t))}_{\bm{D}_{\mathrm{in}}} \leq \frac{P^{\mathrm{sup}}}{\alpha_{\mathrm{in}}}.
\end{align*}
This concludes the proof.
\end{proof}

Theorem~\ref{thm:ISS_subgraph} states that a sufficiently cohesive cluster can maintain its internal agreement, even when perpetually influenced by the rest of the network. The theorem formalizes this by demonstrating that the cluster's internal disagreement is ISS. This implies that while initial disagreement within the cluster decays exponentially, persistent external influence results in a finite ultimate bound on this disagreement. Thus, Theorem~\ref{thm:ISS_subgraph} provides a tool for analyzing the stability of NFSE, interpreting the network as an interconnection of robust and internally stable clusters.

An important consequence of this framework is the identification of \emph{variance} (or non-uniformity) of the external perturbation being the true cause of internal disagreement. A uniform influence, regardless of its magnitude, would not prevent the cluster from achieving perfect internal consensus asymptotically. It is therefore the \emph{heterogeneity} of external signals, which causes boundary nodes to be pulled in different directions, that fundamentally limits local synchronization. This explains the clustering obtained in \cite{couthuresAnalysisOpinionDynamics2024}, where a uniform environmental signal becomes an effectively non-uniform perturbation due to heterogeneous agent gains, sustaining disagreement as our framework predicts. 

\begin{remark}[A Unifying Principle of Stability]
	The condition for a cluster's internal stability, $\alpha_{\mathrm{in}} > 0$, is structurally identical to the condition for global exponential stability in Theorem~\ref{thm:global_exponential_stability}. This reveals a scale-free principle: a group of agents, whether the entire network or a single cluster, maintains cohesion if its internal connectivity is strong enough to overcome the maximal destabilizing effect of the nonlinearity, a principle elegantly quantified by the eigenvalues of the relevant subgraph dynamics.
\end{remark}

\begin{remark}[A Practical Bound on Perturbation]
	The ultimate bound on disagreement depends on the maximum residual perturbation, $P^{\mathrm{sup}}$. This term can be bounded by the structural properties of the cluster's boundary:
    \begin{equation*}
        P^{\mathrm{sup}} \leq 2 \sum_{i \in \Vcal'} d_i^{\mathrm{in}} \left(\frac{d_i^{\mathrm{ext}}}{d_i}\right)^2,
    \end{equation*}
    where $d_i^{\mathrm{ext}} = d_i - d_i^{\mathrm{in}}$ is the number of external connections for node $i \in \Vcal'$. This provides a concrete, computable upper bound on the asymptotic synchronization error within a cluster, based entirely on the network topology.
\end{remark}

\begin{figure}
	\centering
	\begin{subfigure}[t]{\linewidth}
		\centering
		\includegraphics[width=.75\linewidth]{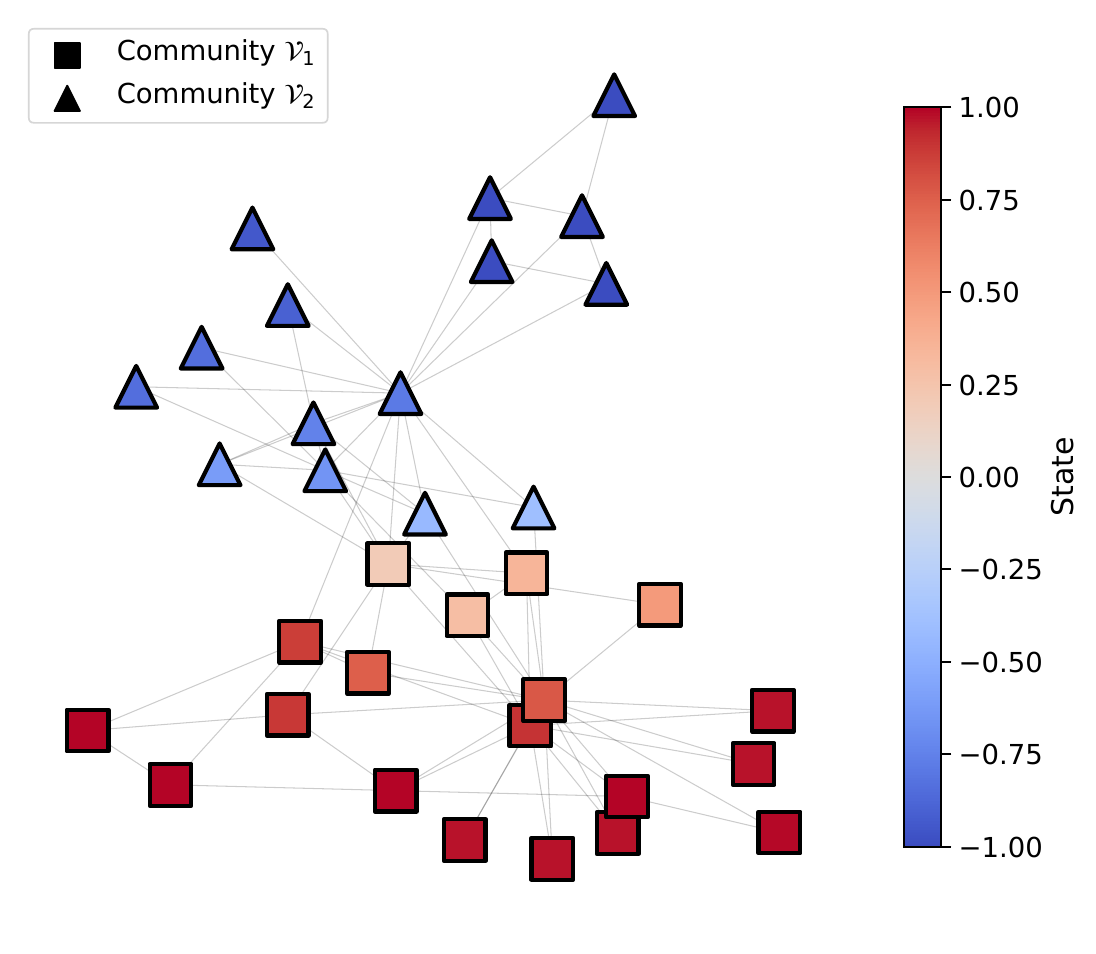}
		\caption{Final equilibrium state of the network. Node shapes (triangle/square) indicate the two strong communities: $\mathcal{V}_1$ and $\mathcal{V}_2$. The communities were obtained using \cite{MG10}. Node colors represent the final state value $x_i^*$, demonstrating local synchronization within clusters but global disagreement between them.}
        \label{fig:karate_a}
	\end{subfigure}
    \hfil
	\begin{subfigure}[t]{\linewidth}
		\centering
		\includegraphics[width=\linewidth]{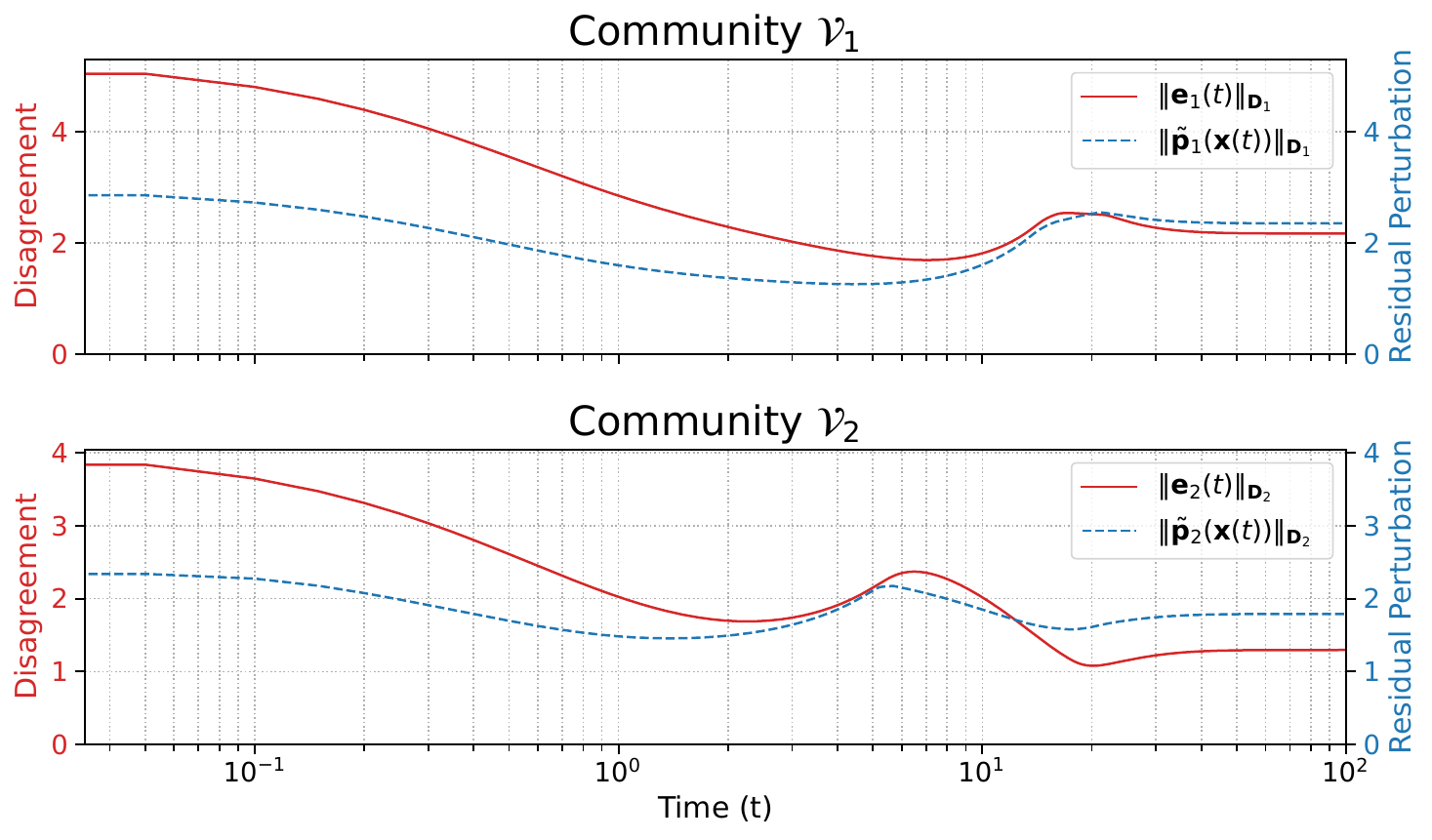}
		\caption{ISS analysis for each community. For $k \in \{1,2\}$, the internal disagreement $\norm{\bm{e}_k(t)}_{\bm{D}_k}$ (the ISS ``state" in red) decays exponentially from its initial condition and converges to a non-zero ultimate bound. This bound is sustained by the persistent residual perturbation $\norm{\bm{\tilde{p}}_k(\bm{x}(t))}_{\bm{D}_k}$ from the opposing cluster (the ISS ``input" in blue), precisely as predicted by Theorem~\ref{thm:ISS_subgraph}.}
        \label{fig:karate_b}
	\end{subfigure}
    \caption{Numerical validation of the ISS framework on Zachary's Karate Club network, a canonical example of a social network with a strong, empirically verified community structure. The simulation uses a signal function $s(x)= \max(-1, \min(1, Kx))$ (with $K=1.2$) chosen to violate the threshold for full synchronization ($K\lambda_{N-1} > 1$), forcing the emergence of a stable, clustered equilibrium (NFSE).}\vspace{-.5cm}
    \label{fig:karate}
\end{figure}

\subsection{Numerical Illustration: Robust Clustering in the Karate Club Network}

To demonstrate our theoretical framework in action, we apply it to Zachary's Karate Club network \cite{zacharyInformationFlowModel1977}. As detailed in the caption of Figure~\ref{fig:karate}, the simulation parameters are chosen specifically to violate the conditions for full synchronization, creating an environment where clustering is the expected outcome.

As predicted by our analysis, the network forgoes global consensus and instead converges to a stable NFSE. The final state of the network, visualized in Figure~\ref{fig:karate_a}, shows a clear partitioning of agents' opinions. The states align perfectly with the two ground-truth communities, forming two internally synchronized but mutually opposed clusters. This outcome confirms that, when the sharp threshold is crossed, the system can transition into a state of structured, local agreement.

This clustered equilibrium provides a perfect setting for validating the core predictions of our ISS analysis. In Figure~\ref{fig:karate_b}, we treat each community as a subsystem and plot the time evolution of its internal disagreement (the ISS ``state") and the residual perturbation it receives from the opposing cluster (the ISS ``input"). The resulting dynamics are a standard illustration of Input-to-State Stability:
\begin{itemize}
    \item The internal disagreement (red curve) exponentially decays from its initial value, demonstrating the cluster's internal stability ($\alpha_{\mathrm{in}} > 0$).
    \item However, the disagreement does not converge to zero. It is sustained by the persistent, non-uniform perturbation from the other cluster (blue curve), settling to a non-zero ultimate bound.
\end{itemize}
This behavior, of exponential decay to a non-zero bound sustained by a persistent input, is precisely the outcome guaranteed by the ISS inequality in Theorem~\ref{thm:ISS_subgraph}. The simulation thus provides a concrete validation of our theoretical results.

\section{Conclusion}\label{sec:conclusion}

This work established a comprehensive theory for the emergence and stability of synchronized states in multi-agent systems with general nonlinear interactions. We demonstrated that the transition from global consensus to robust, localized clustering is governed by a sharp threshold, $K\lambda_{N-1} < 1$, that unifies the interplay between agent nonlinearity and network connectivity. By introducing a novel framework based on Input-to-State Stability, we moved beyond analyzing the existence of these clustered states to formally quantifying their robustness. This analysis yielded a practical insight: the cohesion of a synchronized cluster is fundamentally limited not by the \emph{magnitude}, but by the \emph{heterogeneity} of external network perturbations.

These findings provide both a predictive design principle and a formal methodology for analyzing modularity in complex networks, opening several compelling avenues for future research. An immediate next step is to deepen the analysis of the clusters themselves, moving beyond internal cohesion to performance guarantees such as robust reference tracking and state containment.




\appendix

\subsection{Proof of Lemma~\ref{lemma:existence_of_stable_fixed_point}}\label{app:proof_lemma2}

\begin{proof}
    The fixed points of $s$ are the zeros of the continuous function $g(x) := s(x) - x$ such that $g(-1) \geq 0$ and $g(1) \leq 0$. By the Intermediate Value Theorem, $\mathrm{Fix}(s)$, is a non-empty closed subset of $[-1, 1]$.	If $\mathrm{Fix}(s)$ contains an interval, any point in its interior is trivially a stable fixed point. We therefore assume that $\mathrm{Fix}(s)$ consists only of isolated points. As a compact set, $\mathrm{Fix}(s)$ must be finite. Let the fixed points be ordered as $\mathrm{Fix}(s) = \{c_1, c_2, \dots, c_M\}$, where $-1 \leq c_1 < \dots < c_M \leq 1$. The following handle cases of signal functions similar to the one in Figure~\ref{fig:fs_dynamics_examples:b}.

	We proceed by contradiction, assuming that all fixed points $c_i$ are unstable. i.e., there exists an $x$ such that $(x - c_i) g(x) > 0$. Consider the smallest fixed point, $c_1$. On $[-1, c_1)$, if non-empty, $g(x)$ must be positive since $g(-1) \ge 0$. For any $x$ in this interval, $(x - c_1) g(x) < 0$, so the stability condition is satisfied to the left of $c_1$. For $c_1$ to be unstable, the violation must occur on its right, which requires $g(x) > 0$ for $x$ in some interval $(c_1, c_1+\delta)$. By continuity, this implies $g(x) > 0$ for all $x \in (c_1, c_2)$. By induction, $g(x) > 0$ on every interval $(c_i, c_{i+1})$ for $i=1, \dots, M-1$.

	Finally, we examine the largest fixed point, $c_M$. Using the induction result, $g(x) > 0$ on $(c_{M-1}, c_M)$, which implies that the stability condition $(x - c_M) g(x) \le 0$ holds on the left of $c_M$. Since $c_M$ is assumed to be unstable, there must be a violation to its right, in the interval $(c_M, 1]$. This requires $g(x) > 0$ for some $x \in (c_M, 1]$. If $c_M=1$, the interval $(c_M, 1]$ is empty, so no violation can occur, contradicting our assumption. If $c_M < 1$, by continuity, $g(1) = \lim_{x \to 1^-} g(x) \ge 0$. But we know $g(1) \le 0$. Thus, $g(1)=0$, which means that $1$ is a fixed point, contradicting $c_M$ being the largest fixed point.

	In all cases, the assumption that all fixed points are unstable is absurd. Therefore, at least one stable fixed point exists.
\end{proof}

\subsection{Proof of Proposition~\ref{prop:non_sync_equilibrium}-\ref{prop:non_sync_equilibrium_2}}\label{app:proof_prop5}

\begin{proof}
    \ref{prop:non_sync_equilibrium_2} Assume $K \lambda_{N-1} > 1$. Under Assumption~\ref{ass:graph}, one has $\lambda_{N-1} < 1$, which implies $K > 1/\lambda_{N-1} > 1$. We proceed by contradiction. Assume that for \emph{any} signal function $s$ satisfying Assumption~\ref{ass:signal} with Lipschitz constant $K$, all equilibria of \eqref{eq:dynamic} are FSE.

    We construct a specific signal function $s$ that satisfies these conditions yet leads to a contradiction. Let $s:[-1,1] \to [-1,1]$ be a smooth, strictly increasing, and odd function such that $s'(0) = K$. An example of such a function can be constructed based on $s(x) = \tanh(K x)$ illustrated in Figure~\ref{fig:fs_dynamics_examples:a}. Since $s$ is odd and strictly increasing with $s'(0) = K > 1$, its fixed points are $\mathrm{Fix}(s) = \{-c, 0, c\}$ for some $c \in (0, 1]$. By Theorem~\ref{thm:local_stability_of_synchronization_equilibria}, the FSE $\bm{x}^*_{-c} = -c\one$ and $\bm{x}^*_{c} = c\one$ are locally asymptotically stable, while the FSE $\bm{0}$ is unstable.

    Since $s$ is strictly increasing and smooth, by Lemma~\ref{lemma:SOP_flow}, every trajectory must converge to one of the three FSE: $-c\one$, $\bm{0}$, or $c\one$. This implies a partition of the state space $\Xcal$ into the basins of attraction of these three equilibria:
    \begin{equation*}
        \Xcal = \mathcal{B}(-c\one) \cup \mathcal{B}(\bm{0}) \cup \mathcal{B}(c\one).
    \end{equation*}
    The basins $\mathcal{B}(-c\one)$ and $\mathcal{B}(c\one)$ are non-empty, disjoint open sets. As the vector field $\bm{f}(\bm{x}) = \bm{D}^{-1}\bm{A}\bm{s}(\bm{x}) - \bm{x}$ is odd due to the oddness of $s$, the flow is symmetric with respect to the origin, implying $\mathcal{B}(-c\one) = -\mathcal{B}(c\one)$, as observed on the restriction to $\Scal$ in Figure~\ref{fig:fs_dynamics_examples:a}. Consequently, their common boundary, the \emph{separatrix}, is a single set $\partial \mathcal{B}(c\one) = \partial \mathcal{B}(-c\one)$.
    
    We now apply \cite[Theorem~4.8]{chiangStabilityRegionsNonlinear2015d} to characterize this separatrix. We must verify its assumptions:
    \begin{itemize}
        \item[(A1)] Hyperbolicity: The equilibrium points on the boundary $\partial \mathcal{B}(c\one)$ must be hyperbolic.
        \item[(A2)] Transversality: The stable and unstable manifolds of equilibrium points on the stability boundary satisfy the transversality condition.
        \item[(A3)] Boundary Convergence: Every trajectory on $\partial \mathcal{B}(c\one)$ must approach one of the equilibrium points on the boundary.
    \end{itemize}
    Assumption (A2) can be satisfied by a smooth choice of $s$, as $s(x) = \tanh(K x)$. To verify (A1) and (A3), we first identify the equilibria on the boundary. Let $\bm{y} \in \partial \mathcal{B}(c\one)$. Since the boundary of an invariant set is itself invariant, the trajectory $\bm{y}(t)$ remains in $\partial \mathcal{B}(c\one)$. As all trajectories converge to an equilibrium, $\bm{y}(t)$ must converge to one of $\{-c\one, \bm{0}, c\one\}$. It cannot converge to $c\one$ (as $\bm{y}$ would be in the interior of the basin) or $-c\one$ (as the basins are disjoint). Thus, any trajectory on the boundary must converge to $\bm{0}$. This implies that $\bm{0}$ is the only equilibrium point on the boundary $\partial \mathcal{B}(c\one)$, satisfying (A3). The Jacobian at the origin is $\bm{J}(\bm{0}) = s'(0) \bm{D}^{-1}\bm{A} - \bm{I} = K \bm{D}^{-1}\bm{A} - \bm{I}$. Its eigenvalues are $K\lambda_i - 1$, which are non-zero for a generic choice of $K$, satisfying (A1).
    
    With the assumptions verified, \cite[Theorem~4.8~(b)]{chiangStabilityRegionsNonlinear2015d} yields:
        $\partial \mathcal{B}(c\one) = W^s(\bm{0})$,
    where $W^s(\bm{0})$ is the stable manifold of $\bm{0}$.
    
    This leads to a topological contradiction based on dimensionality. The set $\partial \mathcal{B}(c\one)$ is the boundary separating two open sets in the $N$-dimensional space $\Xcal$, and thus it must be an $(N-1)$-dimensional manifold. However, the dimension of the stable manifold $W^s(\bm{0})$ is equal to the number of eigenvalues of $\bm{J}(\bm{0})$ with negative real parts. The eigenvalues are $\nu_i = K\lambda_i - 1$ for $i=1,\dots,N$.
    By assumption, $K \lambda_{N-1} > 1$. We also know $\lambda_N = 1$, and since $K > 1/\lambda_{N-1} > 1$, we have:
    \begin{itemize}
        \item $\nu_N = K\lambda_N - 1 = K - 1 > 0$.
        \item $\nu_{N-1} = K\lambda_{N-1} - 1 > 0$.
    \end{itemize}
    Thus, at least two eigenvalues of $\bm{J}(\bm{0})$ have positive real parts. The dimension of the unstable manifold, $\dim(W^u(\bm{0}))$, is therefore at least 2. Consequently, the dimension of the stable manifold is    $\dim(W^s(\bm{0})) = N - \dim(W^u(\bm{0})) \leq N-2$.
    
    We then have a contradiction: a manifold of dimension at most $N-2$ cannot separate an $N$-dimensional space. i.e.,
        $N-1 = \dim(\partial \mathcal{B}(c\one)) = \dim(W^s(\bm{0})) \leq N-2$.
    This contradiction implies that our initial assumption was false. Therefore, for our constructed function $s$, there must exist at least one equilibrium that is not an FSE. This completes the proof.
\end{proof}
\vspace{-0.33cm}
\bibliographystyle{IEEEtran}
\bibliography{COSA}
\vspace{-0.6cm}
\begin{IEEEbiography}   [{\includegraphics[width=1in,height=1.25in,clip,keepaspectratio]{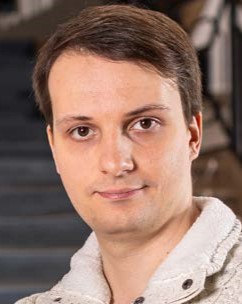}}]{Anthony Couthures} received the B.S. degree in fundamental mathematics from the University of Sciences and Technologies, Bordeaux, France, in 2017, the M.S. degree in Research Mathematics on Analysis, Modeling, Simulation in 2021, and the M.S degree in Optimization in 2022 from the University of Paris-Saclay, Orsay, France. 
    
He is currently a Ph.D. student at Université de Lorraine in CRAN, CNRS in Nancy, France. His research interests include modeling and analysis of nonlinear multi-agent systems coupled with the environment using game theory, control, and optimization.
\end{IEEEbiography}

\begin{IEEEbiography}
[{\includegraphics[width=1in,height=1.25in,clip,keepaspectratio]{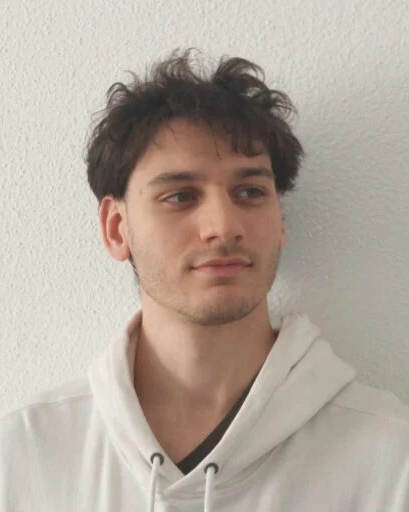}}]{Gustave Bainier} received an Engineering Master’s degree from École Centrale Nantes, France, in 2020, and a Ph.D. degree in Automatic Control from Université de Lorraine, France, in 2024. His doctoral research, carried out at CRAN, CNRS in Nancy, France, under the supervision of J.-C. Ponsart and B. Marx, focused on theoretical developments in the Linear Parameter Varying (LPV) and Takagi–Sugeno (TS) control frameworks, with applications to fault diagnosis. He is currently a Post-Doctoral Researcher with the Neuroengineering Laboratory at University of Liège, Belgium, where his research explores applications of control theory to brain-inspired computing. His research interests include LPV and TS systems, stability analysis, fault diagnosis, and neuroengineering.
\end{IEEEbiography}
\begin{IEEEbiography}    [{\includegraphics[width=1in,height=1.25in,clip,keepaspectratio]{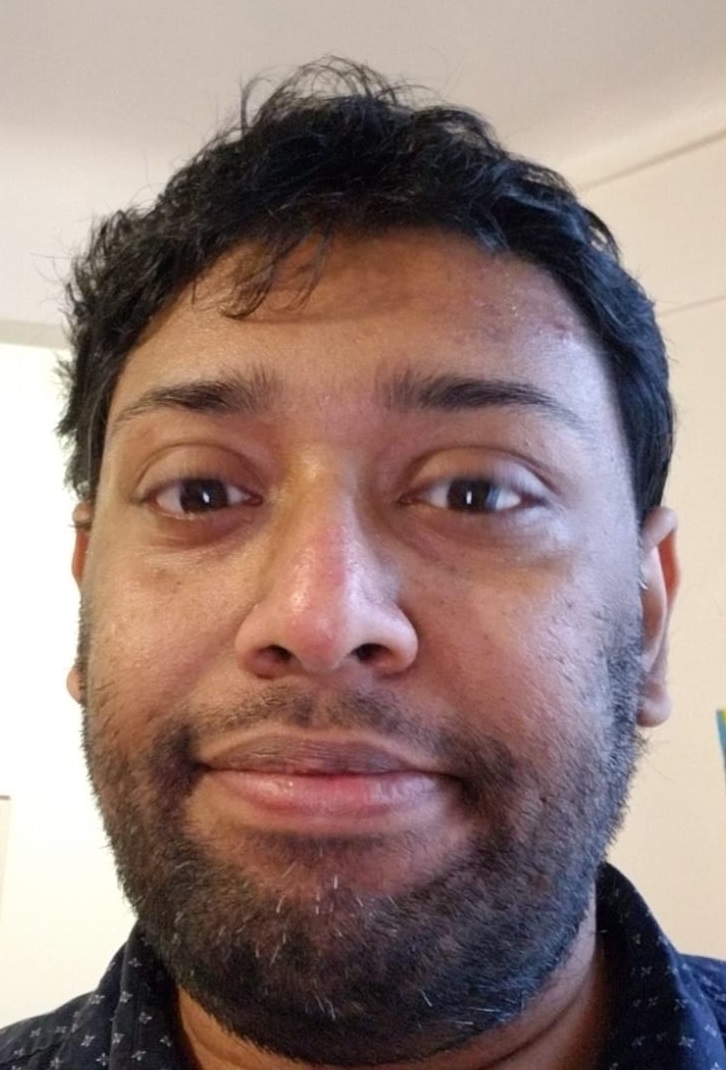}}]{Vineeth S. Varma} obtained his Bachelor’s in Physics with Honors from Chennai Mathematical Institute, India in 2008, his dual Masters in Science and Technology from Friedrich-Schiller-University of Jena in 2009 and Warsaw University of Technology in 2010. He was awarded his PhD from LSS/Paris-Saclay on energy-efficient wireless telecommunications. He did one year of post-doctoral research at Singapore University of Technology and Design from 2014-2015. Since 2016, he has been a CNRS researcher with CRAN, in Nancy, France, and got his HDR in 2024. His areas of interest are analysis, control, and games over various networks.\end{IEEEbiography}

\begin{IEEEbiography}    [{\includegraphics[width=1in,height=1.25in,clip,keepaspectratio]{{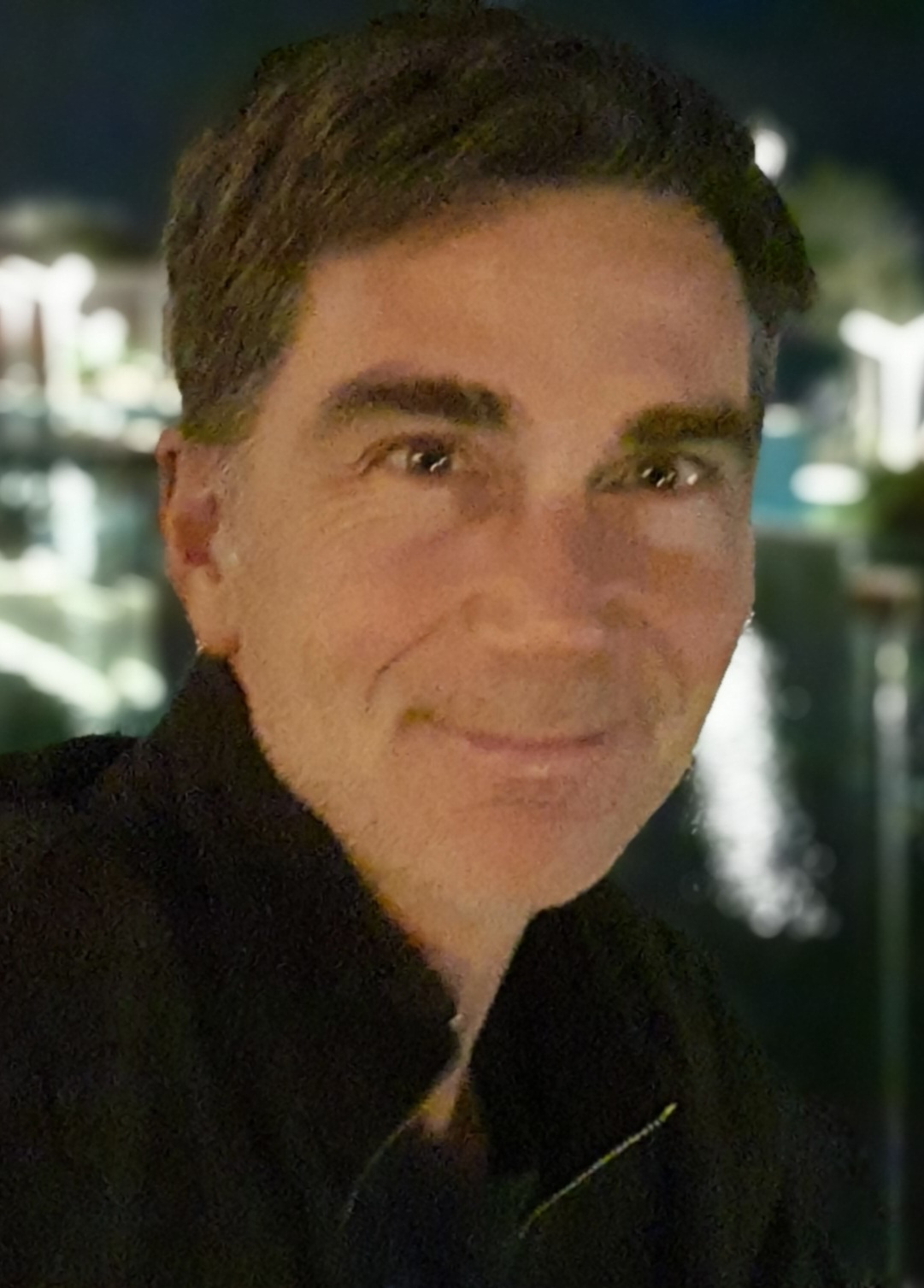}}}]{Samson Lasaulce} is a CNRS Research Director with CRAN at Nancy. From 2023 to 2025, he has been a Chief Research Scientist with Khalifa University (Abu Dhabi), where he has been the holder of the TII 6G Chair on Native AI. From 2016 to 2019, he was the holder of the RTE CentraleSupelec Chair on the "Digital Transformation of Electricity Networks". From 2002 to 2014, he was also a part-time Professor with the Department of Physics at École Polytechnique (France). Before joining CNRS he had been working for five years in private R\&D research centers (Motorola Labs and Orange Labs). His current research interests lie in distributed networks with a focus on optimization, machine learning, game theory, and optimal control. 
Dr Lasaulce has been serving as an editor for several international journals such as the IEEE Transactions. 
He is the co-author of more than 220 publications, including 15 patents and 5 books. 
Dr Lasaulce is also the recipient of several awards from the IEEE society and the Blondel Medal award from the SEE French society.\end{IEEEbiography}

\begin{IEEEbiography}    [{\includegraphics[width=1in,height=1.25in,clip,keepaspectratio]{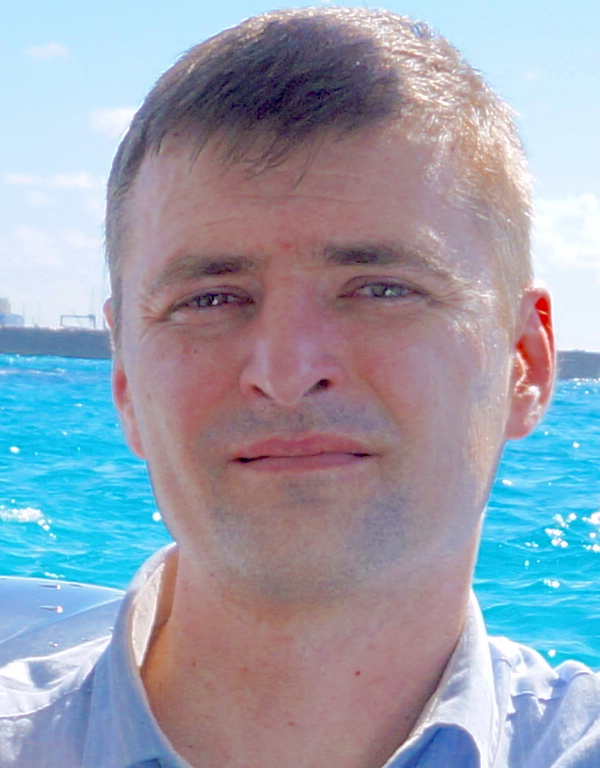}}]{Irinel-Constantin Mor\u{a}rescu} is currently Full Professor at Universit\'e de Lorraine and researcher with CRAN, CNRS in Nancy, France. He holds a Ph.D. in Mathematics and a Ph.D. in Technology of Information and Systems (received in 2006 from the University of Bucharest and the University of Technology of Compiègne, respectively), and received his "Habilitation à Diriger des Recherches" from the Université de Lorraine in 2016.  His works mainly focus on stability and tracking for different classes of hybrid systems, consensus and synchronization problems. He currently serves as a Senior Editor for IEEE Control Systems Letters and he served as Associate Editor for Nonlinear Analysis: Hybrid Systems, IEEE Control Systems Letters, and IMA Journal of Mathematical Control and Information. He was a member of the CSS Conference Editorial Board, and he is a member of the IFAC Technical Committee on Networked Systems.
\end{IEEEbiography}
\addtolength{\textheight}{-12cm}

\end{document}